\keywords{Markov population processes \and formal verification}
\pgfplotsset{width=10cm,compat=1.9}
\renewcommand{\vec}[1]{\boldsymbol{#1}}
\newcommand{\supp}[1]{\mathrm{supp}(#1)}
\newcommand{\norm}[2]{\left\lVert #2 \right\rVert_{#1}}
\newcommand{\restr}[2]{\left.#1\right|_{#2}}
\newcommand{\wit}[2]{\mathbf{Wit}(#1,#2)}
\DeclareMathOperator{\Ex}{\mathbb{E}}
\newcommand{\lexleq}{\leq_{\mathrm{lex}}}
\newcommand{\adj}{\mathrm{adj}}
\newcommand\xqed[1]{%
  \leavevmode\unskip\penalty9999 \hbox{}\nobreak\hfill
  \quad\hbox{#1}}
\newcommand\eoe{\xqed{$\triangle$}}
\DeclareRobustCommand{\revone}[2]{#2}
\DeclareRobustCommand{\revtwo}[2]{#2}
\begin{document}

\title{Algorithms for Markov Binomial Chains}
\thanks{This work was supported by the Flemish inter-university (iBOF) ``DESCARTES'' project.}
\author[A. Alarc\'on Gonzalez]{Alejandro Alarc\'on Gonzalez\lmcsorcid{0000-0002-1959-3307}}
\author[N. Hens]{Niel Hens\lmcsorcid{0000-0003-1881-0637}}
\author[T. Leys]{Tim Leys\lmcsorcid{0000-0001-6281-9693}}
\author[G. A. P\'erez]{Guillermo A. P\'erez\lmcsorcid{0000-0002-1200-4952}}
\address{University of Antwerp, Belgium} \email{guillermo.perez@uantwerpen.be}

\begin{abstract}
    We study algorithms to analyze a particular class of Markov population processes that is often used in epidemiology. More specifically, Markov binomial chains are the model that arises from stochastic time-discretizations of classical compartmental models. In this work we formalize this class of Markov population processes and focus on the problem of computing the expected time to termination in a given such model. Our theoretical contributions include proving that Markov binomial chains whose flow of individuals through compartments is acyclic almost surely terminate. We give a PSPACE algorithm for the problem of approximating the time to termination and a direct algorithm for the exact problem in the Blum-Shub-Smale model of computation. Finally, we provide a natural encoding of Markov binomial chains into a common input language for probabilistic model checkers. We implemented the latter encoding and present some initial empirical results showcasing what formal methods can do for practicing epidemiologists.
\end{abstract}

\maketitle

\section{Introduction}
We study a class of discrete-time \emph{Markov population processes} that is often used to model epidemics. In general, \emph{Markov population models} refers to Markov models whose state space is a discrete partitioning of a population into \emph{colonies} or \emph{compartments}. In other words, their set of states $S$ is such that $S \subseteq \mathbb{N}^k$ for some $k > 0$. Such models arise in the theories of epidemics, population dynamics, 
rumors, systems biology, and queuing and chemical reaction networks (see, for instance,~\cite{Kingman_1969,DBLP:journals/ijfcs/HenzingerJW11,DBLP:journals/tcs/CairoliAdB23} and references therein).

Due to their various important applications, the literature around Markov population models covers interesting mathematical properties of various subclasses. Importantly, and perhaps because of their interest to the systems biology community, there is also a wealth of formal methods to analyze them (see, e.g., \cite{DBLP:journals/nla/DayarHSW11,DBLP:conf/hybrid/LapinMW11,DBLP:journals/iandc/BortolussiLN18,DBLP:conf/qest/BackenkohlerBW20,DBLP:journals/tcs/CairoliAdB23}). Importantly, encodings of Markov population models into related stochastic models and language formats accepted by formal-verification tools are known~\cite{DBLP:journals/ijfcs/HenzingerJW11}. \revone{R1.5}{In addition, techniques from fluid limits can be used to efficiently infer stochastic information of interest from some of the continuous-time classes of such models}~\cite{DBLP:conf/concur/BortolussiH12,DBLP:conf/asmta/Bortolussi10}.

Most of the literature concerning Markov population processes focuses on continuous-time variants of them. This is in contrast with the class of processes that we study. \emph{(Markov) Binomial chains}~\cite{bailey1975mathematical} are discrete-time Markov population processes whose transition probabilities are given by a product of probability mass functions of binomial distributions over possible individual transfers between compartments. \revone{R1.1}{While this class of model was initially conceived for simple scenarios with small populations, it has recently been used for more complex situations like the analysis of COVID-19 cases}~\cite{ABRAMS2021100449,prunas2022vaccination}. \revone{R1.1}{Recently, it has also been established that (for epidemiological tasks) discrete-time models are as general and as flexible as their continuous-time counterparts, yet they are simpler to parameterize on the basis of data and to implement computationally}~\cite{diekmann2021discrete}.

We are unaware of other studies of algorithms and complexity-theoretic results for binomial chains as defined in this work. Related discrete-time stochastic models that have been studied thoroughly include branching processes~\cite{DBLP:journals/siamcomp/EtessamiSY17,DBLP:journals/iandc/EtessamiSY18,DBLP:journals/mor/EtessamiSY20} and probabilistic vector addition systems~\cite{DBLP:conf/lics/BrazdilKKN15,DBLP:conf/concur/Ajdarow023}. Also worth mentioning is an algorithm by Black and Ross to compute the final (population-)distribution for another class of discrete-time Markov population protocols~\cite{BLACK2015159}.

\paragraph{Contributions.} In this work, we initiate the study of algorithms and formal methods for the analysis of binomial chains. We start by giving a self-contained account of how classical compartmental models give rise to binomial chains, in \autoref{sec:sir}. \revone{R1.1}{The derivation of a binomial chain from compartmental models in that section is meant as a way to motivate the class of models and to provide useful epidemiological context and intuition to the unfamiliar reader. Importantly, we do not mean for the translation to be used as a way to analyze compartmental models by approximating them with a binomial chain. As per the motivation above, our target use case is when the ground-truth model is already a binomial chain fitted to historical data by epidemiologists.} In  in \autoref{sec:definitions}, we then move to formalizing the general model of binomial chains.
After that, in \autoref{sec:time2term-acyclic}, we prove that binomial chains whose flow of individuals through compartments is acyclic almost surely terminate. Finally, in \autoref{sec:exhit-approx}, we give a \textbf{PSPACE} algorithm to approximate the time to termination in a given binomial chain and, in \autoref{sec:sir-algo}, we also give a direct algorithm for the exact problem (ignoring issues with irrational numbers and the complexity of arithmetic operations). To close the paper, we give an encoding of binomial chains into a common input language for probabilistic model checkers in \autoref{sec:encoding} and, in \autoref{sec:experiments}, we present some empirical results obtained with an implementation of this encoding.

\section{Preliminaries}
We write $\mathbb{N}$ for the set of all natural numbers, including $0$; $\mathbb{Q}$ and $\mathbb{Q}_{\geq 0}$, for the sets of all rational and nonnegative rational numbers, respectively; and $\mathbb{R}$ and $\mathbb{R}_{\geq 0}$, for the sets of all real and nonnegative real numbers, respectively. Also, we use $\lg(x)$ to denote the logarithm of $x$ with respect to base $2$; and $\ln(x)$ for the natural logarithm of $x$, i.e. with respect to base $e$. Instead of $e^x$, we sometimes write $\exp(x)$. 

Let $k \in \mathbb{N}$ be such that $k \geq 1$. We write $[n]$ for the set $\{1, 2, \dots, n\}$.
For vectors $\vec{u},\vec{v} \in \mathbb{Q}^{k}$ with rational entries, we write $\vec{u} \leq \vec{v}$ to denote that $u_i \leq v_i$ for all $1 \leq i \leq k$. Let $\ell \in \mathbb{N}$ be such that $\ell \geq 1$ and $\vec{M} \in \mathbb{Q}^{k \times \ell}$ be a matrix with rational entries. We write $\supp{\vec{M}}$ to denote the support of $\vec{M}$, that is, the set of indices such that the corresponding entry of $\vec{M}$ is nonzero. In symbols, 
\[
\supp{\vec{M}} = \{(i,j) \in [k] \times [\ell] : M_{ij} \neq 0\}.
\]
We extend the notion of support to vectors $\vec{u}$ in the natural way and write $i \in \supp{\vec{u}}$ instead of $(i,1) \in \supp{\vec{u}}$.

It will also be convenient to have notation for a class of linear transformations from vectors of natural numbers to nonnegative scalars. Let $k \in \mathbb{N}$ be such that $k \geq 1$. We write $\mathbb{L}_k$ for the set of all functions $f : \mathbb{N}^k \to \mathbb{Q}_{\geq 0}$ such that there exists a vector $\vec{a} \in \mathbb{Q}^k_{\geq 0}$ and a scalar $b \in \mathbb{Q}_{\geq 0}$ for which $f(\vec{x}) = \vec{a}^\intercal \vec{x} + b$.

Finally, to make sure our notation for them is clear, we give a definition for Markov chains. A discrete-time Markov chain is a tuple $(S,s_0,P)$ where $S$ is a countable set of states, $s_0 \in S$ is the initial state, and $P : S \times S \to \mathbb{R}_{\geq 0}$ is a transition probability function, that is, for all $s \in S$ we have that $\sum_{s' \in S} P(s,s') = 1$. We write $X_t$, where $t \in \mathbb{N}$, for the random variable representing the state of the chain at the $(t+1)$-th step and $\Pr(X_t = s)$ for the probability measure of \emph{runs} (i.e. sequences of transitions) of the chain with $s$ as their $(t+1)$-th state. In a slight abuse of notation, whenever $S$ is finite, we sometimes write $\vec{P}$ for the matrix with entries $P_{ij} = P(i,j)$, for all $i,j \in S$.

\section{SIR Models: From ODEs to a bimonial chain}\label{sec:sir}

Deterministic SIR Models are simple mathematical models of the spreading of infectious
diseases. In them, a \emph{population} of size $N \in \mathbb{N}$ is partitioned into
\emph{compartments} with labels: $S$ for susceptible, $I$ for infectious, and
$R$ for recovered. As illustrated in \autoref{fig:flow_diagram}, people may move between compartments following time-dependent
dynamics which are usually prescribed by ordinary differential equations (ODEs).
\begin{align*}
  \dfrac{dS(t)}{dt} &{}= - \beta I(t) S(t)\\
  \dfrac{dI(t)}{dt} &{}= \beta I(t) S(t) - \gamma I(t)\\
  \dfrac{dR(t)}{dt} &{}= \gamma I(t)
\end{align*}
Hence, we
write $S(t)$, $I(t)$, and $R(t)$ to highlight the fact that these values are
functions of time $t \in \mathbb{R}_{\geq 0}$. Below, we first state the deterministic SIR
model for the case of a closed population~\cite{ModellingInfectiousDiseases}, i.e., births, deaths or infections resulting from contacts with individuals from outside the population are not being considered. (The values $\beta$ and $\gamma$ are
explained in the sequel.) Then, we derive a stochastic discrete-time version
thereof, based on Bailey's chain binomial~\cite{bailey1975mathematical}. 

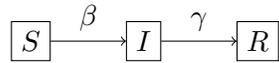
\begin{figure}
  \centering
  \begin{tikzpicture}
    \node[rectangle,draw](s){$S$};
    \node[rectangle,draw,right= of s](i){$I$};
    \node[rectangle,draw,right= of i](r){$R$};
    \path[->,auto]
      (s) edge node{$\beta$} (i)
      (i) edge node{$\gamma$} (r);
  \end{tikzpicture}
\caption{Overview of the flow of individuals in the SIR model:
Following a disease infection, susceptible individuals ($S$) move to an
infectious state ($I$) in which they can infect others. Such infectious
individuals will recover over time.}
\label{fig:flow_diagram}
\end{figure}

\subsection{Towards a stochastic SIR model}
When an infectious individual makes contact with a susceptible individual,
there is some probability that such contact will lead to
disease transmission. This probability, multiplied by the \emph{contact rate}, is denoted by $\beta^*$, and we take it
to be irrespective of the specific susceptible-infectious pair. Furthermore,
we define the \textit{force of infection}, denoted by $\Lambda(t)$, as the
rate for a susceptible individual to become infected
at time t. Assuming \textit{homogeneous mixing} within the population yields the
relation
\begin{equation}
\label{eq:lawofmassaction}
  \Lambda(t)=\dfrac{I(t)}{N}\beta^*.
\end{equation}
To simplify notation, we let $\beta=\beta^*/N$. The formulation of the force of infection in \eqref{eq:lawofmassaction} is referred to as \textit{mass action transmission}~\cite{ModellingInfectiousDiseases}. The exposition above is extended to all susceptible individuals, leading to the continuous-time relation:
\begin{equation}
\label{eq:diffeq}
\dfrac{dS(t)}{dt}=-\Lambda(t)S(t) = -\beta I(t) S(t).
\end{equation}

We start the discretization by fixing $h>0$. By integrating
\eqref{eq:diffeq} over the time interval $(t,t+h]$, the following recurrence relation
is deduced.
\begin{equation} 
\label{eq:expectedinS}
        S(t+h)=e^{-\int_{t}^{t+h} \Lambda(\tau)d\tau}S(t) 
\end{equation}

Equation \eqref{eq:expectedinS} is now interpreted as the expected number of
susceptible individuals at time $t+h$, assuming there are $S(t)$ susceptible individuals
at time $t$. We observe in turn that the first factor on the right hand side of
\eqref{eq:expectedinS} is the probability for a susceptible to escape from
infection during the time interval $(t,t+h]$. Therefore,
$1-\exp(-\int_{t}^{t+h} \Lambda(\tau)d\tau)$ is the probability for a
susceptible person to become infected during the time interval $(t,t+h]$. 

The following integral becomes the \emph{cumulative force of infection} over $(t,t+h]$.
\begin{equation}
\label{eq:cumulativeforce}
  \int_{t}^{t+h} \Lambda(\tau)d\tau  
\end{equation}
By taking the
Taylor expansion of \eqref{eq:cumulativeforce} around $t$ and considering
expressions \eqref{eq:lawofmassaction}, \eqref{eq:expectedinS} together, the
following holds.
\begin{align*} 
  S(t+h)&{}=e^{-\int_{t}^{t+h} \beta I(\tau)d\tau}S(t)\\
        &{}=e^{-\beta(hI(t)+O( h^2))}S(t)\\
        &{}\approx e^{-h\beta I(t)}S(t),\text{ for $h$ small}   
\end{align*}

The probability $p_1(t)=1-e^{-h\beta I(t)}$ will now be regarded as the
success probability for the Bernoulli trial corresponding to an interaction
between an infectious and a susceptible individual, and the interaction occurs
within $(t,t+h]$. By recalling the assumption of homogeneous mixing, the
previous discussion suggests we define a random variable $I_{t+h}^{new}$ that
follows a binomial distribution and which represents the newly
infected individuals.
\begin{equation}
\label{eq:CBI}
I_{t+h}^{new} \sim B(S(t), p_1(t)=1-\exp(-h\beta I(t)))
\end{equation}

To finish this modeling part, the infectious period $\Delta(t)$ is assumed to
be exponentially distributed with parameter $\gamma \in \mathbb{R}_{\geq 0}$,
irrespective of the individual. Accordingly, the cumulative density function
$p_2=1-\exp(-h\gamma)$ will represent the probability for an infectious
individual to have recovered by time $h>0$. This suggests that the
corresponding number of newly recovered individuals $R_{t+h}^{new}$ can
be defined as a random variable with binomial distribution as follows.
\begin{equation} 
    \label{eq:CBR}
    R_{t+h}^{new} \sim B(I(t), p_2=1-\exp(-h\gamma))
\end{equation}

\subsection{The SIR process}
Let $S(0)$, $I(0)$, and $R(0)$ be fixed constants such that $N = S(0) + I(0) +
R(0)$ --- these are just the initial conditions for the ODE version of an SIR
model --- and $h > 0$. For all $t \in \mathbb{N}$, we define discrete random
variables $S_t$, $I_t$, and $R_t$, all of which take values from $\mathbb{N}$.
In particular, let $S_0 = S(0)$, $I_0 = I(0)$, and $R_0 = R(0)$. For $t \geq
0$ we base the following definition on Equations~\eqref{eq:CBI}
and~\eqref{eq:CBR}:
\begin{align}\label{eq:difference_equations}
    S_{t+1} = {} & S_{t} - Y_{t}\\
    I_{t+1} = {} & I_{t} + Y_{t} - Z_{t}\\
    R_{t+1} = {} & R_{t} + Z_{t} 
\end{align}
with $Y_{t} \sim B(S_t, 1-\exp(-h\beta I_t))$ and $Z_{t} \sim B(I_t, 1-\exp(-h\gamma))$.

By definition, we have the property of \emph{conservation of population}.
\begin{lem}\label{lem:conservation}
  For all $t \in \mathbb{N}$ we have that $N = S_t + I_t + R_t$.
\end{lem}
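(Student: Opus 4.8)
The plan is to prove the identity by induction on $t$, exploiting the telescoping structure of the difference equations in \eqref{eq:difference_equations}. For the base case $t = 0$, the claim is immediate: by definition $S_0 = S(0)$, $I_0 = I(0)$, $R_0 = R(0)$, and these were chosen precisely so that $N = S(0) + I(0) + R(0)$.

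For the inductive step, I would assume $N = S_t + I_t + R_t$ and compute directly from the update rules:
\[
S_{t+1} + I_{t+1} + R_{t+1}
  = (S_t - Y_t) + (I_t + Y_t - Z_t) + (R_t + Z_t)
  = S_t + I_t + R_t,
\]
since the two occurrences of $Y_t$ cancel and likewise for $Z_t$. Applying the induction hypothesis then yields $S_{t+1} + I_{t+1} + R_{t+1} = N$, completing the induction.

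There is essentially no hard part here; the only thing worth making explicit is that the random variables $S_{t+1}, I_{t+1}, R_{t+1}$ are genuinely $\mathbb{N}$-valued, so that the arithmetic above is exact and no truncation to $0$ intervenes. This follows from the ranges of the binomial variables: $Y_t \sim B(S_t, \cdot)$ forces $0 \le Y_t \le S_t$, so $S_{t+1} \ge 0$; and $Z_t \sim B(I_t, \cdot)$ forces $0 \le Z_t \le I_t \le I_t + Y_t$, so $I_{t+1} \ge 0$ as well (and $R_{t+1} \ge R_t \ge 0$ trivially). This is consistent with the SIR model being a closed BC in the sense of the earlier lemma, where the weak and strong non-negativity conditions coincide. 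With that observation in place the computation above is unconditional and the lemma follows.
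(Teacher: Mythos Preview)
Your proof is correct and is exactly the natural elaboration of what the paper leaves implicit: the paper states the lemma as holding ``by definition'' and gives no proof, so your induction on $t$ using the telescoping of $Y_t$ and $Z_t$ in \eqref{eq:difference_equations} is precisely the intended argument. Your additional remark that $S_{t+1}, I_{t+1}, R_{t+1}$ are genuinely $\mathbb{N}$-valued (so no truncation at $0$ occurs) is a worthwhile clarification that the paper does not spell out.
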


Let $t \in \mathbb{N}$ be arbitrary and write $(S_t,I_t,R_t) =
(m_1,m_2,m_3)$. We will focus on the probability mass function $p^{(t+1)}_{S}$
of $S_{t+1}$. \revtwo{R2.4}{The equations below follow from our definition.}
\begin{equation} \label{eq:Spmf}
\begin{aligned}
  p^{(t+1)}_S(n_1)={}&\Pr(S_{t+1}=n_1)\\
  ={} & \Pr(S_{t}-Y_{t}=n_1) \\
  ={} & \Pr(Y_{t}=m_1-n_1) \\
  ={} & \binom{m_1}{m_1-n_1}(\exp(-h \beta m_2))^{n_1}(1-\exp(-h \beta m_2)^{m_1-n_1}
\end{aligned}
\end{equation}
(Here, we again adopt the convention that $0^0 = 1$ so that $p^{(t+1)}_S(m_1)$ to be
$1$ when $m_2 = 0$.)
Similarly, for the probability mass function $p^{(t+1)}_{R}$ of $R_{t+1}$, we
get \revtwo{R2.4}{the following equations.}
\begin{equation} \label{eq:Rpmf} 
\begin{aligned}
  p^{(t+1)}_R(n_3)={}&\Pr(R_{t+1}=n_3)\\    
  ={} &\Pr(R_{t} + Z_{t}=n_3) \\
  ={} &\Pr(Z_{t}=n_3-m_3) \\
  ={} & \binom{m_2}{n_3-m_3}(1-\exp(-h\gamma))^{n_3-m_3}\exp(-h\gamma)^{m_2-n_3+m_3}      
\end{aligned}
\end{equation}
Observe, from \eqref{eq:Spmf} and \eqref{eq:Rpmf}, that $S_{t+1}$ and
$R_{t+1}$ are independent. \revtwo{R2.5}{That is, the events $S_{t+1}=n_1$ and $R_{t+1} = n_3$ are independent if we condition on $(S_t,I_t,R_t) = (m_1,m_2,m_3)$. Since $I_{t+1} = N - S_{t+1} - R_{t+1}$,} by
\autoref{lem:conservation}, the SIR process satisfies the Markov property. It remains to compute the exact probability of such a transition.

Recall that if $X$ and $Y$ are independent then:
\[\Pr(X= x,Y= y)=\Pr(X= x)\Pr(Y= y)=p_{X}(x)p_{Y}(y),\]
for $x,y$ in the sample
space. Thus, the joint probability mass function $p^{(t+1)}_{S,R}$ of
$(S_{t+1},R_{t+1})$ is given by $p^{(t+1)}_{S,R} = p^{(t+1)}_{S}
p^{(t+1)}_{R}$. According to \autoref{lem:conservation}, $I_{t+1} = N - S_{t+1} - R_{t+1}$, so the joint probability mass function
$p^{(t+1)}_{S,I,R}$ of $(S_{t+1},I_{t+1},R_{t+1})$ is also given by
$p^{(t+1)}_{S,I,R} = p^{(t+1)}_{S} p^{(t+1)}_{R}$.

\subsection{A discrete-time Markov chain induced by the binomial chain}
Henceforth, we write $X_t$ for $(S_t,I_t,R_t)$. For
all $t \in \mathbb{N}$, $X_t$ is a discrete random variable whose states are vectors
from $(n_1,n_2,n_3) \in \mathbb{N}^3$ such that $N = n_1 + n_2 + n_3$ and its probability mass function (as discussed above),
is given, for all $t \geq 1$, by:
\begin{equation} \label{eq:Xpmf}
\begin{aligned}
  p^{(t)}_{S,I,R}(n_1,n_2,n_3)
  ={}&
  \binom{m_1}{m_1-n_1}(\exp(-h \beta m_2))^{n_1}(1-\exp(-h \beta
  m_2)^{m_1-n_1} \\
  & \binom{m_2}{n_3-m_3}(1-\exp(-h\gamma))^{n_3-m_3}\exp(-h\gamma)^{m_2-n_3+m_3}
\end{aligned}
\end{equation}
where $X_{t-1} = (m_1,m_2,m_3)$.

We can now define the transition matrix of the Markov chain. First, note that in
Equation~\eqref{eq:Xpmf} the dependency on $t$ can be changed to a
dependency on $\vec{m} = (m_1,m_2,m_3)$. Now, for all states $\vec{m},\vec{n} =
(n_1,n_2,n_3)$, we define:
\begin{equation}
\label{eq:transitionmatrix}
  P_{\vec{m} \vec{n}} = \begin{cases}
    p_{S,I,R}^{(\vec{m})}(n_1,n_2,n_3) & \text{if } n_1 \leq m_1 \text{ and }
    m_3 \leq n_3 \leq m_2 + m_3\\
    0 & \text{otherwise.}
  \end{cases}  
\end{equation}
Then, the matrix $\vec{P}$
is clearly stochastic. 

To conclude this section we state the following property which follows
immediately from the definition of $\vec{P}$ in \autoref{eq:transitionmatrix}.
\begin{lem}\label{lem:succ}
  Let $\vec{m} = (m_1,m_2,m_3),\vec{n}=(n_1,n_2,n_3)$ be states. Then, $0 < \vec{P}(\vec{m},\vec{n})$ if
  and only if $n_1 \leq m_1$ and $m_3 \leq n_3 \leq m_2 + m_3$.
\end{lem}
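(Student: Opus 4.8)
The plan is to establish the two directions of the equivalence separately, reading everything off the definition of $\vec{P}$ in \autoref{eq:transitionmatrix} together with the closed form of the transition pmf in \autoref{eq:Xpmf}. The ``only if'' direction is immediate by contraposition: if $n_1 > m_1$, or $n_3 < m_3$, or $n_3 > m_2 + m_3$, then the pair $(\vec{m},\vec{n})$ is not covered by the first case of \autoref{eq:transitionmatrix}, so $\vec{P}(\vec{m},\vec{n}) = 0$ by definition, and nothing further is needed.

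For the ``if'' direction I would assume $n_1 \leq m_1$ and $m_3 \leq n_3 \leq m_2 + m_3$, so that $\vec{P}(\vec{m},\vec{n}) = p^{(\vec{m})}_{S,I,R}(n_1,n_2,n_3)$ by \autoref{eq:transitionmatrix}, and then argue that this quantity is strictly positive. By \autoref{eq:Xpmf} it factors, after matching exponents exactly as in the derivations of \autoref{eq:Spmf} and \autoref{eq:Rpmf}, as the product $B(m_1-n_1;\, m_1,\, 1-\exp(-h\beta m_2)) \cdot B(n_3-m_3;\, m_2,\, 1-\exp(-h\gamma))$, and I would verify positivity factor by factor. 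The binomial coefficients $\binom{m_1}{m_1-n_1}$ and $\binom{m_2}{n_3-m_3}$ are nonzero precisely because the two hypotheses say exactly that $0 \leq m_1-n_1 \leq m_1$ and $0 \leq n_3-m_3 \leq m_2$; the powers $\exp(-h\beta m_2)^{n_1}$ and $\exp(-h\gamma)^{m_2-n_3+m_3}$ are positive because the exponential is always positive (and $0^0 = 1$ by our convention); and the powers $(1-\exp(-h\beta m_2))^{m_1-n_1}$ and $(1-\exp(-h\gamma))^{n_3-m_3}$ have base strictly in $(0,1)$ as long as $m_2 > 0$, using that $h, \beta, \gamma > 0$.

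The one point that requires genuine care — and hence the main obstacle — is the degenerate case $m_2 = 0$, i.e.\ when no infectious individuals remain: the success probability $1-\exp(-h\beta m_2)$ is then $0$, so the first binomial factor equals $0^{m_1-n_1}$, which is positive only when $n_1 = m_1$. I would handle this case by hand, noting that $m_2 = 0$ forces $n_3 = m_3$ through the hypothesis $m_3 \leq n_3 \leq m_2+m_3$ and, since $N = m_1 + m_3$ in this case, that $\vec{n} = \vec{m}$ is the only successor with positive probability; in other words $(m_1,0,m_3)$ is a final state. Apart from this bookkeeping, every step is immediate from the definitions.
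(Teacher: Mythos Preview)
Your argument is more detailed than the paper's, which simply declares the lemma ``follows immediately from the definition of $\vec{P}$ in \autoref{eq:transitionmatrix}'' and gives no further justification. Your treatment of the ``only if'' direction and of the generic ``if'' case (with $m_2>0$) is correct and matches what the paper presumably has in mind.

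The genuine issue is your handling of the degenerate case $m_2=0$. You correctly observe that then $1-\exp(-h\beta m_2)=0$, so the first binomial factor vanishes unless $n_1=m_1$; and you correctly note that the hypothesis forces $n_3=m_3$. But these two observations do not rescue the lemma --- they refute it. Take any $\vec{m}=(m_1,0,m_3)$ with $m_1>0$ and set $\vec{n}=(m_1-1,1,m_3)$: this is a valid state (population is conserved), it satisfies $n_1\leq m_1$ and $m_3\leq n_3\leq m_2+m_3$, yet by your own computation $P(\vec{m},\vec{n})=0$. So the ``if'' direction is literally false when $m_2=0$ and $m_1>0$. Your phrase ``handle this case by hand'' suggests the proof still closes, but what you have actually produced is a counterexample. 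The clean fix is to add the hypothesis $m_2>0$ to the lemma (which is harmless for every use the paper makes of it, since states with $m_2=0$ are absorbing and treated separately), and you should say so explicitly rather than leave the reader with the impression that the edge case has been dispatched.
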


\section{Binomial chains}\label{sec:definitions}

Let $k \in \mathbb{N}$ such that $k \geq 1$. 
A \emph{binomial chain} (BC) is essentially a Markov chain $(S, s_0, P)$ such that $S \subseteq \mathbb{N}^k$ and whose transitions correspond to \emph{transfers} of individuals between \emph{compartments} modeled by the components of the state vectors. The probability of each individual transfer is based on a binomial distribution whose success probability is a function of the current state.

More formally, a BC $\mathcal{B}$ is a tuple $(\vec{v},\vec{T})$ where:
\begin{itemize}
    \item $\vec{v} \in \mathbb{N}^k$ is the initial state and
    \item $\vec{T} \in \mathbb{L}_k^{k \times k}$ is the \emph{transfer matrix}.
\end{itemize}
\noindent 
For intuition, $\vec{T}$ can be thought of as the adjacency matrix (if we care only about whether a function entry is the zero function or not) of a directed graph with $[k]$ as its vertices. Edges of said graph represent possible transfers of individuals from the source compartment of the edge to the target compartment.

\begin{exa}    
The transfer matrix $\vec{T}$ of the SIR binomial chain from \autoref{sec:sir} is given below.
\begin{equation}\label{eqn:t-sir}
    \vec{T} = \begin{pmatrix}
        0 & \vec{m} \mapsto h\beta m_2 & 0\\
        0 & 0 & \vec{m} \mapsto h\gamma\\
        0 & 0 & 0
    \end{pmatrix}
\end{equation}
It can be inferred from \autoref{fig:flow_diagram} and \autoref{eq:Xpmf}. \revone{R1.4}{(A formal description of how this is done follows, here we are just interested in conveying the intuition of the model.) Notice that the entries in the matrix correspond to the arguments of the exponentials in the expression that gives the transition probability.}\eoe
\end{exa}

Let us formalize the idea of making $\vec{T}$ into a (Boolean) adjacency matrix since the notation will be useful later. First, we extend the notion of matrix support to matrices with entries from $\mathbb{L}_k$. For $f \in \mathbb{L}_k$, we write $f \not\equiv 0$ to denote the fact that $f(\vec{x}) \neq 0$ for some $\vec{x} \in \mathbb{N}^k$. Then, the \emph{support} of a matrix $\vec{M} \in \mathbb{L}_k^{\ell \times m}$ is defined as follows.
\[
    \supp{\vec{M}} = \{(i,j) \in [\ell] \times [m] : M_{ij} \not \equiv 0 \}
\]
The adjacency matrix is $\restr{\vec{T}}{\mathbb{B}}$ where $(\restr{\vec{T}}{\mathbb{B}})_{ij} = 1$ if and only if $(i,j) \in \supp{\vec{T}}$.

\begin{exa}    
The adjacency matrix $\restr{\vec{T}}{\mathbb{B}}$ of the SIR binomial chain from \autoref{sec:sir} is given below.
\begin{equation}\label{eqn:tb-sir}
    \restr{\vec{T}}{\mathbb{B}} = \begin{pmatrix}
        0 & 1 & 0\\
        0 & 0 & 1\\
        0 & 0 & 0
    \end{pmatrix}
\end{equation}
This one can be obtained directly from \autoref{fig:flow_diagram}. \revone{R1.4}{Indeed, it is nothing more than the adjacency matrix of that graph showing what transfers ae possible between compartments.} \eoe
\end{exa}

The semantics of the BC $\mathcal{B}$ is given via the \emph{induced Markov chain} $\mathcal{C}_{\mathcal{B}} = (S,s_0,P)$ where $S = \mathbb{N}^{k}$ and $s_0 = \vec{v}$. For the transition probability function $P$, we have that $P(\vec{u},\vec{w}) > 0$ only if there exists $\vec{M} \in \mathbb{N}^{k \times k}$ such that:
\begin{equation}\label{eqn:constr-M}
\begin{aligned}
    \supp{\vec{M}} & {} \subseteq \supp{\vec{T}} & \text{(only allowed transfers)}\\
    \min_{i \in [k]} \max_{j \in [k]} M_{ij} & {} \leq u_i & \text{(valid binomial outcomes)}
\end{aligned}
\end{equation}
and $\vec{w}$ is the pointwise maximum of $\vec{0}$ and:
\[
    \vec{u} + \overbrace{(\vec{1}^\intercal \vec{M})^\intercal}^{\text{incoming transfers}} - \underbrace{ \vec{M}\vec{1}}_{\text{outgoing transfers}} 
\]
where $\vec{1}$ is a vector of all ones.
In other words, $\vec{w}$ is such that, for all $j \in [k]$ we have:
\begin{equation}\label{eqn:right-sum}
    w_j = \max\left(0, u_j + \sum_{i \in [k]} M_{ij} - \sum_{\ell \in [k]} M_{j\ell}\right).
\end{equation}

\begin{exa}\label{exa:sir-trans}
Let us turn once more to the SIR binomial chain from \autoref{sec:sir}. Recall that $k = 3$. Further consider states $\vec{u} = (10, 3, 2)^\intercal$ and $\vec{w} = (8,2,5)^\intercal$. We use matrices $\vec{M} \in \mathbb{N}^{k \times k}$ to 
encode information about transfers between compartments. To move from state $\vec{u}$ to $\vec{w}$ by transferring individuals along edges from \autoref{fig:flow_diagram}, thus nonzero entries from Equation \eqref{eqn:tb-sir}, we can use the following matrix.
\[
    \vec{M} = \begin{pmatrix}
        0 & 2 & 0\\
        0 & 0 & 3\\
        0 & 0 & 0
    \end{pmatrix}
\]
\revone{R1.4}{Moving $2$ susceptible individuals to the infectious compartment, and $3$ infectious ones to the recovered compartment, yields $\vec{w}$ if we start from $\vec{u}$. In symbols, this is just $\vec{w} = \vec{u} + (\vec{1}^\intercal\vec{M})^\intercal - \vec{M}\vec{1}$. In this example, it turns out that $\vec{M}$ is the unique matrix that satisfies that equation and the constraints imposed above. Furthermore, the pointwise maximum with $\vec{0}$ is unnecessary. 
However, uniqueness and positivity of $\vec{u} + (\vec{1}^\intercal\vec{M})^\intercal - \vec{M}\vec{1}$ are not guaranteed in general.}\eoe
\end{exa}

It is easy to see that, for any given $\vec{u}$, the set of $\vec{M}$ that
satisfy the constraints from \autoref{eqn:constr-M} is finite. We write
$\wit{\vec{u}}{\vec{w}}$ to denote the set of all matrices satisfying
\autoref{eqn:constr-M} and \autoref{eqn:right-sum}. Finally, the transition probability
$P(\vec{u},\vec{w})$ is defined as follows:
\begin{equation}\label{eqn:prob-formula}
    \sum_{\vec{M} \in \wit{\vec{u}}{\vec{w}}} 
    \prod_{(i,j) \in \supp{\vec{T}}} B(M_{ij}; u_i, 1 - \exp(-T_{ij}(\vec{u})))
\end{equation}
where $B(m;n,p)$ stands for the probability mass function of the binomial distribution:
\[
    B(m;n,p) = \binom{n}{m} p^m (1-p)^{n-m}
\]
with the convention that $0^0 = 1$.

\begin{exa}
    We continue with the situation from \autoref{exa:sir-trans}. \revone{R1.4}{Recall that we argued $\vec{M}$ was the unique matrix satisfying the imposed constraints. That is, $\{\vec{M}\} = \wit{\vec{u}}{\vec{v}}$.} Now, from \autoref{eq:Xpmf} we know that:
    \[
    \begin{aligned}
    P(\vec{u},\vec{w}) = {} &
    \binom{10}{2}(\exp(-h \beta 3))^{8}(1-\exp(-h \beta
  3)^{2} \\
  &{} \binom{3}{3}(1-\exp(-h\gamma))^{3}\exp(-h\gamma)^{0}
    \end{aligned}
    \]
    and this coincides with what we obtain from Equation \eqref{eqn:prob-formula} using Equation \eqref{eqn:t-sir} for the transfer matrix. Namely, we get:
    \[
    P(\vec{u},\vec{w}) = B(2; 10, 1-\exp(h\beta 3)) B(3; 3, 1-\exp(h\gamma))
    \]
    as expected.\eoe
\end{exa}

\begin{rem}[Negative populations]\label{rem:neg-pop}
    Note that, in general, the second condition from \autoref{eqn:constr-M} on matrices $\vec{M} \in \wit{\vec{u}}{\vec{w}}$ is necessary to avoid negative populations, but it is not sufficient.
    \revtwo{R2.2}{This is why the pointwise maximum with $\vec{0}$, from} \autoref{eqn:right-sum}, is needed.
    A natural stronger condition would be to ask that
    $\vec{M}\vec{1} \leq \vec{u}$. In order to keep the model as simple as possible and to avoid (further) complicating the transition probability expressions, the simpler (insufficient) condition is often preferred (see, e.g.,~\cite{ABRAMS2021100449}). In particular, using the stronger condition would require renormalizing \autoref{eqn:prob-formula} since it would exclude certain outcomes of the binomial distributions. 
\end{rem}

Having a nonempty set $\wit{\vec{u}}{\vec{w}}$ is not sufficient to guarantee a positive transition probability. This is because the success probability of some binomial distribution may still be $0$ for $\vec{u}$. Excluding that possibility gives us a sufficient and necessary condition.
\begin{lem}\label{lem:pos-prob}
    Let $\vec{u},\vec{w} \in \mathbb{N}^k$. We have $P(\vec{u},\vec{w}) > 0$ \revtwo{R2.3}{if and only if there exists $\vec{M} \in \wit{\vec{u}}{\vec{w}}$ such that $M_{ij} > 0$ implies $T_{ij}(\vec{u}) > 0$ for all $i,j \in [k]$.}
\end{lem}

\subsection{Interesting subclasses}

We now introduce some natural subclasses of binomial chains. The first two are already present in Kingma's work~\cite{Kingman_1969}. Let $(\vec{v},\vec{T})$ be a BC.
\begin{description}
    \item[Simple] We say that it is \emph{simple} if for all $i \in [k]$ and all $j \in [k]$, we have that $T_{ij}(\vec{u})$ can be written as $f(u_i)$.
    That is, all transfers from the $i$-th component depend only on the current number of individuals in that compartment.
    \item[Closed] We say that it is \emph{closed} if all of its transitions preserve the total population. In symbols, 
    for all $\vec{u},\vec{w} \in \mathbb{N}^k$ we have that:
    \[
    P(\vec{u},\vec{w}) > 0 \text{ implies } \norm{1}{\vec{u}} = \norm{1}{\vec{w}}.
    \]
    \item[Acyclic] We say it is \emph{acyclic} if $\restr{\vec{T}}{\mathbb{B}}$ is acyclic. This means that there exists no pair $(i,n) \in [k] \times \mathbb{N}$ such that $(\restr{\vec{T}}{\mathbb{B}})^n_{ii} > 0$.
\end{description}

Note that being simple and acyclic are properties that can be checked on the description of the BC (concretely, by inspecting the transfer matrix $\vec{T}$). In contrast, the definition of when a BC is closed seems to depend on the induced Markov chain. Fortunately, we have the following characterization of closed BCs.
\begin{lem}
    A BC $(\vec{v},\vec{T})$ is closed if and only if its transfer matrix $\vec{T}$ is such that every row has at most one nonzero entry. That is, 
    \(
        |\{(i',j) \in \supp{\vec{T}} : i = i'\}| \leq 1
    \)
    for all $i \in [k]$.
\end{lem}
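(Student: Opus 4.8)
The plan is to prove the two implications separately, arguing directly with the successor-state formula \autoref{eqn:right-sum} and using \autoref{lem:pos-prob} to certify when a transition has positive probability. Throughout I assume $\vec{T}$ has no self-transfers, i.e. $T_{ii} \equiv 0$ for all $i$: this is the natural reading for compartmental models (an edge always goes between two distinct compartments), and without it one must read ``nonzero entry'' as ``off-diagonal nonzero entry'', since a diagonal entry $T_{ii}$ is vacuous --- the terms $M_{ii}$ it produces cancel in \autoref{eqn:right-sum} (once among the incoming and once among the outgoing transfers of compartment $i$), so zeroing all diagonal entries changes neither the set of transitions of positive probability nor whether the BC is closed.

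For the ``if'' direction, assume every row of $\vec{T}$ has at most one nonzero entry and fix $\vec{u},\vec{w}$ with $P(\vec{u},\vec{w}) > 0$. By \autoref{lem:pos-prob} there is $\vec{M} \in \wit{\vec{u}}{\vec{w}}$ with $\max_j M_{ij} \le u_i$ for every $i$; since $\supp{\vec{M}} \subseteq \supp{\vec{T}}$, each row of $\vec{M}$ also has at most one nonzero entry, whence $\sum_\ell M_{j\ell} = \max_\ell M_{j\ell} \le u_j$ for every $j$. Then every argument of the maximum in \autoref{eqn:right-sum} is already nonnegative, the truncation at $\vec{0}$ has no effect, and summing \autoref{eqn:right-sum} over $j$ gives $\norm{1}{\vec{w}} = \norm{1}{\vec{u}} + \vec{1}^\intercal \vec{M} \vec{1} - \vec{1}^\intercal \vec{M} \vec{1} = \norm{1}{\vec{u}}$. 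Hence the BC is closed.

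For the ``only if'' direction I argue contrapositively: suppose some row $i$ has nonzero entries in two distinct columns $j,\ell$ (both $\neq i$, by the no-self-transfer convention), and I build a population-increasing transition. As $T_{ij},T_{i\ell} \in \mathbb{L}_k$ are affine with nonnegative coefficients and constant term and are not identically $0$, each is strictly positive on all sufficiently large inputs; pick $\vec{u} \in \mathbb{N}^k$ with $u_i = 1$ and the other coordinates large enough that $T_{ij}(\vec{u}) > 0$ and $T_{i\ell}(\vec{u}) > 0$. Let $\vec{M}$ be the matrix with $M_{ij} = M_{i\ell} = 1$ and every other entry $0$; then $\supp{\vec{M}} \subseteq \supp{\vec{T}}$ and $\max_{j'} M_{i'j'} \le u_{i'}$ for every row $i'$ (for $i'=i$ because $1 \le u_i$), so the constraints of \autoref{eqn:constr-M} hold and $\vec{M} \in \wit{\vec{u}}{\vec{w}}$ for the resulting $\vec{w}$. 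By \autoref{eqn:right-sum}, compartment $i$ receives no transfer and sends out $2 > u_i$, so $w_i = 0$, while $w_j = u_j + 1$, $w_\ell = u_\ell + 1$ and the other coordinates are unchanged; thus $\norm{1}{\vec{w}} = \norm{1}{\vec{u}} + 1 \neq \norm{1}{\vec{u}}$. Finally $\vec{M}$ meets the criterion of \autoref{lem:pos-prob} --- its nonzero entries $M_{ij},M_{i\ell}$ obey the outcome bound, and the matching transfer functions are positive at $\vec{u}$ --- so $P(\vec{u},\vec{w}) > 0$ and the BC is not closed.

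I expect the subtle point to be exactly the handling of diagonal entries flagged at the start: without excluding self-transfers the equivalence fails, because a row of the shape $\{T_{ii},T_{i\ell}\}$ still yields a closed BC (the net outflow of $i$, after its own self-loop cancels, is bounded by the binomial-outcome constraint). Everything else is routine; the only other step needing a line of argument is that $T_{ij}\not\equiv 0$ forces $T_{ij}$ to be positive at some $\vec{u}$ with $u_i = 1$, which is immediate from its form $\vec{a}^\intercal\vec{x}+b$ with $\vec{a}\ge\vec{0}$, $b\ge 0$ and monotonicity in each coordinate.
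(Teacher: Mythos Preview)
Your proof is correct. For the ``if'' direction you argue exactly as the paper does: the row condition forces $\sum_\ell M_{j\ell} = \max_\ell M_{j\ell} \le u_j$, so the truncation in the successor formula is inert and summing over $j$ gives $\norm{1}{\vec w} = \norm{1}{\vec u}$.

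For the ``only if'' direction your approach differs from the paper's, and in fact supplies something the paper's sketch does not. You argue by contraposition and \emph{construct} an explicit pair $(\vec u, \vec M)$ with $u_i = 1$ and $M_{ij} = M_{i\ell} = 1$, then invoke \autoref{lem:pos-prob} to certify $P(\vec u,\vec w) > 0$ with $\norm{1}{\vec w} = \norm{1}{\vec u} + 1$. The paper instead argues ``if the BC is not closed then some row has at least two nonzero entries'' (via: non-closure forces the $\max(0,\cdot)$ truncation to activate at some coordinate $i$, whence $\sum_\ell M_{i\ell} > u_i \ge \max_\ell M_{i\ell}$). But that implication is logically the contrapositive of the ``if'' direction already established, not the ``only if'' direction; as written the paper's sketch proves one direction twice. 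Your explicit construction is what is actually needed to complete the equivalence.

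You are also right to flag the diagonal-entry subtlety: a row with support $\{(i,i),(i,\ell)\}$ has two nonzero entries yet can never violate population conservation, since $M_{ii}$ cancels between incoming and outgoing transfers in the successor formula. So the lemma needs either the no-self-transfer convention or the off-diagonal reading you propose. The paper does not address this, though in its intended applications (acyclic BCs, where self-loops are excluded by definition) the issue does not arise.
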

\begin{proof}
If all rows of the transfer matrix have at most one nonzero entry, the stronger condition from \autoref{rem:neg-pop} coincides with the weaker one from our definition of BC. The contrapositive of the converse is also easy to establish: if the BC is not closed then it must be the case that a vector state with a negative component --- say, in dimension $i$ --- is reached. In turn, from the definition of the binomial distribution, this can only happen if the $i$-th row of $\vec{T}$ has at least two nonzero entries.
\end{proof}

In the rest of this work, we will mainly study acyclic and closed BCs as they cover interesting models used in epidemiology.

\subsection{Computational problems}
Suppose $k \in \mathbb{N},k \geq 1$ is fixed and that we are given a BC $\mathcal{B}$ as the tuple $(\vec{v},\vec{T})$ with functions from $\mathbb{L}_k$ represented as pairs of vectors with rational entries. In turn, assume integers are encoded in binary; and rationals, as pairs of integers representing the numerator and the denominator of the rational number (in reduced form).

The following problems are of practical interest in view of the applications of BCs.
\begin{description}
    \item[Termination] asks whether the BC almost surely reaches \emph{(final) states} $\vec{t}$ such that, in the induced Markov chain $\mathcal{C}_{\mathcal{B}}$, $P(\vec{t},\vec{t}) = 1$.
    \item[Time to termination] asks to compute the expected number of steps before termination, assuming the BC almost surely terminates.
\end{description}
Regarding the second problem, it is important to note that the value could be irrational. This is because of the exponential function used in the definition of the success probability of the binomial distributions in \autoref{eqn:prob-formula}.

In this work, we will primarily focus on algorithms to compute (rational approximations of) the expected time to termination.

\section{Acyclic binomial chains are absorbing}\label{sec:time2term-acyclic}
In this section, we recall the notion of absorbing Markov chain. Then, we state interesting properties of such chains that will be useful in the sequel. Finally, we argue that all acyclic BCs induce absorbing Markov chains.

\subsection{Absorbing Markov chains}
Let $\mathcal{C} = (S,s_0,P)$ be a Markov chain such that $S$ is finite.
We say that state $j \in S$ is \textit{reachable} from state $i \in S$ if and only if for some $t \in \mathbb{N}$:
\[
\Pr(X_t=j \mid X_0=i)>0.
\]
A state $i \in S$ is \textit{absorbing} if and only if $P_{ii}=1$. If the set of absorbing states $A \subseteq S$ of the Markov chain $\mathcal{C}$ is not empty and $A$ is reachable from all states, we say $\mathcal{C}$ is an \textit{absorbing Markov chain}. In an absorbing Markov chain, a state that is not absorbing is \emph{transient}. 

The transition matrix $\vec{P}$ of an absorbing Markov chain has special properties. Consider an ordering of $S$ such that the $k$ transient states are first, followed by the $\ell$ absorbing states. Now, the transition matrix will have the following \emph{canonical form}.
\begin{equation}\label{eqn:canon-form}
    \vec{P} = \begin{pmatrix}
        \vec{Q} & \vec{R}\\
        \vec{0} & \vec{I}
    \end{pmatrix}
  \end{equation}
Above, $\vec{R}$ is a nonzero $k \times \ell$ matrix; $\vec{Q}$, a $k \times k$ matrix; and $\vec{I}$ and $\vec{0}$, identity and zero matrices, respectively, of the appropriate dimensions.

\begin{prop}
    The matrix $\vec{I} - \vec{Q}$ has an inverse.
\end{prop}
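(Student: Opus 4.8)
The plan is to show that the spectral radius of $\vec{Q}$ is strictly less than $1$, which immediately gives that $1$ is not an eigenvalue of $\vec{Q}$ and hence that $\det(\vec{I}-\vec{Q}) \neq 0$. Equivalently, and perhaps more usefully for later sections, I would show that the Neumann series $\sum_{n \geq 0} \vec{Q}^n$ converges and that its sum is a two-sided inverse of $\vec{I}-\vec{Q}$, via the telescoping identity $(\vec{I}-\vec{Q})\sum_{n=0}^{N}\vec{Q}^n = \vec{I} - \vec{Q}^{N+1}$.

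Either route reduces to the single analytic fact that $\vec{Q}^n \to \vec{0}$ entrywise as $n \to \infty$. To establish this, I would first observe that the $(i,j)$ entry of $\vec{Q}^n$ equals $\Pr(X_n = j \mid X_0 = i)$ restricted to transient $i,j$, so $\sum_{j \text{ transient}} (\vec{Q}^n)_{ij}$ is exactly the probability of still being in a transient state after $n$ steps, starting from $i$. Since $\mathcal{C}$ is absorbing, from every transient state $i$ some absorbing state is reachable, i.e. there is $m_i \in \mathbb{N}$ with $\Pr(X_{m_i} \in A \mid X_0 = i) > 0$. Because $S$ is finite, we may set $m = \max_i m_i$ and, using that once absorbed the chain stays absorbed, obtain a uniform $p := \min_{i \text{ transient}} \Pr(X_m \in A \mid X_0 = i) > 0$. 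A standard iteration (conditioning on the state after $m$ steps and using the Markov property) then gives that the probability of remaining transient for $tm$ steps is at most $(1-p)^t$, for every starting transient state; hence each row sum of $\vec{Q}^{tm}$ is at most $(1-p)^t \to 0$, and since entries are nonnegative and row sums of later powers only decrease, $\vec{Q}^n \to \vec{0}$.

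Having $\vec{Q}^n \to \vec{0}$, I would conclude: the partial sums $\sum_{n=0}^{N}\vec{Q}^n$ form a Cauchy sequence (the tail is dominated entrywise by a convergent geometric-type bound from the previous step), so the limit $\vec{N} := \sum_{n \geq 0}\vec{Q}^n$ exists; passing to the limit in $(\vec{I}-\vec{Q})\sum_{n=0}^{N}\vec{Q}^n = \vec{I} - \vec{Q}^{N+1}$ and in its mirror image yields $(\vec{I}-\vec{Q})\vec{N} = \vec{N}(\vec{I}-\vec{Q}) = \vec{I}$, so $\vec{I}-\vec{Q}$ is invertible with inverse $\vec{N}$.

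The main obstacle is the uniformity in the second paragraph: one must be careful to extract a \emph{single} horizon $m$ and a \emph{single} lower bound $p$ that work simultaneously for all transient starting states, and to correctly combine the per-block absorption probabilities across successive length-$m$ windows via the Markov property (rather than naively multiplying probabilities of independent events). Finiteness of $S$ is what makes this possible, and it is the only place the finiteness hypothesis is genuinely used. Everything else — the entrywise interpretation of $\vec{Q}^n$, the telescoping sum, and the convergence argument — is routine.
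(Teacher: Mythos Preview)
Your argument is correct and is exactly the standard textbook proof: show $\vec{Q}^n \to \vec{0}$ via a uniform absorption bound (using finiteness of $S$), then sum the Neumann series and telescope. The paper itself does not prove this proposition; it merely cites \cite[Theorem 11.4]{grinstead}, where precisely this argument appears, so your proposal matches the intended proof.
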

\noindent 
The result above is well known. It is, for example, stated and proven in~\cite[Theorem 11.4]{grinstead}. The inverse of $\vec{I} - \vec{Q}$ is commonly written $\vec{N}$ and called the \emph{fundamental matrix}. The importance of this matrix will become clear in the next few paragraphs.

\subsection{Expected hitting times}
Let $A \subseteq S$ be a set of \emph{target states}.
We write $\tau_A$ to denote the first \textit{hitting time} of a state in $A$.
Note that $\tau_A$ can take countably many values only and they all are nonnegative. Hence, its expectation, denoted $\Ex[\tau_A]$, satisfies the following (see also~\cite[Section 1.3]{norris98}).
\begin{equation}\label{eqn:ex-hitting}
    \Ex[\tau_A] = \sum_{t = 0}^\infty t \Pr(\tau_A = t) + \infty \Pr(\tau_A = \infty)
\end{equation}

Let $i \in S$ be a state and write $k^A_i$ for the value $\Ex[\tau_A]$ where the initial state $s_0$ of $\mathcal{C}$ is replaced by $i$. The following characterization of the expected hitting times will be useful later. The result is well known and can be found, for instance, in \cite[Theorem 1.3.2]{norris98}.
\begin{prop}\label{thm:norris-hit}
  The vector of expected hitting times $\vec{k^A} = (k^A_i : i \in S)$ is the minimal (w.r.t. the product order) nonnegative solution to the following system.
    \begin{equation}
        \begin{cases}
            k^A_i = 0 & \text{if } i \in A\\
            k^A_i = 1 + \sum_{j \not\in A} P_{ij} k^A_j & \text{otherwise.}
        \end{cases}
    \end{equation}
\end{prop}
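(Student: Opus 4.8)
The plan is to follow the classical two-step argument: first verify that $\vec{k^A}$ itself is a solution of the system, and then show that it is dominated, componentwise, by every nonnegative solution.

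For the first step I would use a first-transition (one-step) analysis. If $i \in A$, then starting the chain at $i$ we have $\tau_A = 0$, so $k^A_i = 0$. If $i \notin A$, condition on $X_1$: on the event $\{X_1 = j\}$ the hitting time of $A$ equals $1 + \tau_A$, where now $\tau_A$ is the hitting time of the chain started at $j$, by the Markov property. Taking expectations in $[0,\infty]$ and using $k^A_j = 0$ for $j \in A$ yields $k^A_i = 1 + \sum_{j \notin A} P_{ij}\, k^A_j$. It is worth noting that this identity remains valid when some $k^A_j = \infty$, since every term involved is nonnegative, so both sides are then $+\infty$; in particular this covers the case where the chain does not almost surely hit $A$.

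For the second step, let $\vec{y} = (y_i : i \in S)$ be an arbitrary nonnegative solution. On $A$ we have $y_i = 0 = k^A_i$. For $i \notin A$, I would repeatedly substitute the defining equation of $\vec{y}$ into itself. After $n$ unfoldings,
\[
    y_i = \sum_{t=1}^{n} \Pr(\tau_A \geq t \mid X_0 = i) \;+\; \sum_{j_1,\dots,j_n \notin A} P_{i j_1} P_{j_1 j_2} \cdots P_{j_{n-1} j_n}\, y_{j_n},
\]
where the first sum collects the constant $1$ picked up along each $A$-avoiding path of length less than $t$ (using that, for $i \notin A$, $\sum_{j \notin A} P_{ij} = \Pr(\tau_A \geq 2 \mid X_0 = i)$ and, more generally, the analogous path-sum identities for $\Pr(\tau_A \geq t \mid X_0 = i)$), and the second sum is a remainder. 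Since $\vec{y} \geq \vec{0}$, the remainder is nonnegative, hence $y_i \geq \sum_{t=1}^{n} \Pr(\tau_A \geq t \mid X_0 = i)$ for every $n$. Letting $n \to \infty$ and invoking the tail-sum formula for the expectation of the nonnegative, $(\mathbb{N} \cup \{\infty\})$-valued random variable $\tau_A$ — which reproduces exactly \autoref{eqn:ex-hitting}, including the $\infty \cdot \Pr(\tau_A = \infty)$ term when the chain fails to hit $A$ almost surely — gives $y_i \geq \Ex[\tau_A \mid X_0 = i] = k^A_i$. Together with the first step, this shows $\vec{k^A}$ is the minimal nonnegative solution with respect to the product order.

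The main obstacle I anticipate is the careful bookkeeping in the unfolding step: correctly identifying the partial sums that appear with the tail probabilities $\Pr(\tau_A \geq t \mid X_0 = i)$, and making the argument go through uniformly regardless of whether $\vec{k^A}$ has infinite entries. Everything else is routine, modulo citing the standard identity $\Ex[Z] = \sum_{t \geq 1} \Pr(Z \geq t)$ for nonnegative integer-valued (possibly infinite) random variables.
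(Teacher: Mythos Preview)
Your argument is correct and is precisely the classical two-step proof (first-step analysis plus iterated substitution and tail-sum) found in Norris~\cite[Theorem 1.3.2]{norris98}. The paper does not give its own proof of this proposition; it simply cites that reference, so there is nothing further to compare.
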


\subsection{Expected hitting times in absorbing Markov chains}\label{sec:ex-hit-abs}
For absorbing Markov chains, the probability that the process reaches an
absorbing state is one. This, in turn, means that the expected hitting times for the set $A \subseteq S$ of absorbing states are always finite.
In fact, a formula for the vector of expected hitting times exists in terms of the fundamental matrix (see, e.g.~\cite[Theorem 11.5]{grinstead}).
\begin{prop}\label{thm:abs-hit}
    Consider an absorbing Markov chain with absorbing set of states $A \subseteq S$ and fundamental matrix $\vec{N}$ and let $\vec{k^A} = (k_i^A : i \in S)$ be the vector of expected hitting times. Then, 
    \(
        (k^A_i : i \in S \setminus A) = \vec{N} \vec{1}
    \)
    and
    \(
        (k^A_i : i \in A) = \vec{0}.
    \)
\end{prop}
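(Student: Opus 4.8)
The plan is to derive the formula from the characterization in \autoref{thm:norris-hit}, specialized to the canonical form \eqref{eqn:canon-form}. First I would note that the claim $(k^A_i : i \in A) = \vec{0}$ is immediate from the first case of the system in \autoref{thm:norris-hit}, so the work is entirely in the transient block. Writing $\vec{x} = (k^A_i : i \in S \setminus A)$ for the restriction of the expected-hitting-time vector to the $k$ transient states, the second case of the system reads, in matrix form, $\vec{x} = \vec{1} + \vec{Q}\vec{x}$, since by the block structure the transition probabilities among transient states are exactly the entries of $\vec{Q}$ and the target set $A$ is precisely the absorbing set. Rearranging gives $(\vec{I} - \vec{Q})\vec{x} = \vec{1}$, and since $\vec{I} - \vec{Q}$ is invertible with inverse $\vec{N}$ by the Proposition preceding \autoref{thm:norris-hit}, we get $\vec{x} = \vec{N}\vec{1}$.

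The one subtlety I would be careful about is minimality: \autoref{thm:norris-hit} characterizes $\vec{k^A}$ as the \emph{minimal} nonnegative solution, whereas the algebra above only shows that $\vec{N}\vec{1}$ is \emph{a} solution of the linear system. To close this gap I would argue that in an absorbing Markov chain the linear system $(\vec{I}-\vec{Q})\vec{x} = \vec{1}$ has a unique solution — which is exactly the content of $\vec{I}-\vec{Q}$ being invertible — so the minimal nonnegative solution and the unique solution coincide, hence $\vec{k^A}$ restricted to the transient states equals $\vec{N}\vec{1}$. I should also check that $\vec{N}\vec{1}$ is in fact nonnegative (so that it is a legitimate candidate), which follows because $\vec{N} = \sum_{n \geq 0} \vec{Q}^n$ has nonnegative entries, as each $\vec{Q}^n$ does; alternatively one can invoke that the expected hitting times are finite (as recalled just before the statement) and nonnegative, and then uniqueness does the rest without needing the series expansion explicitly.

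The main obstacle, such as it is, is purely expository rather than mathematical: making sure the block decomposition is applied correctly, i.e. that summing over $j \notin A$ in \autoref{thm:norris-hit} corresponds precisely to multiplication by $\vec{Q}$ (and not by $\vec{Q}$ together with part of $\vec{R}$), and that the indexing of $\vec{x}$, $\vec{Q}$, $\vec{N}$ is consistent with the ordering of $S$ used to obtain the canonical form \eqref{eqn:canon-form}. Since \autoref{sec:ex-hit-abs} already recalls that the process reaches $A$ almost surely and hence $\Ex[\tau_A]$ is finite for every starting state, there is no convergence issue to worry about, and the whole argument is a few lines once the bookkeeping is set up. Given that this is a well-known fact (with a pointer to~\cite[Theorem 11.5]{grinstead} already in the text), I would keep the proof short and essentially present the computation $(\vec{I}-\vec{Q})\vec{x}=\vec{1} \Rightarrow \vec{x} = \vec{N}\vec{1}$ together with the uniqueness remark.
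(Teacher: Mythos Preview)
Your proposal is correct and is the standard argument. Note, however, that the paper does not actually prove this proposition: it simply states it with a reference to~\cite[Theorem 11.5]{grinstead}, so there is no in-paper proof to compare against. Your derivation via \autoref{thm:norris-hit} and the invertibility of $\vec{I}-\vec{Q}$ is exactly the textbook route and would be a perfectly appropriate self-contained replacement for the citation.
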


\subsection{Acyclic binomial chains are absorbing}\label{sec:acyclic-bcs}
We will now argue that every acyclic binomial chain $(\vec{T},\vec{v})$ induces an absorbing Markov chain. The summary of our approach is as follows. First, we will define a total order on state vectors. Then, we will establish that the transitions of the induced Markov chain respect this order. Finally, we will also prove that acyclic binomial chains induce finite Markov chains. The claim will follow directly from these properties.

\begin{thm}\label{thm:acyclic-absorbing}
    Let $\mathcal{B}$ be an acyclic BC. Then, its induced Markov chain $\mathcal{C}_{\mathcal{B}}$ is absorbing.
\end{thm}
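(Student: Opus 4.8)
The plan is to establish three facts about the induced chain and then combine them: (i) the set of states reachable from $\vec{v}$ is finite; (ii) there is a total order $\lexleq$ on $\mathbb{N}^k$ along which every transition is non‑increasing, i.e. $P(\vec{u},\vec{w})>0$ implies $\vec{w}\lexleq\vec{u}$; and (iii) any finite Markov chain with such a monotone transition relation has an absorbing state reachable from every state. Since states not reachable from $\vec{v}$ play no role, the statement should be read as being about the restriction of $\mathcal{C}_{\mathcal{B}}$ to its reachable fragment, which (i) makes finite, so that the definition of absorbing chain actually applies to it.

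First I would deal with (ii), which I expect to be the cleanest part. Using acyclicity of $\restr{\vec{T}}{\mathbb{B}}$, relabel the compartments $1,\dots,k$ along a topological order, so that $(i,j)\in\supp{\vec{T}}$ implies $i<j$; individuals thus only ever flow from lower‑ to higher‑indexed compartments. Let $\lexleq$ be the lexicographic order on $\mathbb{N}^k$ with coordinate $1$ most significant. Given $P(\vec{u},\vec{w})>0$, pick any witness $\vec{M}\in\wit{\vec{u}}{\vec{w}}$. If $\vec{M}=\vec 0$ then $\vec{w}=\vec{u}$. Otherwise let $i^\ast$ be the least index such that row $i^\ast$ of $\vec{M}$ is nonzero. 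Since every nonzero entry $M_{i'j}$ forces $i'<j$ and all rows of $\vec{M}$ below $i^\ast$ vanish, no compartment $i\le i^\ast$ receives an incoming transfer; hence $w_i=u_i$ for $i<i^\ast$, while $w_{i^\ast}=\max\!\bigl(0,\,u_{i^\ast}-\sum_\ell M_{i^\ast\ell}\bigr)<u_{i^\ast}$ because the outgoing sum is at least $1$ and $u_{i^\ast}\ge\max_\ell M_{i^\ast\ell}\ge 1$ by the valid‑outcomes constraint of \autoref{eqn:constr-M}. Therefore $\vec{w}\lexleq\vec{u}$.

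The main obstacle will be (i), the finiteness. If the chain were closed this would be immediate since population is conserved, but in general — as \autoref{rem:neg-pop} already signals — a single step can create individuals (e.g. a fully‑drained compartment with two out‑edges sending its whole content along both), so no a priori bound on the total population is available. My approach would be an induction on the topological level $\lambda(i)$ of a compartment, defined as the length of the longest directed path in $\restr{\vec{T}}{\mathbb{B}}$ ending at $i$. The crux is the elementary inequality $\mathrm{out}_i(t)\le k\bigl(u_i(t)+\mathrm{in}_i(t)-u_i(t+1)\bigr)$, which holds both when the clamp in \autoref{eqn:right-sum} is active and when it is not, where $\mathrm{in}_i(t)$, $\mathrm{out}_i(t)$, $u_i(t)$ denote the inflow, outflow and population of compartment $i$ along a run; summing over $t$ telescopes and shows that the total mass ever leaving $i$ is at most $k$ times $v_i$ plus the total mass ever entering $i$. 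A level‑$0$ compartment has no inflow, so the total outflow from it is at most $k\norm{1}{\vec{v}}$; feeding this into the same bound for level‑$1$ compartments, and continuing up the levels, yields a finite bound on the total outflow from each compartment, hence a finite bound $C_\ell$ (depending only on $k$ and $\norm{1}{\vec{v}}$, and growing like $k^\ell\norm{1}{\vec{v}}$) on the value of every level‑$\ell$ coordinate in any reachable state. As $\lambda(i)\le k-1$ for all $i$, the reachable set is finite.

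Finally, for (iii): given a reachable state $\vec{u}$, let $R(\vec{u})$ be the (finite, by (i)) set of states reachable from $\vec{u}$ and let $\vec{m}$ be the $\lexleq$‑least element of $R(\vec{u})$. Any successor $\vec{m}'$ of $\vec{m}$ satisfies $\vec{m}'\lexleq\vec{m}$ by (ii) and $\vec{m}'\in R(\vec{u})$ by reachability, so minimality forces $\vec{m}'=\vec{m}$; thus $P(\vec{m},\vec{m})=1$, i.e. $\vec{m}$ is absorbing and reachable from $\vec{u}$. Taking $\vec{u}=\vec{v}$ shows the absorbing set is nonempty, and running the argument for every reachable $\vec{u}$ shows it is reachable from all states, so the reachable fragment of $\mathcal{C}_{\mathcal{B}}$ is an absorbing Markov chain.
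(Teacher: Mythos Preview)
Your proposal is correct. Parts (ii) and (iii) track the paper closely: your lexicographic argument is exactly the first half of \autoref{lem:local-order}, and your (iii) spells out the ``absorbing state reachable from every state'' step that the paper leaves implicit after stating that ``the claim will follow directly from these properties.'' Your remark that the statement should be read for the reachable fragment is also the right reading.

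The genuine divergence is in (i), finiteness. The paper does \emph{not} use a level-wise flow argument. Instead it proves, still in \autoref{lem:local-order}, the one-step norm bound $\norm{1}{\vec{w}}\le k\,\norm{1}{\vec{u}}$, and then bounds the number of steps needed to reach any reachable state: \autoref{lem:same-supp} shows that any run segment of length $\ge k$ whose states share the same support can be shortened to length $<k$, and a pigeonhole on supports (\autoref{lem:acyclic-bc-finite}) then gives that every reachable state is reachable in at most $2^{k^2}$ steps, hence $\norm{1}{\vec{w}}\le\norm{1}{\vec{v}}\,k^{2^{k^2}}$. Your approach instead telescopes the inequality $\mathrm{out}_i(t)\le k\bigl(u_i(t)+\mathrm{in}_i(t)-u_i(t{+}1)\bigr)$ and propagates total-outflow bounds along the topological levels. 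This is more elementary (no run-shortening lemma is needed) and yields a far better dependence on $k$, roughly $k^{O(k)}\norm{1}{\vec{v}}$ versus the paper's doubly-exponential $k^{2^{k^2}}\norm{1}{\vec{v}}$. What the paper's route buys in exchange is the auxiliary ``short reachability witnesses'' lemma and the explicit norm bound of \autoref{lem:acyclic-bc-finite}, which it reuses in the \textbf{PSPACE} analysis of \autoref{sec:exhit-approx}; your bound would serve that purpose just as well.
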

\noindent 
Since $\restr{\vec{T}}{\mathbb{B}}$ is acyclic, we can use it to sort $[k]$ topologically. That is to say, we can assume that for all $i , j \in [k]$ the following holds.
\begin{equation}\label{eqn:lex-idx}
    T_{ij} \not\equiv 0 \implies i < j
\end{equation}
This means that any $\vec{M} \in \wit{\vec{u}}{\vec{w}}$ will be upper triangular with zeros in the diagonal.
Based on this observation, we focus on the lexicographic order on vector states: we write $\vec{w} \lexleq \vec{u}$ if and only if $\vec{w} = \vec{u}$ or there exists $j \in [k]$ such that $w_j < u_j$ and $w_i = u_i$ for all $1 \leq i < j$. We claim that the transitions of the induced Markov chain respect this order.
\begin{lem}\label{lem:local-order}
    Let $\vec{u}$ and $\vec{w}$ be states of the acyclic BC and $P$ the transition probability function of its induced Markov chain. If $0 < P(\vec{u},\vec{w})$ then $\vec{w} \lexleq \vec{u}$ and $\norm{1}{\vec{w}} \leq \norm{1}{\vec{u}}k$.
\end{lem}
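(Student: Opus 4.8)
The plan is to first extract a witness matrix $\vec{M} \in \wit{\vec{u}}{\vec{w}}$ --- which exists because $P(\vec{u},\vec{w}) > 0$ forces the sum in \autoref{eqn:prob-formula} to range over a nonempty set, or alternatively by \autoref{lem:pos-prob} --- and then to exploit the structure that the topological ordering of \autoref{eqn:lex-idx} imposes on it: $\vec{M}$ is upper triangular with zero diagonal. Writing $\mathrm{in}_j := \sum_{i} M_{ij} = \sum_{i<j} M_{ij}$ for the incoming transfers into compartment $j$ and $\mathrm{out}_j := \sum_{\ell} M_{j\ell} = \sum_{\ell>j} M_{j\ell}$ for the outgoing ones, \autoref{eqn:right-sum} reads $w_j = \max(0,\, u_j + \mathrm{in}_j - \mathrm{out}_j)$; intuitively, compartment $j$ only receives mass from strictly smaller indices and only sends mass to strictly larger ones.

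For $\vec{w} \lexleq \vec{u}$, I would prove by induction on $i$ that if $w_1 = u_1, \dots, w_i = u_i$ then the first $i$ rows of $\vec{M}$ vanish. The point is that the inductive hypothesis gives $\mathrm{in}_i = 0$, so $w_i = \max(0, u_i - \mathrm{out}_i)$; when $u_i = 0$ the row already vanishes by the valid-binomial-outcomes constraint $M_{i\ell} \le u_i$ of \autoref{eqn:constr-M}, and when $u_i \ge 1$ the equality $w_i = u_i > 0$ forces $\mathrm{out}_i = 0$, hence the row vanishes. Applying this at the least index $j$ where $\vec{w}$ and $\vec{u}$ disagree (if one exists) yields $\mathrm{in}_j = 0$, so $w_j = \max(0, u_j - \mathrm{out}_j) \le u_j$ and thus $w_j < u_j$; if no such $j$ exists then $\vec{w} = \vec{u}$. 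Either way $\vec{w} \lexleq \vec{u}$.

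For the population bound, summing \autoref{eqn:right-sum} over $j$ and using monotonicity of $\max$ together with $u_j,\mathrm{in}_j \ge 0$ gives $\norm{1}{\vec{w}} \le \sum_j (u_j + \mathrm{in}_j) = \norm{1}{\vec{u}} + \sum_j \mathrm{out}_j$, where $\sum_j \mathrm{in}_j = \sum_{i,j} M_{ij} = \sum_i \mathrm{out}_i$ by reindexing. Since $\vec{M}$ has a zero diagonal, each $\mathrm{out}_i$ is a sum of at most $k-1$ entries, each bounded by $u_i$ via \autoref{eqn:constr-M}, so $\sum_i \mathrm{out}_i \le (k-1)\norm{1}{\vec{u}}$ and hence $\norm{1}{\vec{w}} \le k\,\norm{1}{\vec{u}}$. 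The only genuinely delicate point I anticipate is the case analysis around the $\max$ with $0$ in the induction --- in particular, noticing that the $u_i = 0$ case must be settled through the outcome-validity constraint rather than through the equation for $w_i$, since there $w_i = u_i$ holds regardless of $\mathrm{out}_i$ --- while everything else is direct manipulation of \autoref{eqn:right-sum} and the triangular shape of $\vec{M}$.
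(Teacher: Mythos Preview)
Your proposal is correct and follows essentially the same approach as the paper: both arguments hinge on the upper-triangular, zero-diagonal shape of any witness $\vec{M}$ imposed by the topological ordering, and both deduce the lexicographic decrease by locating the first index where something nontrivial happens (the paper picks the first nonzero row of $\vec{M}$ directly, while you arrive at the same conclusion via the contrapositive induction that a matching prefix of $\vec{u}$ and $\vec{w}$ forces the corresponding rows of $\vec{M}$ to vanish). Your treatment of the $u_i = 0$ corner case through the outcome-validity constraint is a detail the paper leaves implicit, and your norm bound via $\sum_i \mathrm{out}_i \le (k-1)\norm{1}{\vec{u}}$ is a minor reshuffling of the paper's coordinate-wise estimate $w_j \le \sum_{i \le j} u_i$; both yield the same $k\norm{1}{\vec{u}}$.
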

\begin{proof}
    First, assume $\vec{u} \neq \vec{w}$ as otherwise the claim holds trivially.
    Let $j' \in [k]$ be the index of the first row of $\vec{M}$ containing some nonzero entry. Hence, we have the following.
    \[
        \forall j < j' : \sum_{\ell \in [k]} M_{j\ell} = 0
    \]
    From the triangularity observation above we also get that all columns of $\vec{M}$ with index $j \leq j'$ have only zeros. Hence, we also have the following.
    \[
       \forall j \leq j' : \sum_{i \in [k]} M_{ij} = 0 
    \]
    It follows from \autoref{eqn:right-sum} that $w_j = u_j$ for all $j < j'$ and $w_{j'} = u_{j'} - \sum_{\ell \in [k]} M_{j'\ell}$. Since the entries of $\vec{M}$ are nonnegative and its $j'$-th row has some nonzero entry, we get $w_{j'} < u_{j'}$ and thus $\vec{w} \lexleq \vec{u}$.

    For the second property we also rely on $\vec{M}$ being upper triangular. Because of that fact, we can rewrite \autoref{eqn:right-sum} as follows for all $j \in [k]$.
    \begin{align*}
        w_j & {}= \max\left(0, u_j + \sum_{i < j} M_{ij} - \sum_{\ell > j} M_{j\ell}\right)\\
        & {} \leq \max\left(0, u_j + \sum_{i < j} M_{ij}\right) & \vec{M} \text{ is nonneg.}\\
        & {} = u_j + \sum_{i < j} M_{ij} & \vec{u},\vec{M} \text{ are nonneg.}\\
        & {} \leq \sum_{i \leq j} u_i & \text{by \autoref{eqn:constr-M}}
    \end{align*}
    It thus follows that $\norm{1}{\vec{w}} \leq \norm{1}{\vec{u}}k$ as claimed.
\end{proof}

To conclude the proof of \autoref{thm:acyclic-absorbing} we also argue that acyclic BCs are finite. This follows from \autoref{lem:local-order} and Dickson's lemma since the lexicographic order is a linear extension of the product order. However, it will be useful later to have an explicit bound on the size of the set of states of the BC. Our bound will rely on \autoref{lem:local-order} and the fact that every state that is reachable in the (Markov chain induced by the) BC is reachable in a small number of steps. As a stepping stone, we prove the claim for the case when the support of intermediate states stays the same.
\begin{lem}\label{lem:same-supp}
    Let $\vec{u^{(1)}}, \dots, \vec{u^{(n)}} \in \mathbb{N}^k$ be vectors with the same support and $t \in \mathbb{N}$ such that $t \geq 1$. If $\Pr(X_{t+n} = \vec{u^{(n)}}, \dots, X_t = \vec{u^{(1)}}) > 0$ then there exists $\ell < k$ such that $\Pr(X_{t+\ell} = \vec{u^{(n)}}, X_t = \vec{u^{(1)}}) > 0$.
\end{lem}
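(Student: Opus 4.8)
The plan is to reduce the statement to a purely combinatorial scheduling argument on a single aggregated transfer matrix. Fix a topological ordering of $[k]$ as in \autoref{eqn:lex-idx}, let $\sigma$ be the common support of the $\vec u^{(i)}$, and let $\vec M^{(1)},\dots,\vec M^{(n-1)}$ be witness matrices (in the sense of \autoref{lem:pos-prob}) for the successive transitions of the positive-probability run provided by the hypothesis. Each $\vec M^{(i)}$ is upper triangular with zero diagonal, hence so is $\vec M := \sum_i \vec M^{(i)}$. The first key observation is that, because every state along the run has support exactly $\sigma$, the valid-binomial-outcomes constraint forces all rows of each $\vec M^{(i)}$ outside $\sigma$ to vanish, and then \autoref{eqn:right-sum} forces all columns outside $\sigma$ to vanish as well; moreover on the coordinates of $\sigma$ no clipping to $\vec 0$ ever occurs, since those coordinates stay strictly positive. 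Consequently \autoref{eqn:right-sum} telescopes into the exact identity $\vec u^{(n)} = \vec u^{(1)} + (\vec 1^\intercal \vec M)^\intercal - \vec M \vec 1$, with $\vec M$ supported on $\sigma \times \sigma$ and $\supp{\vec M} \subseteq \supp{\vec T}$.

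Next I would record a small lemma: for $f \in \mathbb L_k$ and $\vec x,\vec y$ with $\supp{\vec x}=\supp{\vec y}$ one has $f(\vec x)>0 \iff f(\vec y)>0$ (immediate from $f(\vec x)=\vec a^\intercal \vec x + b$ with $\vec a,b \geq 0$). Together with \autoref{lem:pos-prob}, this says that whether a prospective transfer at position $(i,j)$ is "legal" depends only on $\sigma$; and every nonzero entry $(i,j)$ of $\vec M$ is legal in this sense because it is nonzero in some $\vec M^{(i')}$, whose source state has support $\sigma$.

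The main step is to schedule $\vec M$ into a short sequence of genuine transitions. I would prove, by induction on the number of nonzero rows of $\vec M$, the claim: if $\vec x \in \mathbb N^k$ has $\supp{\vec x} = \sigma$, $\vec M$ is upper triangular and supported on $\sigma \times \sigma$ with every nonzero entry legal for $\sigma$, and $\vec y := \vec x + (\vec 1^\intercal \vec M)^\intercal - \vec M \vec 1$ has $\supp{\vec y} = \sigma$, then there is a positive-probability path from $\vec x$ to $\vec y$ of length equal to the number of nonzero rows of $\vec M$. For the inductive step, let $j_0$ be the least index with a nonzero row in $\vec M$; upper triangularity forces column $j_0$ of $\vec M$ to be zero, so $\sum_\ell M_{j_0 \ell} = x_{j_0} - y_{j_0} < x_{j_0}$. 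Hence the single-row matrix $\vec N$ with $N_{j_0 \ell} = M_{j_0 \ell}$ satisfies the constraints of \autoref{eqn:constr-M} at $\vec x$, and $\vec x' := \vec x + (\vec 1^\intercal \vec N)^\intercal - \vec N \vec 1 \geq \vec 0$ with $\supp{\vec x'} = \sigma$; by legality and \autoref{lem:pos-prob}, $P(\vec x,\vec x')>0$. One then checks that $\vec M' := \vec M - \vec N$ (that is, $\vec M$ with row $j_0$ deleted) again satisfies all the hypotheses, now for $\vec x'$ and $\vec y$, with one fewer nonzero row, so the induction closes. Since $\vec M$ is supported on $\sigma \times \sigma$ and upper triangular, its row indexed by $\max \sigma$ is zero, so $\vec M$ has at most $|\sigma|-1 \leq k-1$ nonzero rows, giving a path of length $\ell \leq k-1 < k$. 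Finally, the hypothesis also yields $\Pr(X_t = \vec u^{(1)}) > 0$, and composing this with the length-$\ell$ path gives $\Pr(X_{t+\ell} = \vec u^{(n)}, X_t = \vec u^{(1)}) > 0$.

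I expect the main obstacle to be exactly the interaction between the clipping and the scheduling. The subtlety is that \autoref{eqn:constr-M} bounds $\max_\ell M_{i\ell}$ rather than $\sum_\ell M_{i\ell}$, and that a single individual may traverse several edges over the course of the run, so one cannot simply ``replay'' the matrices $\vec M^{(i)}$ or split $\vec M$ arbitrarily. The constant-support assumption is what rescues both points: it removes all clipping on $\sigma$ (so $\vec M$ aggregates cleanly), it keeps every coordinate of $\sigma$ strictly positive (so a source vertex of $\vec M$ loses strictly fewer individuals than it currently holds, making the one-source-per-step schedule legal), and it makes legality of transfers invariant along the run.
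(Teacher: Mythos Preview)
Your argument is correct, and the route you take is genuinely different from the paper's. Both proofs rest on the same two observations: the constant-support hypothesis kills the clipping on $\sigma$ so that the transfer identity telescopes cleanly, and the positivity of each $T_{ij}$ depends only on $\supp{\vec{x}}$. Where you diverge is in the scheduling. The paper keeps the original $n-1$ witness matrices $\vec{M^{(1)}},\dots,\vec{M^{(n-1)}}$ around and builds $k-1$ new matrices $\vec{N^{(\ell)}}$ by a rather intricate rule: at step $\ell$ it replays the rows below $\ell$ of $\vec{M^{(\ell)}}$ verbatim while front-loading into row $\ell$ the sum of the $\ell$-th rows of all remaining matrices $\vec{M^{(\ell')}}$ with $\ell' \geq \ell$; intermediate states are then compared to the original $\vec{u^{(\ell)}}$. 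You instead aggregate everything into a single upper-triangular $\vec{M} = \sum_i \vec{M^{(i)}}$ and then process $\vec{M}$ one row at a time in topological order, noting that for the current least nonzero row $j_0$ the column $j_0$ is already empty, so $\sum_\ell M_{j_0\ell} = x_{j_0} - y_{j_0} < x_{j_0}$ and the single-row move is legal. Your approach is more elementary and more transparent, needs no reference back to the intermediate $\vec{u^{(\ell)}}$, and as a bonus yields the slightly sharper bound $\ell \leq |\sigma|-1$ rather than the paper's uniform $k-1$. The paper's construction, on the other hand, retains more of the structure of the original run, which is not needed here but could conceivably be useful if one wanted finer control over the intermediate states.
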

\begin{proof}
    If $n < k$, the claim holds trivially. Hence, we focus on the case where $n \geq k$.
    
    We will need some additional notation. From the assumptions in the claim, there are matrices $\vec{M^{(1)}}, \dots, \vec{M^{(n-1)}}$ such that $\vec{M^{(\ell)}} \in \wit{\vec{u^{(\ell)}}}{\vec{u^{(\ell+1)}}}$ for all $1 \leq \ell < n$. In addition, the matrices also satisfy the following for all $1 \leq \ell < n$.
    \[
    \prod_{(i,j) \in \supp{\vec{T}}} B(M^{(\ell)}_{ij}; u^{(\ell)}_i, 1 - \exp(-T_{ij}(\vec{u^{(\ell)}}))) > 0
    \]
    \revtwo{R2.6}{Finally, we write $\vec{M_i^{(\ell)}}$ to denote the $i$-th row $(M^{(\ell)}_{i1},M^{(\ell)}_{i2},\dots,M^{(\ell)}_{ik})$ of $\vec{M^{(\ell)}}$.}

    \revtwo{R2.6}{We will now define a second sequence of states $\vec{w^{(1)}} \dots \vec{w^{(k)}}$, such that $\vec{w^{(1)}} = \vec{u^{(1)}}$ and $\vec{w^{(k)}} = \vec{u^{(n)}}$, with their corresponding matrices $\vec{N^{(1)}}, \dots, \vec{N^{(k-1)}}$.} It will be clear that $\vec{N^{(\ell)}} \in \wit{\vec{w^{(\ell)}}}{\vec{w^{(\ell+1)}}}$, for all $1 \leq \ell < k$, based on the triangularity of the $\vec{M^{(\ell)}}$ and how we will define our $\vec{N^{(\ell)}}$. Afterwards, we will argue that the corresponding products of the binomial probability mass functions are positive. \revtwo{R2.6}{The intuition is that $\vec{N_i^{(\ell)}}$ groups all transfers encoded by the $\vec{M^{(\ell')}}$, for $1 \leq \ell' < n$, from the $i$-th compartment to other compartments $j \geq i$.}

    We first construct the matrices. For $1 \leq \ell < k$ and all $i \in [k]$, we set:
    \[
        \vec{N^{(\ell)}_i} = \begin{cases}
            \sum_{1 \leq \ell' \leq n} \vec{M^{(\ell')}_{i}} & \text{if } i = \ell\\
            \vec{0} & \text{otherwise.}
        \end{cases}
    \]
    Now, we define $\vec{w^{(\ell+1)}}$ as the pointwise maximum of $\vec{0}$ and $\vec{w^{(\ell)}} + (\vec{1}^\intercal \vec{N^{(\ell)}})^\intercal - \vec{N^{(\ell)}}\vec{1}$. 
    
    Let $\ell \in [k]$ be arbitrary. By triangularity of the original witness matrices $\vec{M^{(\ell')}}$ and the definition of the new ones $\vec{N^{(\ell')}}$, we get that $w^{(\ell)}_i = u^{(n)}_i$ for all $i < \ell$ and $w^{(\ell)}_i \geq u^{(1)}_i$ for all $i \geq \ell$. In particular, since all $\vec{u^{(\ell')}}$ have the same support, this means that: 
    \[
    \supp{\vec{w^{(\ell)}}} =
    \supp{\vec{u^{(n)}}} =
    \supp{\vec{u^{(1)}}} =
    \supp{\vec{u^{(\ell)}}}.
    \]
    Recall that every $T_{ij}$ can be written as $\vec{x} \mapsto \vec{a}^\intercal \vec{x} + b$ with $\vec{a}$ and $b$ being nonnegative. Hence, if $T_{ij}(\vec{x}) > 0$ and $\supp{\vec{y}} = \supp{\vec{x}}$, for vectors $\vec{x},\vec{y} \in \mathbb{N}^k$, then $T_{ij}(\vec{y}) > 0$. We thus conclude that for all $1 \leq \ell' < \ell$ the following holds.
    \[
    \prod_{(i,j) \in \supp{\vec{T}}} B(N^{(\ell')}_{ij}; w^{(\ell')}_i, 1 - \exp(-T_{ij}(\vec{w^{(\ell')}}))) > 0
    \]
    Therefore, as required, $\Pr(X_{t+k-1} = \vec{u^{(n)}}, X_t = \vec{u^{(1)}}) > 0$.
\end{proof}

We can now state a concrete bound on the norm of reachable states in an acyclic BC.
\begin{lem}\label{lem:acyclic-bc-finite}
    Let $\mathcal{B} = (\vec{v},\vec{T})$ be an acyclic BC. The set $R$ of reachable states in its induced Markov chain is finite. Moreover, for all $\vec{w} \in R$ we have that $\norm{1}{\vec{w}} \leq \norm{1}{\vec{v}} \exp((\ln k) 2^{k^2})$.
\end{lem}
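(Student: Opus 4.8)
I plan to deduce the statement from a bound on the length of a \emph{shortest} run reaching a given reachable state. By \autoref{lem:local-order}, a single transition of the induced chain multiplies the $1$-norm of the current state by at most $k$. Hence it suffices to show that every reachable $\vec{w}$ is reachable from $\vec{v}$ by a run of length at most $2^{k^2}$: then $\norm{1}{\vec{w}} \le \norm{1}{\vec{v}}\, k^{2^{k^2}} = \norm{1}{\vec{v}}\exp((\ln k)2^{k^2})$, which is the claimed inequality, and finiteness of $R$ is immediate because only finitely many states have $1$-norm below this fixed bound.

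To bound the shortest-run length I would proceed by induction on $k$, exploiting the topological order of \autoref{eqn:lex-idx} (edges of $\restr{\vec{T}}{\mathbb{B}}$ go from smaller to larger indices). Compartment $1$ is then a source, so along any run $\vec{u}^{(0)}=\vec{v},\dots,\vec{u}^{(m)}=\vec{w}$ the first coordinate $u_1$ is non-increasing; partition the time line into the maximal windows on which $u_1$ is constant. On each window, the projection of the run onto coordinates $2,\dots,k$ is a run of the acyclic binomial chain on $k-1$ compartments obtained by restricting $\vec{T}$ to rows and columns $2,\dots,k$ and freezing the value (equivalently, the support) of the first coordinate; crucially, exactly as in the proof of \autoref{lem:same-supp}, for $f\in\mathbb{L}_k$ with a frozen value of $x_1$ the sign of $f(\vec{x})$ depends only on $\supp{\vec{x}}$, so the restricted transfer functions again lie in $\mathbb{L}_{k-1}$, and restricting a DAG keeps it acyclic. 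By the induction hypothesis each window can be replaced by one whose projection to coordinates $2,\dots,k$ has length at most $N(k-1)$, with base case $N(1)=1$ (a one-compartment acyclic chain has no transfers, so $R=\{\vec{v}\}$); note \autoref{lem:same-supp} already gives exactly this kind of collapse for constant-support segments. Combining the per-window bound with a bound on the \emph{number} of windows that is exponential in $k$ yields a recursion of the shape $N(k)\le 2^{O(k)}N(k-1)$, hence $N(k)\le 2^{k^2}$.

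The main obstacle is bounding the number of windows, i.e. the number of steps at which $u_1$ strictly decreases, since a priori this could be as large as $v_1$. I would handle it by an exchange/batching argument on the out-transfers of compartment $1$: if $u_1$ decreases at two steps $d<d'$ whose window boundaries induce the same support on coordinates $2,\dots,k$, the out-transfers of compartment $1$ performed at $d'$ can be merged into those at $d$ — the constraint $\max_j M_{1j}\le u_1$ from \autoref{eqn:constr-M} survives because $u_1$ only decreases, and the edges used at $d'$ stay enabled at the earlier, pointwise-larger state since (as above) enabledness depends only on the support, which has not shrunk in the relevant coordinates — after which the intervening sub-run on coordinates $2,\dots,k$ is rebuilt via the induction hypothesis to reach the same state. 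Iterating keeps one source-emitting step per distinct support on $\{2,\dots,k\}$, of which there are at most $2^{k-1}$, which is the bound needed. A cleaner-looking but more delicate alternative is to establish directly a strengthening of \autoref{lem:same-supp} — if $\vec{y}$ is reachable from $\vec{x}$ and $\supp{\vec{y}}\subseteq\supp{\vec{x}}$ then $\vec{y}$ is reachable from $\vec{x}$ in fewer than $k$ steps — from which each of the $2^k$ possible supports occupies a time interval of length $<k$, giving $m<k\,2^k\le 2^{k^2}$ at once; I expect proving this strengthening, again via the batching construction already used for \autoref{lem:same-supp} but now with genuinely varying intermediate supports, to be where the real difficulty lies.
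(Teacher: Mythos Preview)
Your overall strategy matches the paper's exactly: bound the length of a shortest reaching run by $2^{k^2}$ and then iterate the per-step estimate $\norm{1}{\vec{w}}\le k\,\norm{1}{\vec{u}}$ from \autoref{lem:local-order} to obtain $\norm{1}{\vec{w}}\le\norm{1}{\vec{v}}\,k^{2^{k^2}}=\norm{1}{\vec{v}}\exp((\ln k)2^{k^2})$, with finiteness of $R$ following from the norm bound. Where you diverge is in how you establish the run-length bound. The paper does not induct on $k$, does not peel off the source compartment, and does not batch out-transfers: it argues directly by contradiction that a shortest run longer than $2^{k^2}$ must, by pigeonhole on the supports of the visited states, contain a constant-support infix of length at least $k$, and then invokes \autoref{lem:same-supp} once to shorten that infix, contradicting minimality. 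So all the real work is already discharged by \autoref{lem:same-supp}, whereas your route redoes much of that work inside an induction and adds a separate argument to control the number of windows. Your approach is heavier but arguably exposes more of the underlying structure; it is also worth noting that the paper's pigeonhole step, read literally, only guarantees that some support \emph{recurs} often, not that it recurs \emph{consecutively}, so the exchange/batching machinery you sketch is close to what a fully rigorous version of that step would need as well.
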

\begin{proof}
    First, we prove that \autoref{lem:same-supp} can be generalized to when the support of intermediate states does change. This will cost us an exponential in terms of $k$. Concretely, we claim that any reachable state $\vec{w} \in \mathbb{N}^k$ can be reached in at most $2^{k^2}$ steps. Towards a contradiction, suppose the shortest run witnessing that $\vec{w}$ is reachable (from $\vec{v}$) is longer. Then, by pigeonhole principle, the run contains an infix of length at least $k$ such that all states in it have the same support. But then, by \autoref{lem:same-supp}, we can shorten that infix and obtain a shorter witness.

    Second, to obtain the bound on the norm of reachable states $\vec{w}$ we just need to compose $2^{k^2}$ times the bound from \autoref{lem:local-order}. We thus obtain:
    \[
        \norm{1}{\vec{w}} \leq \norm{1}{\vec{v}} k ^{2^{k^2}} = \norm{1}{\vec{v}} \exp((\ln k) 2^{k^2})
    \]
    as claimed.
\end{proof}

We close this section with the observation that the bound from the previous claim is (asymptotically) good. Indeed, since $k$ is assumed to be fixed, the bound says that the $1$-norm of all reachable states is linear in that of the initial state, i.e. $O(\norm{1}{\vec{v}})$.

\section{Approximating the time to termination in polynomial space}\label{sec:exhit-approx}
In this section, we study the computational complexity of the (approximate)
time-to-termination problem for acyclic binomial chains. 
Somewhat surprisingly, despite the induced Markov chain $\mathcal{C}_{\mathcal{B}}$ of a given BC $\mathcal{B}$ being exponentially larger than the size of the encoding of $\mathcal{B}$ (see \autoref{lem:acyclic-bc-finite} and recall the components of $\vec{v}$ are encoded in binary), we can compute the expected time to termination using polynomial space only. 
Intuitively, our algorithm consists of a composition of three polynomial-space transducers that have to deal with numbers whose binary encoding may be large (read, exponential). One transducer reads bits of the encoding of the success probabilities from the binomial distributions, another reads bits of the encoding of the transition probabilities, and a last one computes bits of the expected time to termination. However, before we
get to that point, we will have to address a small yet important issue: 
While the input to our computational problems is finite,
the transition probabilities of the induced Markov chain may be irrational.

\subsection{Rational approximations of induced probabilities}
We follow a very simple approach. In short, we use the Taylor expansion
of $e^{-x}$ to get a rational approximation thereof. We also derive a bound on
the error of our approximation using standard calculus.

Let $x \in \mathbb{Q}_{\geq 0}$. We assume $x$ is given as a pair of positive integers
encoding the numerator and denominator, i.e. $x = \frac{a}{b}$ for $a,b \in
\mathbb{N}$ with $a,b > 0$. We study the function $f(x) = e^{-x}$. The $k$-th
order Taylor approximation of $f$ is the following polynomial.
\[
  P_k(x) = \sum_{i=0}^k \frac{(-x)^i}{i!}
\]
Now, since $-1 \leq f^{(k)}(x) \leq 1$ for all $k \in \mathbb{N}$ and all $x
\in \mathbb{Q}_{\geq 0}$, the remainder
\(
  R_k(x) = f(x) - P_k(x)
\)
can be bounded as below using Taylor's inequality.
\begin{equation}\label{eqn:bnd-remainder}
  |R_k(x)| \leq \frac{x^{k+1}}{(k+1)!}
\end{equation}

We can now state a sufficient bound on how large $k$ should be to have a small
error in our approximation.
\begin{lem}\label{lem:taylor-error}
  Let $r \in \mathbb{N}$. If $k \geq 2a^2 + r$ then $|R_k(x)| \leq 2^{-r}$.
\end{lem}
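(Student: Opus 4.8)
The plan is to start from Taylor's inequality~\eqref{eqn:bnd-remainder}, which already gives $|R_k(x)| \leq \frac{x^{k+1}}{(k+1)!}$, and then bound the right-hand side crudely but carefully. Since $x = \frac{a}{b}$ with $b \geq 1$, we have $x \leq a$, so it suffices to show $\frac{a^{k+1}}{(k+1)!} \leq 2^{-r}$ whenever $k \geq 2a^2 + r$ (recall $a \geq 1$ since $a$ is a positive integer).

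The key manipulation is to write $\frac{a^{k+1}}{(k+1)!} = \prod_{i=1}^{k+1} \frac{a}{i}$ and split this product at the index $i = 2a$. This split is meaningful because $k \geq 2a^2 \geq 2a$ for $a \geq 1$, so there are factors with $i \geq 2a$. The \emph{tail} $\prod_{i=2a}^{k+1} \frac{a}{i}$ has $k - 2a + 2$ factors, each at most $\frac{1}{2}$, hence it is bounded by $2^{-(k - 2a + 2)}$. The \emph{head} $\prod_{i=1}^{2a-1} \frac{a}{i} = \frac{a^{2a-1}}{(2a-1)!}$ is exactly one of the summands of the series $e^a = \sum_{n \geq 0} \frac{a^n}{n!}$, so it is bounded by $e^a \leq 4^a = 2^{2a}$. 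Multiplying the two estimates yields $\frac{a^{k+1}}{(k+1)!} \leq 2^{2a} \cdot 2^{-(k - 2a + 2)} = 2^{4a - 2 - k}$.

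It then remains to substitute the hypothesis: with $k \geq 2a^2 + r$ we get $4a - 2 - k \leq 4a - 2 - 2a^2 - r = -2(a-1)^2 - r \leq -r$, and therefore $|R_k(x)| \leq \frac{a^{k+1}}{(k+1)!} \leq 2^{4a-2-k} \leq 2^{-r}$, which is the claim.

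I expect the only real subtlety to be the treatment of the head $\frac{a^{2a-1}}{(2a-1)!}$, which grows exponentially in $a$; so naively keeping only the factors $\frac{a}{i} \leq \frac{1}{2}$ for $i \geq 2a$ and discarding the rest does not beat $2^{-r}$. Allowing $k$ to be as large as $2a^2 + r$ (quadratic, not linear, in $a$) is precisely what provides enough spare factors of $\frac{1}{2}$ in the tail to absorb this exponential head, and recognizing the head as a single term of the exponential series, hence at most $e^a \leq 2^{2a}$, is the clean way to carry out the bookkeeping.
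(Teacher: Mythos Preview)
Your argument is correct. Starting from \eqref{eqn:bnd-remainder}, replacing $x$ by $a$ (since $b\geq 1$), writing $\frac{a^{k+1}}{(k+1)!}=\prod_{i=1}^{k+1}\frac{a}{i}$, and splitting at $i=2a$ all go through cleanly; the bound $\frac{a^{2a-1}}{(2a-1)!}\leq e^a\leq 2^{2a}$ for the head and $(\tfrac12)^{k-2a+2}$ for the tail combine to $2^{4a-2-k}$, and the identity $4a-2-2a^2=-2(a-1)^2$ finishes the job.

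The paper follows a different route. It sets $K=a^2$, first establishes the intermediate estimate $\frac{x^k}{k!}<\bigl(\tfrac{x}{K}\bigr)^{k-2K}$ for $k\geq 2K$ (via $K^{k-2K}<\frac{k!}{(2K)!}$ and $(x^2/K)^K<1$), and then separately determines how large $k-2K$ must be so that $(x/K)^{k-2K}\leq 2^{-r}$, using $\lg(K/x)=\lg a+\lg b\geq 1$. Your decomposition is more direct: instead of passing through the auxiliary geometric quantity $(x/K)^{k-2K}$, you bound the head of the product in one stroke by recognizing it as a single term of the exponential series, which immediately gives the $2^{2a}$ factor and makes the final completion-of-the-square $-2(a-1)^2$ transparent. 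The paper's version, on the other hand, keeps track of $b$ a bit longer and would in principle allow a sharper threshold when $b$ is large, though it does not exploit this and arrives at the same $2a^2+r$ bound.
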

\begin{proof}
  We start with a simple observation: There exists $K \in \mathbb{N}$ such
  that all $k \geq K$ satisfy $x,x^2 < k$. One can, for instance, take $K =
  a^2$. Now, we claim that for all $k \geq 2K$ the following hold.
  \[
    \frac{x^k}{k!} < \left(\frac{x}{K} \right)^{k - 2K} < 1
  \]
  The right inequality holds because of our choice of $K$. For the left
  inequality, we observe the following.
  \begin{align*}
    \frac{x^k}{k!} & {} < \left(\frac{x}{K} \right)^{k - 2K}
    \frac{x^{2K}}{(2K!)} & \text{because } K^{k-2K} < \frac{k!}{(k-2K)!}\\
    & {} < \left(\frac{x}{K} \right)^{k - 2K}
    \frac{(x^{2})^K}{K^K} & K^K < \frac{(2K)!}{K!} < (2K)!\\
    & {} < \left(\frac{x}{K} \right)^{k - 2K} &
    \frac{x^2}{K} < 1
  \end{align*}
  This means that for $k \geq 2K$ the error decreases exponentially.

  It still remains to determine how much
  larger than $2K$ do we need $k$ to be so that we get an error of at most
  $2^{-r}$. For this, consider the following.
  \begin{align*}
    \left(\frac{x}{k}\right)^i \leq 2^{-r} & {} \iff i \lg \frac{x}{K} \leq -r \\
    & {} \iff i \geq \frac{-r}{\lg \frac{x}{K}} & \text{since } x < K\\
    & {} \iff i \geq \frac{r}{\lg K - \lg x} = \frac{r}{\lg a^2 - \lg
      \frac{a}{b}}\\
    & {} \iff i \geq \frac{r}{\lg a + \lg b}
  \end{align*}
  Since the last inequality holds when $i \geq r$, we can choose $k \geq 2a^2 + r$. The result thus follows by \autoref{eqn:bnd-remainder}.
\end{proof}

From the preceding discussion we get a rational approximation of the success
probability of the underlying binomial distributions in our model. We still
need to determine how to compute it efficiently. It will turn out that we can
do so using only polynomial space (in terms of the number of bits required to
write $a$, $b$, and $r$ in binary). To be precise, we will be able to query
the $i$-th bit of the numerator or denominator (encoded in binary) of the
rational approximation.

\begin{lem}\label{lem:pspace-succ-prob}
    Given $a, b, r, i \in \mathbb{N}$ such that $a,b,r,i > 0$, all encoded in
    binary, we can compute the $i$-th bit of $n$ or $d$ in
    $\nicefrac{n}{d} = P_k(\nicefrac{a}{b})$, where $k = 2a^2 +
    r$, using only space $(\lg (abri))^{O(1)}$.
\end{lem}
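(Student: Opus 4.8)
The plan is to write $P_k(\nicefrac{a}{b})$ over the common (not necessarily reduced) denominator $d = k!\, b^{k}$, so that
\[
  \frac{n}{d} = P_k\!\left(\frac{a}{b}\right), \qquad n = \sum_{j=0}^{k} (-1)^{j}\, a^{j}\, b^{k-j}\, \frac{k!}{j!},
\]
where $k = 2a^{2}+r$ and $\tfrac{k!}{j!} = \prod_{m=j+1}^{k} m$. Although $n$ and $d$ have exponentially many bits as functions of the bit-lengths of $a,b,r$, each is built from sums and products of \emph{small} integers: $d$ is an iterated product of at most $2k$ integers, each of bit-length $O(\lg(abk)) = O(\lg(abr))$ (namely $1,2,\dots,k$ together with $k$ copies of $b$), and $n$ is a signed sum of $k+1$ terms $T_{j} = a^{j} b^{k-j}\tfrac{k!}{j!}$, each itself an iterated product of at most $2k$ such small integers. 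The number $k$ of factors/terms and the quantity $\lg(abr)$ are both polynomially bounded in $\lg(abr)$; moreover the requested index $i$ either exceeds the bit-length of $n$ (resp.\ $d$) --- in which case the answer is $0$ --- or fits a position counter of polynomial bit-length $O(\lg(abr))$, since that bit-length is itself of polynomial bit-length.

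The core of the argument is the standard observation that bit-access to such numbers is possible in polynomial \emph{space} by recomputing bits on demand and propagating carries, even though the numbers are never materialised and the running \emph{time} is exponential. (Known logspace algorithms for iterated multiplication would also apply, but a direct argument suffices for the coarser PSPACE bound.) Concretely: \textbf{(i)} to obtain the $i$-th bit of a sum $\sum_{\ell=1}^{M} u_{\ell}$, given bit-access to each $u_{\ell}$ in space $s$ and with $M$ of polynomial bit-length, one sweeps positions $0,1,\dots,i$ while maintaining a carry register of $O(\lg M)$ bits, reading at each position one bit of each $u_\ell$ sequentially while reusing the $s$ bits of work space; this costs $O(s+\lg M+\lg i)$ space. \textbf{(ii)} To obtain the $j$-th bit of a product $u\cdot v$, write $u\cdot v = \sum_{t} 2^{t}[\,u_{t}=1\,]\,v$ and reduce to (i) with shifted copies of $v$: the $j$-th bit equals $\big(\sum_{t=0}^{j} u_{t} v_{j-t} + (\text{carry})\big)\bmod 2$, computed with an inner loop over $t$ and an accumulator of $O(\lg j)$ bits. \textbf{(iii)} To obtain the $i$-th bit of an iterated product $\prod_{\ell=1}^{M} c_{\ell}$, split it as a balanced binary tree, $\prod_{\ell \le M} c_\ell = \big(\prod_{\ell \le M/2} c_\ell\big)\cdot\big(\prod_{M/2 < \ell \le M} c_\ell\big)$, apply (ii) at the root, and recurse into the two halves to answer the bit queries it makes; the recursion has depth $O(\lg M)$ and every stack frame stores only a position counter, a carry, an inner-loop counter and an accumulator, each of polynomial bit-length.

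Assembling: bit-access to $d$ is one instance of (iii) on $\le 2k$ small factors; bit-access to each $T_j$ is another instance of (iii); and bit-access to $|n|$ is an instance of (i) whose $k+1$ summands are the $T_j$, after splitting the signed sum into the even-$j$ part $P$ and odd-$j$ part $Q$, deciding the sign of $P-Q$ by scanning from the top position downward, and computing $|P-Q|$ with a one-bit borrow. Tracking space, the recursion depth is $O(\lg k) = O(\lg(ar))$ and each frame uses $O(\lg(abr))$ bits, so the total is $(\lg(abri))^{O(1)}$, as required. I expect the main obstacle to be purely the bookkeeping: one must verify that \emph{every} stored quantity --- factor indices, bit-position counters reaching into the exponential-length regime, carries, the convolution accumulators of (ii), and the $O(\lg k)$ nested recursion frames --- has bit-length polynomial in $\lg(abr)$, so that exponential blow-up appears only in the running time (the sweeps over bit positions) and never in the space. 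A small case analysis around signs and around an out-of-range index $i$ is also needed, but it does not affect the space bound.
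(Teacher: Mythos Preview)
Your proposal is correct and follows the same high-level strategy as the paper: put $P_k(a/b)$ over the common denominator $d=k!\,b^k$ so that $n$ and $d$ are, respectively, a sum of iterated products and an iterated product of integers each of bit-length $O(\lg(abr))$, and then argue that bit-access to such expressions is possible in space polynomial in $\lg(abri)$ even though the expressions themselves have exponentially many bits. Your numerator $\sum_j(-1)^j a^j b^{k-j}\,k!/j!$ is in fact the correct one; the paper's displayed formula has a typo, writing $(k-i)!$ where $k!/i!$ is meant.

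Where you diverge is in \emph{how} the space bound is justified. The paper appeals to the classical facts that iterated addition and iterated multiplication are in $\mathbf{NC}^1$, that sums of products are therefore in $\mathbf{NC}^2$, and that $\mathbf{NC}^i\subseteq\mathrm{DSPACE}(\log^i n)$; since the implicit ``input list'' here has length polynomial in $k$ (hence exponential in the bit-length of the given data), polylogarithmic space in that virtual input becomes polynomial space in the actual input. You instead give a self-contained construction: carry-propagation for sums, schoolbook convolution for a single product, and a balanced binary splitting for iterated products, with an explicit accounting of the $O(\lg k)$ recursion depth and the polylogarithmic size of each stack frame. Both routes are sound; the paper's is shorter and reusable (it invokes the same $\mathbf{NC}^2$ observation again in later lemmas), while yours is more elementary and avoids any dependence on the circuit-complexity literature. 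Your treatment of the sign of $n$ via the even/odd split and a top-down comparison of $P$ and $Q$ is a detail the paper leaves implicit.
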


Before we go into the proof of the claim, some definitions are in order.
We write $\mathbf{NC}^i$ for the class of decision problems solvable in time
$O(\log^i n)$, with $n$ the size of the input, on a parallel computer with a
polynomial number of processors~\cite{vollmer99,arorabarak09}.
It is known that any problem in $\mathbf{NC}^i$ can be solved deterministically using
space $O(\lg^i(n))$~\cite[Theorem 4]{DBLP:journals/siamcomp/Borodin77}. As a concrete example with $i = 1$, one can show that computing (a chosen bit of) the sum
or product of a list of binary-encoded integers is in
$\mathbf{NC}^1$~\cite[Chapter 1]{vollmer99} by exploiting associativity to realize the operations via binary splitting. For an example with $i = 2$, we ask, given a list of lists,
to compute the sum of the products (of the inner lists). Based on the previous example, this can be done in
$\mathbf{NC}^2$. 

\begin{proof}[Proof of \autoref{lem:pspace-succ-prob}]
  Recall the form of the approximation we are proposing.
  \[
    P_k(\nicefrac{a}{b}) = \frac{n}{d} = \frac{1}{b^k k!} \sum_{i=0}^k
    (-a)^i b^{k-i} (k-i)!
  \]
  This means that $d$ can be taken to be some power of $b$ and the factorial
  of $k$ while $n$ is a sum of products of powers factorials. From a
  complexity point of view, the complication is that $k \geq a^2$ and $a$ is given in binary, so $n$ and $d$ could require exponentially many bits
  to represent. (This is also why we focus on determining the value of a
  single bit only.) Note that computing $n$ and $d$ both amount to computing (sums of)
  products of at most $k^3$ binary-encoded integers. Recall that this problem
  is in $\mathbf{NC}^2$ when $k$ is small, i.e. given in unary.  This means it can also be solved using polylogarithmic space, so we conclude, due to $k$ being exponentially large, that we can solve it using polynomial space only.
\end{proof}
\noindent 
The argument used above will be repeated two times in the sequel to establish
that computing bits of transition probabilities in the induced Markov chain
and bits of entries of its fundamental matrix can be done using polynomial
space only. Finally, we observe that we can also query bits of an
approximation of $1-\exp(-x)$ in polynomial space by following all the same
steps while changing $P_k(x)$ to remove the first summand and flip all signs.

\subsection{Bits of the induced probabilities in polynomial space}
Consider a given acyclic binomial chain $(\vec{v},\vec{T})$. The main message in this subsection is that we can query bits of the numerator and denominator of \autoref{eqn:prob-formula} (reproduce explicitly below, for convenience) in polynomial space with respect to the encoding size of the given BC. Furthermore, we establish bounds regarding the accumulation of the error because of our usage of a rational approximation of the success probabilities (cf. previous subsection).
\begin{equation}\label{eqn:prob-form1}
    \sum_{\vec{M} \in \wit{\vec{u}}{\vec{w}}} 
    \prod_{(i,j) \in \supp{\vec{T}}} \binom{u_i}{M_{ij}} (\overbrace{1 - \exp(-T_{ij}(\vec{u}))}^{\text{computable in \textbf{PSPACE}}})^{M_{ij}} (\underbrace{\exp(-T_{ij}(\vec{u}))}_{\text{this too}})^{u_i - M_{ij}}
\end{equation}
Recall that all integers and rational numbers are encoded in binary. Now, since $\vec{T}$ is given as an explicit matrix, the product over $(i,j) \in \supp{\vec{T}}$ is small (i.e. a product of a polynomial number of terms w.r.t. the size of the input). The sum over $\vec{M} \in \wit{\vec{u}}{\vec{w}}$ is exponential though, because of the binary encoding. Moreover, the factorials arising from the binomial coefficients are also large since the components of $\vec{u}$ are given in binary.

We want to argue, as in the previous subsection, that sums of products being in $\mathbf{NC}^2$ and thus also in space $O(\lg^2(n))$, with $n$ the size of the input, gives us our polynomial-space result. We first need to take care that the numerator and denominator are clearly sums of products or just products. For this, we rewrite \autoref{eqn:prob-form1} as follows, where we are already using the approximation $P_k$ from the previous section instead of the (possibly irrational) success probabilities. We write $\tilde{P}$ for this approximate transition probability function. 
\begin{align}
    \tilde{P}(\vec{u},\vec{w}) = {} & \sum_{\vec{M} \in \wit{\vec{u}}{\vec{w}}} 
    \prod_{(i,j) \in \supp{\vec{T}}} \binom{u_i}{M_{ij}} \left(\frac{a_{ij}(\vec{u})}{b_{ij}(\vec{u})}\right)^{M_{ij}} \left(\frac{c_{ij}(\vec{u})}{d_{ij}(\vec{u})}\right)^{u_i - M_{ij}} \label{eqn:sop-with-err}\\
    {}={} & \mu \sum_{\vec{M}} \label{eqn:num}
    \prod_{(i,j)} \binom{u_i}{M_{ij}} a_{ij}(\vec{u})^{M_{ij}}  c_{ij}(\vec{u})^{u_i - M_{ij}} \prod_{\vec{M'} \in \wit{\vec{u}}{\vec{v}} \setminus \{\vec{M}\}} b_{ij}(\vec{u})^{M'_{ij}} c_{ij}(\vec{u})^{u_i - M'_{ij}}\\
    & \text{where } \mu \text{ is } \prod_{\vec{M} \in \wit{\vec{u}}{\vec{w}}} \prod_{(i,j) \in \supp{\vec{T}}} b_{ij}(\vec{u})^{-M_{ij}} d_{ij}(\vec{u})^{M_{ij} - u_i} \label{eqn:denom}
\end{align}
Above, it is clear that \autoref{eqn:num} can be used as the numerator of the result and \autoref{eqn:denom} as the reciprocal of the denominator. These are both in the form of a sum of products or a product, as required.

\begin{lem}\label{lem:trans-prob-approx}
    Given a BC $(\vec{v},\vec{T})$, state vectors $\vec{u},\vec{w} \in \mathbb{N}^k$, and $r,i \in \mathbb{N}$ (in binary) with $r,i > 0$, we can compute the $i$-th bit of $n$ or $d$ in $\nicefrac{n}{d} = \tilde{P}(\vec{u},\vec{w})$ using only polynomial space. Moreover, $|P(\vec{u},\vec{w}) - \tilde{P}(\vec{u},\vec{w})| \leq 2^{-r}$.
\end{lem}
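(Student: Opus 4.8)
The plan is to establish the two claims separately --- the complexity claim by reusing the argument from the proof of \autoref{lem:pspace-succ-prob}, and the error bound by a coupling argument --- after first fixing the rational approximation as follows. For each edge $(i,j) \in \supp{\vec{T}}$, write $T_{ij}(\vec{u}) = \nicefrac{a_{ij}}{b_{ij}}$ in reduced form; since $\vec{T}$ and $\vec{u}$ are part of the input, $a_{ij}$ and $b_{ij}$ have polynomially many bits. Choose $r'$ to be $r$ plus a constant times the bit-length of $k^2 \norm{1}{\vec{u}}$, so that $r'$ still has polynomially many bits and $k^2\norm{1}{\vec{u}}\,2^{-r'} \leq 2^{-r}$, and approximate $\exp(-T_{ij}(\vec{u}))$ by the rational $\tilde q_{ij} = \nicefrac{c_{ij}(\vec{u})}{d_{ij}(\vec{u})}$ coming from the order-$(2a_{ij}^2 + r')$ Taylor polynomial $P_{2a_{ij}^2+r'}(\nicefrac{a_{ij}}{b_{ij}})$, and $1-\exp(-T_{ij}(\vec{u}))$ by its sign-flipped variant $\tilde p_{ij} = \nicefrac{a_{ij}(\vec{u})}{b_{ij}(\vec{u})}$. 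By \autoref{lem:taylor-error} each is within $2^{-r'}$ of what it approximates, and by \autoref{lem:pspace-succ-prob} (and the observation following it) the integers $a_{ij}(\vec{u}), b_{ij}(\vec{u}), c_{ij}(\vec{u}), d_{ij}(\vec{u})$, although of exponential magnitude, have polynomial bit-length and admit polynomial-space bit extraction. Since $\exp(-T_{ij}(\vec{u})) \in (0,1]$, I would additionally clamp $\tilde q_{ij}$ (and hence $\tilde p_{ij} = 1 - \tilde q_{ij}$) into $[0,1]$ and define $\tilde P$ from these clamped values: clamping cannot increase the approximation error, it remains polynomial-space computable (one extra comparison against $0$ and against $1$), and it turns every $\tilde p_{ij}$ into a genuine success probability.

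For the complexity claim, I would argue exactly as in the proof of \autoref{lem:pspace-succ-prob}, now applied to (the clamped analogue of) \autoref{eqn:num} and \autoref{eqn:denom}: the numerator $n$ is a sum of $|\wit{\vec{u}}{\vec{w}}|$-many products and the denominator $d$ a single product, whose factors are the binomial coefficients $\binom{u_i}{M_{ij}}$ (each a quotient of products of $O(\norm{1}{\vec{u}})$ integers of polynomial bit-length) and powers, with exponents at most $\norm{1}{\vec{u}}$, of the four integers above. Every such factor is an integer of exponential magnitude but polynomial bit-length, and the numbers of terms and of factors are likewise at most exponential, i.e.\ of polynomial bit-length. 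Hence extracting a chosen bit of $n$ or of $d$ reduces to computing a chosen bit of a sum of products of a list of binary-encoded integers, which lies in $\mathbf{NC}^2$ and is therefore solvable in polylogarithmic space in the size of its input; instantiated here, where the logarithm of every quantity involved is polynomial, this gives a polynomial-space procedure. (If a clamped $\tilde p_{ij}$ equals $0$ or $1$, the corresponding factor degenerates and the same reasoning applies to the resulting simpler expression.)

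For the error bound, I would regard both $P(\vec{u},\vec{w})$ and $\tilde P(\vec{u},\vec{w})$ as the probability of one and the same event --- that the outcome tuple $(M_{ij})_{(i,j)\in\supp{\vec{T}}}$ lies in $\wit{\vec{u}}{\vec{w}}$ --- computed under the two product distributions $\bigotimes_{(i,j)\in\supp{\vec{T}}} B(u_i,p_{ij})$ and $\bigotimes_{(i,j)\in\supp{\vec{T}}} B(u_i,\tilde p_{ij})$, where $p_{ij} = 1-\exp(-T_{ij}(\vec{u}))$. Consequently $|P(\vec{u},\vec{w}) - \tilde P(\vec{u},\vec{w})|$ is at most the total-variation distance between these two product measures, which is at most the sum over edges of the total-variation distances between the corresponding marginals. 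Writing $B(u_i,p)$ as the sum of $u_i$ independent Bernoulli trials with success probability $p$ and coupling the trials so that the $\ell$-th pair disagrees with probability $|p_{ij}-\tilde p_{ij}|$, one bounds each marginal distance by $u_i\,|p_{ij}-\tilde p_{ij}| \leq u_i\,2^{-r'}$. Summing over $\supp{\vec{T}}$ and using $|\supp{\vec{T}}| \leq k^2$ and the choice of $r'$ yields $|P(\vec{u},\vec{w}) - \tilde P(\vec{u},\vec{w})| \leq k^2\norm{1}{\vec{u}}\,2^{-r'} \leq 2^{-r}$.

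I expect the only real difficulty to be bookkeeping: verifying that, although the induced Markov chain and essentially every integer appearing above is of exponential size, all the magnitudes involved are nonetheless only exponential --- equivalently, have polynomially bounded bit-length --- so that the ``$\mathbf{NC}^2 \subseteq$ polylogarithmic space'' argument, used at this scale, gives polynomial rather than exponential space. The one genuinely quantitative ingredient is the choice of $r'$: the populations $u_i$ occur as exponents in the binomial probability mass functions, so the per-edge approximation error $2^{-r'}$ is amplified by a factor $u_i$, and $r'$ must be taken polynomially larger than $r$ (and than the encoding size of $\vec{u}$) to compensate.
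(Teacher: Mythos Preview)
Your complexity argument is essentially the paper's: both reduce the bit-extraction problem for $n$ and $d$ to iterated sums of products of integers with polynomial bit-length, and invoke the $\mathbf{NC}^2 \subseteq \mathrm{polylog\text{-}space}$ fact at an exponential scale to conclude polynomial space.

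Your error analysis, however, takes a genuinely different route. The paper tracks how the per-edge approximation error $2^{-r'}$ propagates algebraically through \autoref{eqn:sop-with-err}: multiplication of sub-unit quantities at most triples the error, exponentiation by repeated squaring yields polynomial tripling, and the binomial coefficient amplifies by at most $\norm{\infty}{\vec{u}}^{\norm{\infty}{\vec{u}}}$, leading to a bound of the form $2^{\norm{\infty}{\vec{u}}^2} 3^c 2^{-r'}$ and a choice $r' \approx r \cdot \norm{\infty}{\vec{u}}^2$. You instead observe that both $P(\vec{u},\vec{w})$ and $\tilde P(\vec{u},\vec{w})$ are the probabilities of the same event under two product-of-binomial laws, bound their difference by the total-variation distance between the products, use subadditivity of total variation over independent factors, and couple each binomial as a sum of $u_i$ Bernoullis to obtain $|P-\tilde P| \le \sum_{(i,j)} u_i\,|p_{ij}-\tilde p_{ij}| \le k^2 \norm{1}{\vec{u}}\,2^{-r'}$. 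This is cleaner and quantitatively sharper: it only requires $r' = r + O(\lg(k^2\norm{1}{\vec{u}}))$, i.e.\ additively polynomial in the input bit-length, rather than multiplicatively so. The clamping of $\tilde q_{ij}$ into $[0,1]$ is a necessary ingredient of your argument (it is what makes the $\tilde p_{ij}$ genuine success probabilities so that the coupling is well defined) and is a detail the paper's algebraic analysis does not need and does not address; your handling of it is correct and does not affect the space bound.
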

\begin{proof}
    The argument to prove $\tilde{P}(\vec{u},\vec{w})$ can be computed using
    polynomial space is the same as the one used in the proof of
    \autoref{lem:pspace-succ-prob}. We just need to make use of
    \autoref{eqn:num}, \autoref{eqn:denom}, and \autoref{lem:pspace-succ-prob}
    to obtain (exponentially long) sums of products of terms whose bits can be
    queried in polynomial space. It remains to argue that the error bound
    holds, and that without having to make $r$ (and thus $k$) in
    our use of \autoref{lem:pspace-succ-prob} too large.

    Consider \autoref{eqn:sop-with-err} and note that the only terms with
    errors are the fractions arising from our approximation of the success
    probabilities. Multiplying numbers $x,y$ with $0 \leq x, y < 1$
    approximated with an error $0 < \varepsilon < 1$ results in at most
    tripling the error. Similarly, addition results in at most doubling the
    error (this holds in general though, even if the assumptions stated for
    multiplication do not hold). Now, by analyzing the exponentiation using
    repeated squaring, we have at most polynomial tripling of the error in the
    terms $a_{ij}(\vec{u})^{M_{ij}} b_{ij}(\vec{u})^{-M_{ij}}$ and
    $c_{ij}(\vec{u})^{u_i - M_{ij}} d_{ij}(\vec{u})^{M_{ij} - u_i}$. Now,
    while the product of these two terms is smaller than $1$, the binomial
    coefficient is not. In this case, multiplication affects the error much
    more and we get that it is amplified by at most
    $\norm{\infty}{\vec{u}}^{\norm{\infty}{\vec{u}}}$. Fortunately, the
    product with the binomial coefficient once more yields a value smaller
    than $1$. So, to summarize these observations in symbols, if we started
    with an error of at most $2^{-r'}$ from our use of
    \autoref{lem:taylor-error} and \autoref{lem:pspace-succ-prob} then we get
    that:
    \[
        |P(\vec{u},\vec{w}) - \tilde{P}(\vec{u},\vec{w})| \leq \norm{\infty}{\vec{u}}^{\norm{\infty}{\vec{u}}}3^c 2^{-r'} = 2^{(\lg \norm{\infty}{\vec{u}})\norm{\infty}{\vec{u}}}3^c 2^{-r'} \leq 2^{\norm{\infty}{\vec{u}}^2}3^c 2^{-r'}
    \]
    for some small $c \in \mathbb{N}$ --- which we could even assume to be encoded in unary. Since $3^c 4^{-c} \leq 1$, it suffices to use $r' = 2rc \norm{\infty}{\vec{u}}^2$ when appealing to \autoref{lem:pspace-succ-prob} to get the required bounds. Importantly, this means the complexity bound claimed above does hold since $r'$ requires only linearly many more bits to be encoded in binary compared to $r$ and $\vec{u}$.    
\end{proof}

The last step of our algorithm is arguably the most complex. We intend to
compute the fundamental matrix, i.e. the inverse of the induced Markov chain
$\mathcal{C}_{\mathcal{B}}$ using polynomial space only. For that, and also as
a sanity check, we first wonder how the absolute error $2^{-r}$ can be chosen
to be certain it is smaller than the transition probabilities.
\begin{rem}[Avoiding small probabilities with large
  errors]\label{rem:small-rel-error}
  Let $\vec{u}$ and $\vec{w}$ be vector states reachable from $\vec{v}$ such that $\tilde{P}(\vec{u},\vec{w}) > 0$ and consider the question of determining a sufficient lower bound for $r$ so
  that $\tilde{P}(\vec{u},\vec{w}) > 2^{-r}$.
  Ideally $r$ can be encoded in binary using a polynomial number of bits with
  respect to the rest of the input. To see that this is indeed the case, we
  can study \autoref{eqn:prob-form1} and find an upper bound for it under the
  assumption that it is not zero.
  \begin{equation}\label{eqn:bnd-probs}
  \begin{aligned}
    & \sum_{\vec{M} \in \wit{\vec{u}}{\vec{w}}} 
      \prod_{(i,j) \in \supp{\vec{T}}} \binom{u_i}{M_{ij}} (1 -
      \exp(-T_{ij}(\vec{u})))^{M_{ij}} (\exp(-T_{ij}(\vec{u})))^{u_i -
      M_{ij}}\\
    {} \geq {} &
      \min_{\vec{M} \in \wit{\vec{u}}{\vec{w}}} 
      \min_{(i,j) \in \supp{\vec{T}}} \left((1 -
      \exp(-T_{ij}(\vec{u})))^{M_{ij}} (\exp(-T_{ij}(\vec{u})))^{u_i -
    M_{ij}}\right)^{k^2}\\
    {} \geq {} &
      \min_{(i,j) \in \supp{\vec{T}}}
      \min\left((1 -
      \exp(-T_{ij}(\vec{u})))^{u_i}
    (\exp(-T_{ij}(\vec{u})))^{u_i}\right)^{k^2}
  \end{aligned}
  \end{equation}
  Now, write $n$ for the number of bits used to encode the BC
  $(\vec{v},\vec{T})$. By
  \autoref{lem:acyclic-bc-finite}, the number of bits required to encode
  a state $\vec{u}$ reachable from $\vec{v}$ is at most $n + k2^{k^2}$.
  Furthermore, the number of bits required to encode the vectors for the
  functions $T_{ij}$ is also $n$ so $T_{ij}(\vec{u})$, for some reachable
  $\vec{u}$, satisfies the following when it is nonzero:
  \[
    \frac{1}{2^{n+1}} \leq T_{ij}(\vec{u}) \leq 2^{(k+2)n + k2^{k^2}}.
  \]

  From the above inequalities we get that $\exp(-T_{ij}(\vec{u}))$ can be made
  closer to $0$ than to $1$ (assuming $T_{ij}(\vec{u})$ is not $0$). Together
  with \autoref{eqn:bnd-probs} we get that if $\tilde{P}(\vec{u},\vec{w})$ is
  not zero then it satisfies the following.
  \begin{equation}\label{eqn:lower-bnd-prob}
    \tilde{P}(\vec{u},\vec{w}) > \exp(-2^{(k+2)n + k2^{k^2}})^{k^2} \geq 2^{-2k^2 2^{(k+2)n + k2^{k^2}}}
  \end{equation}
  Hence, by choosing $r$ larger than $2k^2 2^{(k+2)n + k2^{k^2}}$, which can
  be encoded in a polynomial number of bits in $n$ (because $k$ is fixed), we
  get the desired inequality.
\end{rem}

\subsection{Time to termination in polynomial space} Given, a
nonsingular $n \times n$ matrix $\vec{A}$ with $n$-bit integer entries
(encoded in binary), we must output its inverse $\vec{A}^{-1}$ in the form of
a pair $(\adj(\vec{A}),\det(\vec{A}))$ consisting of the adjugate and the
determinant of $\vec{A}$. It is known that computing chosen bits of the
numerator or denominator of a chosen entry of $A^{-1}_{ij} =
\nicefrac{\adj(\vec{A})_{ij}}{\det(\vec{A})}$ is in
$\mathbf{NC}^2$~\cite[Proposition 5.2]{cook85}.

Recall that the set of reachable states in the Markov chain induced by a given
BC $(\vec{v},\vec{T})$ is finite, yet exponential (due to the binary encoding
of the integers)~\autoref{lem:acyclic-bc-finite}.
Now, we would like to appeal to \autoref{thm:abs-hit} in combination with the
above observations to obtain a polynomial-space algorithm for obtaining the
bits of the numerator and denominator of a chosen $k^A_i$ from the induced
Markov chain. 

\begin{thm}\label{thm:eoe-pspace}
  Given a BC $(\vec{T},\vec{v})$ and $r, i \in \mathbb{N}$ in binary with $r, i >
  0$, we can compute the $i$-th bit of $n$ or $d$ in $\nicefrac{n}{d} =
  \tilde{k}^A_{\vec{v}}$ using only polynomial space. Moreover,
  $|\tilde{k}^A_{\vec{v}} - k^A_{\vec{v}}| \leq 2^{-r}$.
\end{thm}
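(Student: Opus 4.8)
The plan is to assemble the final algorithm as the composition of three polynomial-space computations, one layered on top of the other, exactly as the previous subsections have prepared. At the innermost level, \autoref{lem:pspace-succ-prob} lets us query bits of a rational approximation of each success probability $1 - \exp(-T_{ij}(\vec{u}))$ (and its complement) in space polynomial in the bit-sizes involved. At the middle level, \autoref{lem:trans-prob-approx} lifts this to bits of the approximate transition probabilities $\tilde{P}(\vec{u},\vec{w})$, again in polynomial space and with a controllable error. At the outer level, we invoke \autoref{thm:abs-hit}: since $\mathcal{C}_{\mathcal{B}}$ is absorbing (\autoref{thm:acyclic-absorbing}), the expected hitting time vector for the absorbing set $A$ is $\vec{N}\vec{1} = (\vec{I} - \vec{Q})^{-1}\vec{1}$, so $k^A_{\vec{v}}$ is the sum of the entries of the row of $\vec{N}$ indexed by $\vec{v}$, i.e. a sum of entries of the inverse of $\vec{I} - \vec{Q}$.

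First I would fix the state space: by \autoref{lem:acyclic-bc-finite} the reachable set $R$ is finite, every state is encodable in $n + k2^{k^2}$ bits, and the nonabsorbing (transient) states, together with the matrix $\vec{Q}$, are obtained from the induced Markov chain restricted to $R$. This matrix is exponentially large but its entries have bits queryable in polynomial space via \autoref{lem:trans-prob-approx} (note that membership of a vector in $R$ is itself decidable in polynomial space, by guessing a run of length at most $2^{k^2}$ and using \autoref{lem:pspace-succ-prob} to check positivity of each step via \autoref{lem:pos-prob}). To handle the fact that the matrix has rational rather than integer entries, clear denominators: write $\vec{I} - \tilde{\vec{Q}}$ as $\nicefrac{1}{D}\,\vec{A}$ where $D$ is the (common) product of all denominators and $\vec{A}$ has integer entries, both of whose bits are computable in polynomial space as (exponentially long) products. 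Then $(\vec{I} - \tilde{\vec{Q}})^{-1} = D\,\vec{A}^{-1}$, and bits of a chosen entry of $\vec{A}^{-1} = \nicefrac{\adj(\vec{A})}{\det(\vec{A})}$ are in $\mathbf{NC}^2$ by~\cite[Proposition 5.2]{cook85}; since the matrix is of exponential dimension, $\mathbf{NC}^2$ translates to polynomial space in $n$. Summing the relevant row and multiplying by $D$ then yields bits of $\tilde{k}^A_{\vec{v}}$, all within polynomial space (composition of polynomial-space transducers remains polynomial-space).

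The main obstacle is the error analysis, i.e. showing $|\tilde{k}^A_{\vec{v}} - k^A_{\vec{v}}| \leq 2^{-r}$ while keeping the internal precision parameter $r'$ polynomially bounded in $n$. Perturbing each entry of $\vec{Q}$ by at most $2^{-r''}$ perturbs $\vec{N} = (\vec{I} - \vec{Q})^{-1}$; the standard bound is $\norm{}{\tilde{\vec{N}} - \vec{N}} \lesssim \norm{}{\vec{N}}^2 \norm{}{\tilde{\vec{Q}} - \vec{Q}}$ provided the perturbation is small relative to $1/\norm{}{\vec{N}}$. Here $\norm{}{\vec{N}}$ can be bounded crudely: entries of $\vec{N}$ are expected visit counts, hence at most (number of states) $\times$ (reciprocal of the smallest positive transition probability), and \autoref{rem:small-rel-error} already gives that this reciprocal is at most $2^{2k^2 2^{(k+2)n + k2^{k^2}}}$, a number with polynomially many bits in $n$. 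Thus $\norm{}{\vec{N}}$ is at most doubly exponential, i.e. has $2^{O(n)}$ bits, so taking $r'' = r + 2^{O(n)}$ — still polynomially many bits to write down since $k$ is fixed — makes the propagated error at most $2^{-r}$, and this $r''$ is what we feed back into \autoref{lem:trans-prob-approx}. One must also confirm that the exponential number of reachable states does not blow up the count of states beyond $2^{O(n)}$; this is immediate from \autoref{lem:acyclic-bc-finite}, since each reachable state has at most $n + k2^{k^2}$ bits and hence there are at most $2^{O(n)}$ of them. Carefully tracking these magnitudes — and in particular checking that every precision parameter introduced along the chain of reductions still admits a binary encoding polynomial in $n$ — is the delicate part; everything else is the bookkeeping of composing polynomial-space transducers.
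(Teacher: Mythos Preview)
Your approach is correct in outline and reaches the same PSPACE bound, but the paper takes a structurally cleaner route that you miss. The key observation you do not use is that, under the lexicographic ordering from \autoref{sec:acyclic-bcs}, \autoref{lem:local-order} forces $\vec{I} - \vec{Q}$ to be \emph{upper triangular}. The paper exploits this heavily: the determinant and every nonzero entry of the adjugate of an upper-triangular matrix are themselves just (signed) products of entries of $\vec{I}-\vec{Q}$, with no sums involved, so one can query bits of numerator or denominator of each entry of $\vec{N}$ directly, without first clearing a global common denominator $D$ as you propose. The paper explicitly remarks that your denominator-clearing step ``is not necessary'' precisely because of this shape. Triangularity also streamlines the error analysis: rather than invoking a matrix-perturbation bound of the form $\norm{}{\tilde{\vec{N}} - \vec{N}} \lesssim \norm{}{\vec{N}}^{2} \norm{}{\tilde{\vec{Q}} - \vec{Q}}$ (which requires a bound on $\norm{}{\vec{N}}$, and your ``expected visit count $\leq$ number of states times reciprocal of smallest positive probability'' is a bit loose as stated, since self-loop probabilities can be close to $1$), the paper reduces everything to a single scalar-division error $|x/y - a/b|$ where $a,b$ are products of entries of $\vec{I}-\vec{Q}$, and bounds that explicitly using the lower bound on $b$ from \autoref{rem:small-rel-error}. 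Your route is more general --- it would survive even without triangularity --- but the paper's structural shortcut is shorter and sidesteps the need to make the perturbation argument rigorous.
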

\begin{proof}
We first reorder the states to get $\vec{P}$ in its
canonical form (see \autoref{eqn:canon-form}). We
rely on the lexicographic ordering introduced in \autoref{sec:acyclic-bcs} to achieve this. Concretely, if we want the $i$-th state $\lexleq$-smaller than
$\vec{v}$, we can enumerate all vector states satisfying the bound from
\autoref{lem:acyclic-bc-finite} to find the maximal
state $\vec{w}$ smaller than $\vec{v}$ and repeat $i-1$ times from $\vec{w}$, all using polynomial space only. 
Further note that checking whether a state $\vec{w}$ is absorbing can be
implemented in polynomial space using \autoref{lem:trans-prob-approx} by
enumerating all states not equal to $\vec{w}$ and confirming the probability
to transition to them is $0$. (This last check can even be done
without having to approximate the success probabilities!)

Before applying the complexity result for inverting a matrix, we need to deal with the fact that we want to invert a matrix of rational numbers and not one of integers. The natural approach would be to factor out a common denominator (as we have done in previous subsections). However, because of the shape of $\vec{I} - \vec{Q}$ in our case, this is not necessary. By our choice of reordering of the states and
\autoref{lem:local-order}, we have that $\vec{I} - \vec{Q}$ is upper triangular. In turn, this means that its inverse $\vec{N} = \nicefrac{\adj(\vec{I} - \vec{Q})}{\det(\vec{I} - \vec{Q})}$ and thus also $\adj(\vec{I} - \vec{Q})$ are upper triangular. Since the nonzero entries of the the adjugate $\adj(\vec{I} - \vec{Q})$ are obtained as signed determinants of minors of $\vec{I} - \vec{Q}$ obtained by removing rows $\vec{u}$ and columns $\vec{w}$ with $ \lexleq \vec{w}$. Such minors will necessarily be upper triangular too. Therefore, their determinants are just products of entries of $\vec{I} - \vec{Q}$. Importantly, no sum is needed, so we can focus independently on numerators or denominators of the entries of $\vec{I} - \vec{Q}$ based on whether we want bits of the numerator or denominator of an entry of $\vec{N}$.

From the discussion above and the $\mathbf{NC}^2$ bound for matrix inversion
(and the polynomial-space algorithms from previous subsections) that the bits
of fundamental matrix $\vec{N}$ can be queried using polynomial space, we
still need to establish the same for $\vec{N}\vec{1}$. This is not a problem
since multiplying by $\vec{1}$ on
the right amounts to adding rows of $\vec{N}$, which can be done in
$\mathbf{NC}^1$ if $\vec{N}$ is given explicitly. Altogether, inverting a matrix and then
adding its rows can be done in $\mathbf{NC}^3$ thus also sequentially while
using only $O(\lg^3(n))$ space, with $n$ the size of the input. This means that
in our (exponentially large) induced Markov chain, the bits of the (numerator
and denominator) of $\vec{N}\vec{1}$ can be queried in polynomial space. When using the approximation of the success probabilities, we write
$\vec{\tilde{k}^A}$ instead of $\vec{k^A}$.

We have already argued why the complexity bound in \autoref{thm:eoe-pspace}
holds. It remains for us to prove that the error bounds are true. First, observe that we can ensure the error of the entries of $\adj(\vec{I}-\vec{Q})$ and $\det(\vec{I}-\vec{Q})$ is small. We do this by choosing the error $\varepsilon = 2^{-r'}$ allowed in our application of \autoref{lem:trans-prob-approx} exponentially smaller than the required one, i.e. $r$, to compensate for the increase in the error due to the multiplication of entries from $\vec{I}-\vec{Q}$ to obtain each entry. This can be done without affecting our complexity analysis since we can encode $r'$ using at most polynomially more bits than what was used to encode $r$. (See also the error-propagation analysis in the proof of \autoref{lem:acyclic-bc-finite} where we did almost the same). Now, since the right multiplication by $\vec{1}$ to get $\vec{N}\vec{1}$ can also be dealt with similarly, the only remaining complication is the error accumulated by division in computing $\nicefrac{\adj(\vec{I}-\vec{Q})}{\det(\vec{I}-\vec{Q})}$. Write $x$ and $y$ for the approximated numerator and denominator of any entry of the matrix $\nicefrac{\adj(\vec{I}-\vec{Q})}{\det(\vec{I}-\vec{Q})}$ and $a$ and $b$ for the actual values. Assuming we choose $r'$ as indicated in \autoref{rem:small-rel-error} so that $0 \leq \nicefrac{\varepsilon}{a}, \nicefrac{\varepsilon}{b} \leq 1$, the following hold:
\begin{align*}
    \frac{x}{y} ={} & \frac{a \pm \varepsilon}{b \pm \varepsilon} = \frac{a}{b} \left(\frac{1 \pm \frac{\varepsilon}{a}}{1 \pm \frac{\varepsilon}{b}}\right)\\
    {}={} & \frac{a}{b} \left(1 \pm \frac{\varepsilon}{a}\right)\left(\sum_{i=0}^\infty \left(\pm \frac{\varepsilon}{b}\right)^i \right) & \text{since } 0 \leq \frac{\varepsilon}{b} \leq 1\\
    {}={} & \left(\frac{a}{b} \pm \frac{\varepsilon}{b}\right)\left(\sum_{i=0}^\infty \left(\pm \frac{\varepsilon}{b}\right)^i \right)\\
    {}={} & \frac{a}{b} \pm \frac{\varepsilon}{b} +
    \left(\frac{a \pm \varepsilon}{b}\right)\left(\sum_{i=1}^\infty \left(\pm \frac{\varepsilon}{b}\right)^i \right)\\
    {}={} & \frac{a}{b} \pm \frac{\varepsilon}{b} +
    \left(\frac{a \pm \varepsilon}{b}\right)\left(\frac{1}{1 \pm \frac{\varepsilon}{b}} \right)\left(\frac{\varepsilon}{b}\right) & \text{again, as } 0 \leq \frac{\varepsilon}{b} \leq 1\\
    {}={} & \frac{a}{b} \pm \frac{\varepsilon}{b} +
    \left(\frac{a \pm \varepsilon}{b \pm \varepsilon}\right)\left(\frac{\varepsilon}{b}\right)
\end{align*}
and therefore, $|\nicefrac{x}{y} - \nicefrac{a}{b}|$ is the absolute value of the last expression above minus $\nicefrac{a}{b}$. We thus get the following inequalities.
\begin{align*}
    \left| \frac{x}{y} - \frac{a}{b} \right| = {} & \left| \frac{\varepsilon}{b} +
    \left(\frac{a \pm \varepsilon}{b \pm \varepsilon}\right)\left(\frac{\varepsilon}{b}\right) \right|\\
    {} \leq {} & \frac{\varepsilon}{b} +
    \left(\frac{a + \varepsilon}{b - \varepsilon}\right)\left(\frac{\varepsilon}{b}\right) & \text{because } 0 \leq \frac{\varepsilon}{a}, \frac{\varepsilon}{b} \leq 1\\
    {} = {} & \frac{\varepsilon}{b}\left(1 +
    \left(\frac{a + \varepsilon}{b - \varepsilon}\right)\right) 
 = \frac{\varepsilon}{b} \left( \frac{b+a}{b-\varepsilon} \right)\\
    {} = {} &
    \frac{\varepsilon}{b} \left(\frac{b+a}{b-\varepsilon}\right) \leq 
    \frac{\varepsilon}{b}\left(\frac{2}{b-\varepsilon}\right) & \text{as } 0 \leq a,b \leq 1
\end{align*}
We can now establish a lower bound for $b$ much like that of \autoref{eqn:lower-bnd-prob}. In symbols:
\[
    b \geq 2^{-2^{n^c}}
\]
holds, for some constant $c \in \mathbb{N}$. The last inequality above can now be rewritten as follows.
\begin{align*}
    \left| \frac{x}{y} - \frac{a}{b} \right| \leq {} & \frac{\varepsilon}{b}\left(\frac{2}{b-\varepsilon}\right) \leq \frac{2^{-r'}}{2^{-2^{n^c}}}\left(\frac{2}{2^{-2^{n^c}} - 2^{-r'}}\right)\\
    {}={} & \frac{2^{-r'}}{2^{-2^{n^c}}}\left(\frac{2 \left(2^{r' +2^{n^c}}\right)}{2^{r'} - 2^{2^{n^c}}}\right) = \frac{1}{2^{-2^{n^c}}}\left(\frac{2^{1 +2^{n^c}}}{2^{r'} - 2^{2^{n^c}}}\right)\\
    {} \leq {} & \frac{2^{3\left(2^{n^c}\right)}}{2^{r'} - 2^{2^{n^c}}}
\end{align*}
This means that if $r' \geq r + 4(2^{n^c})$ then our approximation has an absolute error of at most $2^{-r}$, exactly as required. Since this can be achieved using polynomially more bits than that needed to encode $r$, our complexity result holds too.
\end{proof}

\section{Time to termination for SIR models in the Blum-Shub-Smale model}\label{sec:sir-algo}
This section is meant as a more pragmatic approach to computing the expected time to termination. For this, we focus on the concrete SIR binomial chain. Furthermore, we change from our familiar Turing-machine model of computation to a Blum-Shub-Smale (BSS) machine~\cite{DBLP:conf/focs/BlumSS88}. In a BSS machine, we have registers that can hold arbitrary real numbers (including irrational ones) and applying rational operations on them takes a single time step. Finally, we also assume that $\exp(-h\beta)$ and $\exp(-h\gamma)$ are given as part of the input. We will show that, in this context, the expected time to termination can be computed in time polynomial with respect to the $1$-norm of the initial vector.

Let us write $Q$ for the set of states of the Markov chain induced by the given SIR binomial chain $(\vec{v},\vec{T})$, so $Q \subseteq \{0,1,\dots,N\}^3$, where $N = \norm{1}{\vec{v}}$, and the components of a state $\vec{m} \in Q$ correspond to susceptible, infectious, and recovered, in that order. We also write $A\subseteq Q$ for its absorbing states, thus $A = \{\vec{m} \in Q \mid m_2 = 0\}$ by \autoref{lem:succ}. Note that, since SIR binomial chains are closed, one of the components of the vector states is redundant and $|Q| \leq N^2$. Hence, the dimensions of $\vec{P}$ (the matrix representation of the transition probability function of the induced Markov chain) are at most $N^2$.

\begin{rem}[A first algorithm, polynomial in N]
    Recall that the vector $\vec{k^A}$ of expected hitting times can be computed using the expression in \autoref{thm:norris-hit}. This already gives us a naive algorithm that runs in time polynomial in $N$ if $\vec{P}$ is given.
    Indeed, Gaussian elimination requires only a cubic number of arithmetic operations. Therefore, given $\vec{P}$, we can compute $\vec{N} = (\vec{I} - \vec{Q})$ in time $O(N^6)$.
\end{rem}

\subsection{A second algorithm, now without Gauss}
In light of the proof of \autoref{thm:eoe-pspace}, one may wonder whether Gaussian elimination is needed to obtain $\vec{N}$. As we will briefly show, it can in fact be avoided in favor of back substitution because here too $\vec{P}$ and thus $\vec{I} - \vec{P}$ are upper triangular.

\begin{thm}\label{thm:n4-algo}
    Given an SIR BC $(\vec{T},\vec{v})$, we can compute $\vec{k^A}$ in time $O(N^4)$ in the Blum-Shub-Smale model.
\end{thm}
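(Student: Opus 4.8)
The plan is to compute $\vec{k^A}$ by solving the linear system of \autoref{thm:norris-hit} directly, using back substitution instead of Gaussian elimination. This is sound because, as already observed, listing the states of $Q$ in decreasing lexicographic order makes $\vec{P}$ --- and hence the transient block $\vec{I}-\vec{Q}$ of $\vec{I}-\vec{P}$ --- upper triangular, with strictly positive diagonal entries $1-P(\vec{m},\vec{m})$ (a transient state is not absorbing); so $\vec{I}-\vec{Q}$ is nonsingular and the system of \autoref{thm:norris-hit} has a unique solution, which must therefore be $\vec{k^A}$. The first stage of the algorithm is to enumerate $Q$ in decreasing lexicographic order: iterate $m_1$ from $N$ down to $0$ and, within each $m_1$, iterate $m_2$ from $N-m_1$ down to $0$ (setting $m_3=N-m_1-m_2$). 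This lists the at most $N^2$ reachable states in time $O(N^2)$, and each is tagged absorbing ($m_2=0$) or transient in $O(1)$.

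The second stage is to tabulate the transition probabilities. Evaluating \autoref{eq:Xpmf} from scratch for each of the up to $\Theta(N^4)$ successor pairs would be too slow, so I would amortize per source state. Fix $\vec{m}=(m_1,m_2,m_3)$; by the independence of $S_{t+1}$ and $R_{t+1}$ established in \autoref{sec:sir}, $P(\vec{m},\vec{n})=p^{(\vec{m})}_S(n_1)\,p^{(\vec{m})}_R(n_3)$, so it suffices to precompute the two length-$O(N)$ vectors $\big(p^{(\vec{m})}_S(n_1)\big)_{0\le n_1\le m_1}$ and $\big(p^{(\vec{m})}_R(n_3)\big)_{m_3\le n_3\le m_2+m_3}$. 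Each is built in $O(N)$ operations: the binomial coefficients $\binom{m_1}{j}$ and $\binom{m_2}{j}$ are generated incrementally via $\binom{m}{j}=\binom{m}{j-1}(m-j+1)/j$; the base $\exp(-h\beta m_2)=\big(\exp(-h\beta)\big)^{m_2}$ is computed once by repeated squaring in $O(\log N)$ (using that $\exp(-h\beta)$ and $\exp(-h\gamma)$ are part of the input); and the powers of $\exp(-h\beta m_2)$, $1-\exp(-h\beta m_2)$, $\exp(-h\gamma)$, and $1-\exp(-h\gamma)$ are accumulated incrementally. Each entry $P(\vec{m},\vec{n})$ is then a single multiplication of tabulated numbers. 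Over all $O(N^2)$ source states this costs $O(N^3)$ for the precomputations and $O(N^4)$ to produce every entry of $\vec{Q}$.

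The third stage is back substitution on $(\vec{I}-\vec{Q})\vec{x}=\vec{1}$, working from the last-listed state (lex-smallest) toward the first: for each transient $\vec{m}$, set
\[
    x_{\vec{m}} = \frac{1+\sum_{\vec{n}\lexleq\vec{m},\ \vec{n}\neq\vec{m},\ n_2\neq 0} P(\vec{m},\vec{n})\,x_{\vec{n}}}{1-P(\vec{m},\vec{m})},
\]
which is well defined since, by \autoref{lem:local-order}, every transient successor $\vec{n}\neq\vec{m}$ of $\vec{m}$ is lex-smaller and hence already computed. Iterating only over actual successors of $\vec{m}$, this touches each nonzero entry of $\vec{I}-\vec{Q}$ once, at a total cost of $O(N^4)$ in the Blum-Shub-Smale model. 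Finally (cf.\ \autoref{thm:abs-hit}), $\vec{k^A}$ is obtained by setting $k^A_{\vec{m}}=x_{\vec{m}}$ for transient $\vec{m}$ and $k^A_{\vec{m}}=0$ for $\vec{m}\in A$, which is output in $O(N^2)$. Summing the three stages yields the claimed $O(N^4)$ bound.

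The step I expect to be the crux is the second one: keeping all transition probabilities within the $O(N^4)$ budget relies on exploiting the product form of $P(\vec{m},\cdot)$ --- i.e.\ the $S$/$R$ independence --- together with the incremental generation of binomial coefficients and powers; a naive evaluation of each of the $\Theta(N^4)$ PMF values, every one involving a binomial coefficient and powers with exponents up to $N$, would already cost $\Omega(N^5)$. The remaining stages are routine once one exploits the upper-triangular structure already used in \autoref{thm:eoe-pspace}.
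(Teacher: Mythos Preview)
Your proposal is correct and follows the same overall scaffolding as the paper --- an ordering that makes $\vec{I}-\vec{Q}$ upper triangular, followed by back substitution --- but you take a genuinely different route for the crux step, the $O(N^4)$ tabulation of the transition probabilities.

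The paper does \emph{not} use the $S/R$ product form. Instead it proves a recurrence (\autoref{lem:recrel}) relating $P(\vec{m},\vec{n})$ to $P(m_1-1,m_2,m_3+1,\vec{n})$ via an explicit multiplicative factor $\alpha(\vec{m},\vec{n})$, and then fills the table by dynamic programming (\autoref{alg:probs}): boundary entries with $n_1=m_1$ or $n_3=m_3$ are evaluated directly from~\eqref{eq:Xpmf} (at nesting depth~3, so $O(N^4)$ total), and the remaining entries are obtained from previously computed ones in constant time each. Your approach instead leans on the conditional independence already established in \autoref{sec:sir}: for each source $\vec{m}$ you build the two marginal pmf vectors $p^{(\vec{m})}_S(\cdot)$ and $p^{(\vec{m})}_R(\cdot)$ incrementally in $O(N)$, and every $P(\vec{m},\vec{n})$ is then a single product. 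This is arguably simpler and more transparent than the paper's recurrence, and it even yields the marginals in $O(N^3)$ total before any pairwise products are formed; the paper's route, on the other hand, packages the speedup into a single self-contained recurrence that does not rely on having isolated the independence structure beforehand. A minor cosmetic difference: the paper orders the SIR states colexicographically rather than lexicographically, but both orderings make $\vec{I}-\vec{Q}$ upper triangular by \autoref{lem:succ}, so this does not affect the argument.
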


\noindent
Before proving the result, we observe that
\autoref{lem:succ} gives us a characterization of the nonzero terms in $\vec{P}$. In this way, for SIR binomial chains, the expression from \autoref{thm:norris-hit} is reduced to what is shown in the next lemma.
 
\begin{lem}\label{lem:belleq}
    For all states $\vec{m} \in Q$, we have that:
    \[
         k^A_{\vec{m}} = \begin{cases}
            0 & \text{if } m_2 = 0,\\
            1 + \sum_{n_1 = 0}^{m_1} \sum_{n_3 = m_3}^{m_2+m_3}
            P(\vec{m},\vec{n}) k^A_{\vec{n}} & \text{otherwise,}
        \end{cases}
    \]
    where $\vec{n} = (n_1,N-n_1-n_3,n_3)$.
\end{lem}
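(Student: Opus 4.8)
\textbf{Proof proposal for \autoref{lem:belleq}.}
The plan is to obtain the recurrence as a direct specialization of the Norris characterization of expected hitting times (\autoref{thm:norris-hit}) to the Markov chain induced by the SIR binomial chain, with target set $A = \{\vec{m} \in Q : m_2 = 0\}$. Recall that this chain is closed and acyclic, so by \autoref{thm:acyclic-absorbing} it is absorbing and, as already noted, $A$ is exactly its set of absorbing states; in particular $\vec{k^A}$ is finite and, by \autoref{thm:norris-hit}, it satisfies $k^A_{\vec{m}} = 0$ for $\vec{m} \in A$ and $k^A_{\vec{m}} = 1 + \sum_{\vec{n} \notin A} P(\vec{m},\vec{n})\, k^A_{\vec{n}}$ otherwise. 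The first case of the claimed formula is then immediate: $m_2 = 0$ means $\vec{m} \in A$, hence $k^A_{\vec{m}} = 0$.

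For the case $m_2 \neq 0$ (so $\vec{m} \notin A$), I would rewrite the sum $\sum_{\vec{n} \notin A} P(\vec{m},\vec{n})\, k^A_{\vec{n}}$ in two steps. First, since $k^A_{\vec{n}} = 0$ for every $\vec{n} \in A$, the sum is unchanged if $\vec{n}$ is allowed to range over all of $Q$; and since $P(\vec{m},\vec{n}) = 0$ for every non-successor $\vec{n}$ of $\vec{m}$, it is unchanged again if we restrict $\vec{n}$ to the successors of $\vec{m}$. Second, I would invoke \autoref{lem:succ} to identify these successors as exactly the states $\vec{n} = (n_1,n_2,n_3)$ with $0 \le n_1 \le m_1$ and $m_3 \le n_3 \le m_2 + m_3$, and \autoref{lem:conservation} (conservation of population) to conclude $n_2 = N - n_1 - n_3$ for each such $\vec{n}$. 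Replacing the sum over successors by the double sum over $n_1 \in \{0,\dots,m_1\}$ and $n_3 \in \{m_3,\dots,m_2+m_3\}$, and writing $\vec{n} = (n_1, N - n_1 - n_3, n_3)$, then yields precisely the stated expression.

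I do not expect a genuine obstacle here; the lemma is essentially a bookkeeping step. The only two points that merit a word of justification are (i) that absorbing successors may be included in the sum for free, which is handled by $k^A_{\vec{n}} = 0$ for $\vec{n} \in A$ from \autoref{thm:norris-hit}, and (ii) that the reparametrization $\vec{n} = (n_1, N - n_1 - n_3, n_3)$ keeps $\vec{n} \in Q$ over the whole index rectangle, which follows from $n_2 = N - n_1 - n_3 = (m_1 + m_2 + m_3) - n_1 - n_3 \ge 0$ using $n_1 \le m_1$ and $n_3 \le m_2 + m_3$.
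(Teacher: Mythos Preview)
Your proposal is correct and follows exactly the route the paper indicates: the paper does not give a detailed proof of this lemma but simply remarks, just before stating it, that \autoref{lem:succ} characterizes the nonzero entries of $\vec{P}$ so that the Norris recurrence from \autoref{thm:norris-hit} reduces to the displayed formula. Your write-up is a faithful and complete unpacking of that sentence, including the two small justifications (absorbing successors contribute $0$, and the index rectangle stays inside $Q$) that the paper leaves implicit.
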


Our present goal is to order the (vector) states according to a total order as we did in \autoref{sec:ex-hit-abs}. The result above suggests a possible order to achieve this. In words, we will use the colexicographic ordering. Since SIR binomial chains are closed, the strict version of the order is defined as follows.
\[
(m_1,m_2,m_3) <_{\mathrm{colex}} (n_1,n_2,n_3) \iff m_3 < n_3 \text{ or } m_3=n_3 \text{ and } n_1 < m_1.
\]
We can now sort $Q$ based on $<_{\mathrm{colex}}$ so that $\vec{P}$ is in its canonical form (see \autoref{eqn:canon-form}). By \autoref{thm:abs-hit}, we have the following relation.
\begin{equation}
  (\vec{I} - \vec{Q})\vec{k^A}=\vec{1}  
\end{equation}
Hence, due to our choice of order on the states, together with \autoref{lem:succ}, we get that $\vec{I}-\vec{Q}$ is upper triangular.
This enables the computation of $\vec{k^A}$
by means of back substitution, which requires only a quadratic number of arithmetic operations.
\begin{lem}\label{lem:n4-assume}
    Given an SIR BC $(\vec{v},\vec{T})$, we can compute $\vec{k^A}$ in time $O(N^4)$ in the Blum-Shub-Smale model, assuming $\vec{P}$ is given.
\end{lem}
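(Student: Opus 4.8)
The plan is to leverage the upper-triangular structure established just above and solve the system $(\vec{I}-\vec{Q})\vec{k^A}=\vec{1}$ by back substitution. I would first recall that an SIR BC is closed and acyclic, so by \autoref{thm:acyclic-absorbing} its induced Markov chain is absorbing; hence $\vec{k^A}$ is finite, it vanishes on $A$, and by \autoref{thm:abs-hit} (equivalently \autoref{lem:belleq}) its entries on the transient states are the unique solution of $(\vec{I}-\vec{Q})\vec{k^A}=\vec{1}$. Having sorted $Q$ by $<_{\mathrm{colex}}$ --- so that, by \autoref{lem:succ}, a nonzero $P(\vec{m},\vec{n})$ forces $\vec{m}\leq_{\mathrm{colex}}\vec{n}$ and $\vec{I}-\vec{Q}$ is therefore upper triangular --- I would also observe that the diagonal entry of $\vec{I}-\vec{Q}$ at a transient state $\vec{m}$ is $1-P(\vec{m},\vec{m})$, which is strictly positive: a transient state is by definition not absorbing (and, alternatively, a triangular matrix with a zero on the diagonal could not be invertible, as \autoref{thm:abs-hit} requires).

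The algorithm then enumerates the transient states $\vec{m}^{(1)}<_{\mathrm{colex}}\cdots<_{\mathrm{colex}}\vec{m}^{(M)}$ directly in colex order (in $O(N^2)$ time, with $M\leq N^2$) and processes them from $\vec{m}^{(M)}$ down to $\vec{m}^{(1)}$, each time setting
\[
    k^A_{\vec{m}^{(i)}}=\frac{1+\sum_{\vec{n}>_{\mathrm{colex}}\vec{m}^{(i)}}P(\vec{m}^{(i)},\vec{n})\,k^A_{\vec{n}}}{1-P(\vec{m}^{(i)},\vec{m}^{(i)})}.
\]
This is exactly the equation of \autoref{lem:belleq} with the self-loop term moved to the left-hand side; the point is that every $\vec{n}$ appearing with a nonzero coefficient is either absorbing (so $k^A_{\vec{n}}=0$) or $<_{\mathrm{colex}}$-strictly above $\vec{m}^{(i)}$, hence already computed. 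Since closedness makes one coordinate redundant, $|Q|\leq N^2$, so each of the $M\leq N^2$ iterations performs a sum of at most $|Q|\leq N^2$ products and one division: $O(N^4)$ rational operations overall, which in the Blum-Shub-Smale model is $O(N^4)$ time.

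I do not expect a genuine obstacle; the argument is mostly bookkeeping. The only two points that warrant an explicit sentence are (i) that the right-hand side of the recurrence refers solely to already-determined values --- which is precisely where \autoref{lem:succ} (nonzero transitions move up in $<_{\mathrm{colex}}$) together with $\vec{k^A}$ vanishing on $A$ is used --- and (ii) that the divisions are well defined because the diagonal of $\vec{I}-\vec{Q}$ is nowhere zero. The remaining arithmetic-complexity count for back substitution on a triangular system of dimension $O(N^2)$ is entirely standard.
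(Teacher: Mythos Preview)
Your proposal is correct and follows exactly the approach the paper uses: order the states by $<_{\mathrm{colex}}$ so that $\vec{I}-\vec{Q}$ is upper triangular (via \autoref{lem:succ}), then solve $(\vec{I}-\vec{Q})\vec{k^A}=\vec{1}$ by back substitution, which takes a quadratic number of arithmetic operations in the dimension $|Q|\leq N^2$, i.e.\ $O(N^4)$. Your added remarks about the diagonal being nonzero and the recurrence referring only to already-computed values are correct and simply make explicit what the paper leaves implicit.
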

\noindent 
It remains to argue that $\vec{P}$ can be precomputed in $O(N^4)$. \revone{R1.3}{Unfortunately, computing one single entry of $\vec{P}$ using} \autoref{eq:Xpmf} \revone{R1.3}{seems to require time $O(N)$, meaning that a naive enumeration of pairs of states $(m_1,m_3)$ and $(n_1,n_3)$ with $P((m_1,N-m_1-m_3),(n_1,N-n_1-n_3,n_3))$ computed for each of them results in a total of $O(N^5)$.} Indeed, while exponentiation can be realized using a logarithmic number of operations via iterated squaring, we are not aware of an algorithm to compute factorials using a sublinear number of multiplications.\footnote{We do know that, in the Turing-machine model of computation, one can obtain seemingly better bounds~\cite{borwein85}. However, those depend on the complexity of multiplying numbers being a function of the amount of bits used to encode them.}

\subsection{Precomputation of the transition probabilities using dynamic programming}
Recall Equation \eqref{eq:Xpmf} gives us that for all
states $\vec{m} \neq \vec{n}$ such that $n_1 \leq m_1$ and $m_3 \leq n_3 \leq
m_2 + m_3$ the probability $P(\vec{m},\vec{n})$ of transitioning from $\vec{m}$ to $\vec{n}$ is as
follows.
\begin{align}
  & \binom{m_1}{m_1-n_1}\exp(-h\beta m_2)^{n_1}
  (1-\exp(-h\beta m_2))^{m_1-n_1} \label{eqn:final-si}\\
  & \binom{m_2}{n_3-m_3}(1-\exp(-h\gamma))^{n_3-m_3}\exp(-h\gamma)^{m_2-n_3+m_3} \label{eqn:final-ir}
\end{align}
The crux of our algorithm is the following recurrence.
\begin{lem}\label{lem:recrel}
  Let $\vec{m},\vec{n}$ be states such that $n_1 < m_1$ and $m_3 < n_3 \leq m_2 + m_3$. Then, \(
    P(\vec{m},\vec{n}) = \alpha(\vec{m},\vec{n}) P(m_1-1,m_2,m_3+1,\vec{n}),
  \) where:
  \[
      \alpha(\vec{m},\vec{n}) = 
      \frac{m_1(m_2-n_3+m_3+1)(1-\exp(-h\gamma))(1-\exp(-h\beta m_2))}{(n_3-m_3)(m_1-n_1)\exp(-h\gamma)}.
  \]
\end{lem}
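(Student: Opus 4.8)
The plan is to prove the identity by a direct computation of the quotient $P(\vec{m},\vec{n})\big/P(m_1-1,m_2,m_3+1,\vec{n})$, expanding both probabilities with the explicit product formula \eqref{eqn:final-si}--\eqref{eqn:final-ir}. Writing $p = \exp(-h\beta m_2)$ and $q = \exp(-h\gamma)$ for brevity, we have
\[
P(\vec{m},\vec{n}) = \binom{m_1}{m_1-n_1}\, p^{n_1}(1-p)^{m_1-n_1}\,\binom{m_2}{n_3-m_3}(1-q)^{n_3-m_3}\, q^{m_2-n_3+m_3}.
\]
The crucial observation is that going from source state $\vec{m}$ to source state $(m_1-1,m_2,m_3+1)$ leaves the middle component $m_2$ unchanged, so $p$ and $q$ are literally the same in the formula for $P(m_1-1,m_2,m_3+1,\vec{n})$; only the binomial parameters $m_1\mapsto m_1-1$ and $m_3\mapsto m_3+1$, together with the exponents they govern, are affected. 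Before dividing, I would check that the pair $\big((m_1-1,m_2,m_3+1),\vec{n}\big)$ satisfies the conditions of \autoref{lem:succ}: from $n_1 < m_1$ we get $n_1 \le m_1-1$, and from $m_3 < n_3 \le m_2+m_3$ we get $m_3+1 \le n_3 \le m_2+(m_3+1)$, so this entry of $\vec{P}$ is indeed described by \eqref{eqn:final-si}--\eqref{eqn:final-ir} (with $0^0 = 1$ covering the boundary case $\vec{n} = (m_1-1,m_2,m_3+1)$).

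Next I would write the quotient as a product of six factors --- the ratio of the two ``SI'' binomial coefficients, the leftover powers of $(1-p)$ and of $p$, the ratio of the two ``IR'' binomial coefficients, and the leftover powers of $(1-q)$ and of $q$ --- and evaluate each. The powers of $p$ cancel entirely, and by the elementary identities $\binom{n}{j}\big/\binom{n-1}{j} = n/(n-j)$ and $\binom{n}{j}\big/\binom{n}{j-1} = (n-j+1)/j$ the remaining factors become $\frac{m_1}{m_1-n_1}$, $(1-p)$, $1$, $\frac{m_2-n_3+m_3+1}{n_3-m_3}$, $(1-q)$, and $q^{-1}$. Multiplying these together yields precisely $\alpha(\vec{m},\vec{n})$, which proves the claim.

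The division needs a word of justification, but this is the only delicate point. Under the hypotheses we have $m_1-n_1 \ge 1$ and $n_3-m_3 \ge 1$, and $q = \exp(-h\gamma) > 0$ always, so $\alpha(\vec{m},\vec{n})$ is well-defined and finite. In the degenerate subcases $p = 1$ or $q = 1$ (that is, $h\beta m_2 = 0$ or $h\gamma = 0$), a factor $(1-p)^{m_1-n_1}$ or $(1-q)^{n_3-m_3}$ with positive exponent vanishes, so $P(\vec{m},\vec{n}) = 0$; but then $\alpha(\vec{m},\vec{n})$ also vanishes, since it carries the factors $(1-\exp(-h\beta m_2))$ and $(1-\exp(-h\gamma))$, and the identity holds as $0 = 0$. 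Beyond this, the argument is just bookkeeping of binomial-coefficient index shifts, so I do not expect a real obstacle.
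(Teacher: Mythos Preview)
Your proposal is correct and follows essentially the same route as the paper: both arguments expand $P(\vec m,\vec n)$ and $P(m_1-1,m_2,m_3+1,\vec n)$ via \eqref{eqn:final-si}--\eqref{eqn:final-ir}, observe that the middle component $m_2$ (hence the base probabilities $p,q$) is unchanged, and then compare factor by factor---binomial coefficients on one side, exponential terms on the other---to extract $\alpha(\vec m,\vec n)$. Your additional checks (verifying that $((m_1-1,m_2,m_3+1),\vec n)$ satisfies the hypotheses of \autoref{lem:succ}, and handling the degenerate cases $p=1$ or $q=1$) are welcome but not strictly needed here: the hypothesis $m_3 < n_3 \le m_2+m_3$ forces $m_2 \ge 1$, and with $h,\beta,\gamma>0$ both $p$ and $q$ lie strictly in $(0,1)$, so both probabilities are positive and the quotient is well defined without further case analysis.
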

\begin{proof}
  Note that $P(\vec{m},\vec{n}), P(m_1-1,m_2,m_3+1,\vec{n})> 0$ by
  \autoref{lem:succ} and our assumptions.
  We first consider the binomial coefficients of \eqref{eqn:final-si} and \eqref{eqn:final-ir}.
  \begin{align*}
    & \binom{m_1}{m_1-n_1} \binom{m_2}{n_3-m_3}\\
    {}={} & \frac{m_1(m_2-n_3+m_3+1)}{(n_3-m_3)(m_1-n_1)} \binom{m_1-1}{m_1-1-n_1}
    \binom{m_2}{n_3-(m_3+1)}
  \end{align*}
  Importantly, since we have assumed that $m_3 < n_3$ and $n_1 < m_1$, the denominator of the fraction above is not $0$.
  On the other hand, for the terms involving ``probabilities'' --- that is, exponential terms --- we observe that the following is true for \eqref{eqn:final-si}.
  \begin{align*}
  & \exp(-h\beta m_2)^{n_1}
  (1-\exp(-h\beta m_2))^{m_1-n_1}\\ {}={} &(1-\exp(-h\beta m_2)) \exp(-h\beta m_2)^{n_1}
  (1-\exp(-h\beta m_2))^{m_1-1-n_1}
  \end{align*}
  Similarly, for the probabilities in \eqref{eqn:final-ir} we note the following.
  \begin{align*}
    & (1-\exp(-h\gamma))^{n_3-m_3}\exp(-h\gamma)^{m_2-n_3+m_3}\\
    {}={} & \frac{1-\exp(-h\gamma)}{\exp(-h\gamma)} (1-\exp(-h\gamma))^{n_3-(m_3+1)}\exp(-h\gamma)^{m_2-n_3+m_3+1}
  \end{align*}
  Indeed, $\alpha(\vec{m},\vec{n})$ is exactly the product of the coefficients on the right-hand sides of the above equalities. This concludes the proof as the product of the left-hand sides is exactly $P(\vec{m},\vec{n})$ and that of the right-hand sides (after factoring out $\alpha$) is $P(m_1-1,m_2,m_3+1,\vec{n}$).
\end{proof}

Based on \autoref{lem:recrel}, we propose the following algorithm.

\begin{algo}
    An efficient algorithm to compute $P(\vec{m},\vec{n})$ for all $\vec{m}$ and $\vec{n}$
    \label{alg:probs}
    \begin{algorithmic}[1]
        \For{$m_1 = 0, \dots,
        N$}
            \State $X(\vec{m},\vec{m}) = 1$, with $\vec{m} = (m_1,0,N-m_1)$
        \EndFor
        \For{$M = 1, \dots, N$}
          \For{$m_2 = 1, \dots, M$}
              \State $\vec{m} \gets (M - m_2,m_2,N-M)$
              \For{$n_3 = m_3, \dots,m_2+m_3$} \Comment{Fix $n_1=m_1$}
                \State $X(\vec{m},\vec{n}) \gets P(\vec{m},\vec{n})$, with $\vec{n}=(m_1,N-m_1-n_3,n_3)$\label{loc:fixn1}
              \EndFor
              \For{$n_1 = 0, \dots,m_1$} \Comment{Fix $n_3=m_3$}
                \State $X(\vec{m},\vec{n}) \gets P(\vec{m},\vec{n})$, with $\vec{n}=(n_1,N-n_1-m_3,m_3)$\label{loc:fixn3}
              \EndFor
              \For{$n_3 = m_3+1, \dots, m_2+m_3$}\label{loc:reccomp-start}
                  \For{$n_1 = 0, \dots, m_1-1$}\label{loc:reccomp-n1}
                     \State $\vec{n} \gets (n_1,N-n_1-n_3,n_3)$
                     \State $\vec{m'} \gets (m_1-1,m_2,m_3+1)$
                     \State $X(\vec{m},\vec{n}) \gets \alpha(\vec{m},\vec{n})X(\vec{m'},\vec{n})$
                  \EndFor
              \EndFor\label{loc:reccomp-end}
          \EndFor
      \EndFor
    \end{algorithmic}
\end{algo}

We observe that \autoref{alg:probs} clearly terminates because all for-loops are bounded and there are no jump statements in the code. Regarding its time complexity, we note that we only ever nest at most $4$ for-loops and $P(\vec{m},\vec{n})$ is never used (i.e. computed naively in $O(N)$ steps) within a for-loop nesting of depth $4$.
\begin{prop}[Complexity]
    The worst-case time complexity of \autoref{alg:probs} is $O(N^4)$ in the Blum-Shub-Smale model.
\end{prop}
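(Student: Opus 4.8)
The plan is to prove this proposition by a straightforward accounting of the loop structure of \autoref{alg:probs}, once one settles the key scheduling invariant that justifies the recurrence. First I would confirm that every loop variable ranges over only $O(N)$ values: the outer variables $M$, $m_1$, $m_2$ lie in $\{0,\dots,N\}$ by construction, and each inner variable ($n_1$ or $n_3$) is bounded above by $m_1$ or by $m_2+m_3\le N$, hence also ranges over $O(N)$ values. Since there are no jump statements, the total running time equals the sum, over the basic blocks of the algorithm, of (number of times the block is entered)$\,\times\,$(cost of one execution of the block) in the Blum-Shub-Smale model.

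Next I would classify the blocks by the depth of the loop nesting in which they occur. The initialization loop is at depth $1$ with a constant-time body, contributing $O(N)$. The two blocks that invoke the \emph{naive} evaluation of $P(\vec m,\vec n)$ via \eqref{eqn:final-si}--\eqref{eqn:final-ir} (lines~\ref{loc:fixn1} and~\ref{loc:fixn3}) sit at depth $3$ (inside the $M$-, $m_2$-, and one of the $n_3$/$n_1$-loops), so they are entered $O(N^3)$ times. I would then argue that a single naive evaluation costs $O(N)$ operations: the two binomial coefficients are products/quotients of $O(N)$ factors each, and each exponential term such as $\exp(-h\beta m_2)^{n_1}=\exp(-h\beta)^{m_2 n_1}$ can be formed with $O(\lg N)$ operations by iterated squaring from the given values $\exp(-h\beta)$, $\exp(-h\gamma)$. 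Hence these depth-$3$ blocks together contribute $O(N^3)\cdot O(N)=O(N^4)$.

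Finally I would treat the innermost block, lines~\ref{loc:reccomp-start}--\ref{loc:reccomp-end}, which is nested four deep (in $M$, $m_2$, $n_3$, $n_1$) and is therefore entered $O(N^4)$ times; by \autoref{lem:recrel} its body performs only the single multiplication $X(\vec m,\vec n)\gets\alpha(\vec m,\vec n)X(\vec m',\vec n)$ plus the evaluation of $\alpha(\vec m,\vec n)$, which is a fixed rational expression in the coordinates of $\vec m,\vec n$, in $\exp(-h\gamma)$, and in $\exp(-h\beta m_2)$. The only subexpression that is not an $O(1)$ rational operation is $\exp(-h\beta m_2)=\exp(-h\beta)^{m_2}$, but it depends only on $m_2$; maintaining it incrementally (one extra multiplication by $\exp(-h\beta)$ each time $m_2$ increases, at total cost $O(N)$) makes the body run in $O(1)$ time, so this block also contributes $O(N^4)$. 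Summing the three contributions yields the claimed $O(N^4)$ bound.

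The point I expect to require the most care — and which underpins the entire cost accounting — is the invariant that at depth $4$ the value $P(\vec m,\vec n)$ is \emph{never} recomputed naively but is always read off from the already-tabulated entry $X(m_1-1,m_2,m_3+1,\vec n)$. Establishing this amounts to observing that in the outer iteration with parameter $M=m_1+m_2$ the predecessor state $(m_1-1,m_2,m_3+1)$ has parameter $M-1$ and the same $m_2$, so it was fully processed in the previous iteration of the outermost loop; moreover the required entries $X(m_1-1,m_2,m_3+1,\vec n)$ for the relevant $\vec n$ were filled in by the depth-$3$ blocks and the depth-$4$ block during that earlier iteration (with the boundary cases $m_1=0$ making the depth-$4$ loop empty, so no uninitialized entry is ever read). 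With this scheduling invariant in hand, the bookkeeping above is immediate and gives the result.
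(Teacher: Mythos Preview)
Your proposal is correct and follows essentially the same approach as the paper: bound each loop variable by $O(N)$, observe that the naive evaluation of $P(\vec m,\vec n)$ occurs only at nesting depth~$3$ (contributing $O(N^3)\cdot O(N)$), and that the depth-$4$ body is a constant-time table lookup and multiplication. Your treatment is in fact more careful than the paper's terse justification, in particular your handling of the $\exp(-h\beta m_2)$ factor inside $\alpha$ and the scheduling invariant ensuring $X(\vec m',\vec n)$ is already tabulated; the paper defers the latter to its separate correctness proposition and leaves the former implicit.
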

\noindent 
For correctness, we have the following statement.
\begin{prop}[Correctess]
    Let $X$ be as computed by \autoref{alg:probs}. Then, $X(\vec{m}, \vec{n}) = P(\vec{m},\vec{n})$ for all pairs of states $\vec{m}, \vec{n}$ such that $n_1 \leq m_1$ and $m_3 \leq n_3 \leq m_2 + m_3$.
\end{prop}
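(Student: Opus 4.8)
The plan is to prove the proposition by induction on the outer loop index $M$ of \autoref{alg:probs}, identified with the ``level'' $m_1+m_2 = N-m_3$ of the source state $\vec{m}$. As a preliminary step I would dispose of all absorbing source states $\vec{m} = (m_1,0,N-m_1)$: for such an $\vec{m}$ the pairs $(\vec{m},\vec{n})$ in the scope of the proposition have $n_3 = m_3$, and by \eqref{eq:Xpmf} (together with the convention $0^0 = 1$) only $\vec{n} = \vec{m}$ yields $P(\vec{m},\vec{n}) \neq 0$, with $P(\vec{m},\vec{m}) = 1$; the first for-loop of the algorithm sets exactly this, while the remaining such entries keep their initial value $0 = P(\vec{m},\vec{n})$. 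This settles the base case $M = 0$ (the lone state $(0,0,N)$) and, simultaneously, every absorbing state.

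For the inductive step I would assume that after iteration $M-1$ of the main loop we have $X(\vec{p},\vec{q}) = P(\vec{p},\vec{q})$ for every non-absorbing state $\vec{p}$ at level $M-1$ and every valid successor $\vec{q}$ (i.e.\ $q_1 \le p_1$ and $p_3 \le q_3 \le p_2 + p_3$). Fix $m_2 \in \{1,\dots,M\}$, the corresponding non-absorbing state $\vec{m} = (M-m_2,\,m_2,\,N-M)$, and an arbitrary valid successor $\vec{n}$. I would then split into the three cases realized by the three inner loop blocks of the algorithm: \textbf{(i)} $n_1 = m_1$, handled by the for-loop containing line~\ref{loc:fixn1}; \textbf{(ii)} $n_3 = m_3$, handled by the for-loop containing line~\ref{loc:fixn3}; and \textbf{(iii)} $n_1 < m_1$ and $n_3 > m_3$, handled by the nested for-loops of lines~\ref{loc:reccomp-start}--\ref{loc:reccomp-end}. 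A one-line check (every valid pair has $n_1 = m_1$, or else $n_1 < m_1$ and then either $n_3 = m_3$ or $n_3 > m_3$) shows these cases are exhaustive, with harmless overlap. In cases (i) and (ii) the algorithm assigns $X(\vec{m},\vec{n}) \gets P(\vec{m},\vec{n})$ computed directly from \eqref{eqn:final-si}--\eqref{eqn:final-ir}, so there is nothing to prove.

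The crux, which I expect to be the main obstacle, is case (iii): here the algorithm executes $X(\vec{m},\vec{n}) \gets \alpha(\vec{m},\vec{n})\, X(\vec{m'},\vec{n})$ with $\vec{m'} = (m_1-1,\,m_2,\,m_3+1)$, and I must justify that $X(\vec{m'},\vec{n})$ has already been computed correctly. Using $n_1 < m_1$ (so $m_1 \ge 1$) and $m_3 < n_3 \le m_2+m_3$, I would verify that $\vec{m'}$ is a genuine state, that it is non-absorbing (its second component is $m_2 \ge 1$), that it lies at level $(m_1-1)+m_2 = M-1$, and that $\vec{n}$ is a valid successor of $\vec{m'}$, namely $n_1 \le m_1-1$ and $(m_3+1) \le n_3 \le m_2 + (m_3+1)$. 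Hence $X(\vec{m'},\vec{n})$ was assigned during iteration $M-1$ and, by the induction hypothesis, equals $P(\vec{m'},\vec{n})$. Since the hypotheses of \autoref{lem:recrel} are precisely the conditions defining case (iii), it gives $P(\vec{m},\vec{n}) = \alpha(\vec{m},\vec{n})\,P(\vec{m'},\vec{n})$, and combining the two equalities yields $X(\vec{m},\vec{n}) = P(\vec{m},\vec{n})$. The delicate part is purely the bookkeeping: lining up the loop bounds of \autoref{alg:probs} with the validity conditions on states and with the hypotheses of \autoref{lem:recrel}, and confirming that no table entry is read before it is written (which is exactly the level-$(M-1)$ argument above). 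Once this index juggling is in place, the induction closes and the proposition follows.
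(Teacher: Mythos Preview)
Your proposal is correct and follows essentially the same approach as the paper's proof: split into the boundary cases $n_1=m_1$ or $n_3=m_3$ (handled by direct assignment), and the recursive case $n_1<m_1,\ n_3>m_3$ (handled via \autoref{lem:recrel} once one checks $X(\vec{m'},\vec{n})$ has already been set). Your version is actually more thorough than the paper's, since you make the induction on the level $M$ explicit, verify that $\vec{m'}$ lies at level $M-1$ with $\vec{n}$ a valid successor, and separately dispose of the absorbing source states that the paper's proof glosses over.
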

\begin{proof}
    First, one needs to take care that the previous transition probabilities must be defined before they are being used. Note that this is only relevant in lines \ref{loc:reccomp-start}--\ref{loc:reccomp-end}. It thus follows from the order in which the states $\vec{m}$ are traversed that $X(\vec{m'},\vec{n})$ has already been computed in a previous iteration.
    
    Now, it is easy to see that $X(\vec{m},\vec{n}) = P(\vec{m},\vec{n})$ if either $m_1 = n_1$ or $m_3 = n_3$ since lines \ref{loc:fixn1} and \ref{loc:fixn3} literally assign the right-hand side to the left-hand side of the equality. If neither equality holds, then $X(\vec{m},\vec{n})$ is computed in the loop \ref{loc:reccomp-start}--\ref{loc:reccomp-end} and $X(\vec{m},\vec{n}) = P(\vec{m},\vec{n})$ by \autoref{lem:recrel}. Indeed, the conditions that $n_1 < m_1$ and $m_3 < n_3$ are guaranteed by values over which $n_3$ and $n_1$ range in the loops from lines \ref{loc:reccomp-start} and \ref{loc:reccomp-n1}.
\end{proof}

\autoref{thm:n4-algo} follows from the results above, together with \autoref{lem:n4-assume}.

\section{Time to termination using probabilistic model checkers}\label{sec:encoding}
In this section, we take a step back from SIR binomial chains and move to the general class of acyclic BCs. It seems difficult to generalize our algorithms for SIR binomial chains, especially since we exploited the property of them being closed. Instead, we propose to encode the computation of the time to termination as an instance of a \emph{probabilistic model checking} problem. In particular, we aim at a small encoding (i.e. without explicitly constructing the induced Markov chain) that does not depend on the exact initial vector state. This is desirable to avoid reencoding the original BC every time the initial population changes. We target the PRISM input
language~\cite{DBLP:conf/cav/KwiatkowskaNP11} for our encoding since it is supported by state-of-the-art tools such as PRISM itself, Storm~\cite{DBLP:journals/sttt/HenselJKQV22}, and Modest~\cite{DBLP:conf/tacas/HartmannsH14}.

Let us fix an acyclic BC $(\vec{v},\vec{T})$.
As in the previous section, we will avoid the problem of irrational probabilities by asking that they be part of the input. More precisely, we will assume that for all linear functions $T_{ij} = \sum_{\ell=1}^k a_\ell x_\ell + a_0$ from $\vec{T}$ we have precomputed the values $p_{ij\ell} = e^{-a_\ell}$. These are then stored using some finite precision and taken to be part of the input. Using the new notation, we can rewrite \autoref{eqn:prob-formula} as follows.
\begin{equation}\label{eqn:prob-form2}
    \sum_{\vec{M} \in \wit{\vec{u}}{\vec{w}}} 
    \prod_{(i,j) \in \supp{\vec{T}}} \binom{u_i}{M_{ij}} \left(1 - p_{ij0} \prod_{\ell=1}^k p_{ij\ell}^{u_\ell}\right)^{M_{ij}} \left(p_{ij0}\prod_{\ell=1}^k p_{ij\ell}^{u_\ell}\right)^{u_i - M_{ij}}
\end{equation}

In the next section, instead of focusing on the exact syntax of the PRISM language, we introduce a new model we call \emph{stochastic counter machines}. Encoding such machines into a PRISM program is straightforward for someone familiar with the language. Additionally, this intermediate reduction will allow us to highlight the difficulties of attempting a direct encoding of binomial chains into PRISM.

\subsection{Stochastic counter machines}
A stochastic counter machine (or SCM) in dimension $k$ is a tuple $\mathcal{M} = (Q, q_0,\vec{c_0}, T)$ where $Q$ is a finite set of (control) states, $q_0 \in Q$ is the initial state, $\vec{c_0} \in \mathbb{N}^k$ is the initial value of the \emph{counters} (to which we sometimes refer using Greek letters), and $T$ is a finite set of tuples $(q,g,u,q',p)$ where: $q,q' \in Q$, the \emph{guard} $g$ is a matrix-vector pair $\vec{A} \in \mathbb{Q}^{n \times d},\vec{b} \in \mathbb{Q}^n$, the \revtwo{R2.7}{\emph{update}} $u$ is also a matrix-vector pair $\vec{U} \in \mathbb{Q}^{d \times d}, \vec{r} \in \mathbb{Q}^d$, and $p \in \mathbb{Q}_{\geq 0}$.

A \emph{configuration} of the SCM is a pair $(q,\vec{c}) \in Q \times \mathbb{N}^k$ --- we often write $q(\vec{c})$ instead of $(q,\vec{c})$.
The \emph{transition} $(q,g,u,q',p)$, with $g = (\vec{A},\vec{c})$ and $u = (\vec{U},\vec{r})$, is \emph{enabled} from a configuration $q(\vec{c})$ if and 
only if $\vec{A}\vec{c} \leq \vec{b}$. Further, the \emph{$t$-successor} of $q(\vec{c})$ is the configuration $q'(\vec{U}\vec{c} + \vec{r})$ with probability $p$. 

We need some constraints on SCMs to ensure configuration-successor pairs are unique and that a proper distribution is defined. For all configurations $q(\vec{c})$, we ask that:
\begin{itemize}
    \item For all transitions $t_1, t_2$, if both $t_1$ and $t_2$ are enabled from $q(\vec{c})$ and the $t_1$- and $t_2$-successors of $q(\vec{c})$ are the same then $t_1 = t_2$.
    \item Let $t_1, t_2, \dots$, where $t_i = (q,g_i,u_i,q'_i,p_i)$, be an enumeration of the transitions enabled from $q(\vec{c})$. Then, we have that $\sum_{i\geq 1} p_i = 1$.
\end{itemize}
\noindent 
The semantics of the SCM $\mathcal{M}$ is given, like that of a BC, via an induced Markov chain $\mathcal{C}_{\mathcal{M}} = (S,s_0,P)$ where $S = Q \times \mathbb{N}^k$ and $s_0 = q_0(\vec{c_0})$. For the transition probability function $P$, we have that $P(q(\vec{c}),q'(\vec{c'})) = p > 0$ if and only if $q'(\vec{c'})$ is the $t$-successor of $q(\vec{c})$, for some transition $t$, with probability $p$. The constraints above ensure that $P$ is well defined.

\begin{figure}[t]
    \centering
    \begin{tikzpicture}[initial text={},inner sep=0.5pt,minimum size=0mm,every
      node/.style={font=\small},every state/.style={minimum size=0mm}]
        \node[state, initial](c0) {$q_0$} ;
        \node[state,right=3cm of c0](c1){$q_1$};
        \node[state,right=2cm of c1](c2){$q_2$};

        \node[above= of c1, circle, minimum size=0.2cm, fill] (dot){};
        \path[-] (c1)  edge  node [right] {$[\chi_3 > 0]$} (dot);
        \draw [->] (dot) to[out=180, in=160, looseness=4, edge node={node [left] 
            {$1-p: \begin{pmatrix*}[l] \chi_1 &  \leftarrow \chi_1 \\ \chi_2 &  \leftarrow \chi_2 \\ \chi_3 & \leftarrow \chi_3-1 \end{pmatrix*}$}}] (c1);

        \draw [->] (dot) to[out=0, in=20, looseness=4, edge node={node [right] {$p: \begin{pmatrix*}[l] \chi_1 &  \leftarrow \chi_1\\ \chi_2 &  \leftarrow \chi_2 + 1\\ \chi_3 &  \leftarrow \chi_3 - 1\end{pmatrix*}$}}] (c1);
        
        \path[->,auto]
        (c0) edge node[below,]{$1:\begin{pmatrix*}[l] \chi_1 \leftarrow \chi_1 \\ \chi_2 \leftarrow 0 \\ \chi_3 \leftarrow \chi_1 \end{pmatrix*}$} (c1)
        (c1) edge node[below]{$[\chi_3 = 0]$} (c2)
        (c2) edge[loop right] (c2)
        ;
    \end{tikzpicture}
    \caption{A stochastic counter machine that simulates a binomial distribution.
    Control states are depicted as circles and transitions $(q,g,u,q',p)$ as arrows from $q$ to $q'$. Guards are shown in square brackets along a transition, together with probability-update pairs $p : ( u )$ with $p$ being the probability and $u$ the update. If the guard is trivial (i.e. it is satisfied by all configurations) then we omit it; if the update is the identity, we also omit it; and if the probability is $1$, again we omit it. Finally, we group transitions from a common state and with the same guard using a solidly filled circle with the guard shown before the circle and the individual probabilities and updates shown after.}
    \label{fig:bingad1}
\end{figure}
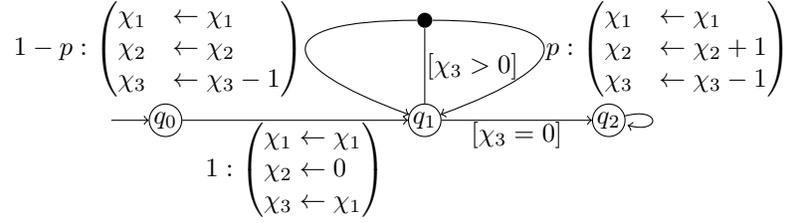

\subsection{Binomial and Bernoulli stochastic counter machines}

The SCM depicted in 
\autoref{fig:bingad1}, encodes a
binomial distribution with success probability $p$. Intuitively, from the initial configuration $q_0(c_1,c_2,c_3)$ the machine simulates $c_1$ Bernoulli trials, each with success probability $p$. The third counter $\chi_3$ is used to count from $c_1$ (copied when transitioning from $q_0$ to $q_1$) to $0$ by decrementing on both transitions from $q_1$ to itself. The right transition from $q_1$ to itself simulates a success and thus increments the value of the second counter, where we store the total number of successful Bernoulli trials. The left transition leaves the value of the second counter untouched since it corresponds to a failed Bernoulli trial.

From the discussion above, it should be clear that the SCM implements a binomial distribution in the sense that the distribution over successor configurations with control state $q_2$ is given exactly by the probability mass function of that distribution.
\begin{lem} \label{lem:bingadget1}
    Let $\mathcal{M}$ be the SCM from \autoref{fig:bingad1} and consider its induced Markov chain $\mathcal{C}_{\mathcal{M}}$. For all $t,t', c_1,c_2,c_3 \in \mathbb{N}$ with $t' - t \geq c_1$ it holds that: 
    \[
    \Pr(X_{t'} = q_2(c_1,m,0) \mid X_t = q_0(c_1,c_2,c_3)) = B(m;c_1,p).
    \]
\end{lem}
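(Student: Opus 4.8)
The plan is to follow the operational meaning of the gadget: the run of the SCM from $q_0$ is forced except for the middle phase at $q_1$, which performs $c_1$ independent Bernoulli($p$) trials while recording the number of successes in the counter $\chi_2$. Since the induced Markov chain $\mathcal{C}_{\mathcal{M}}$ is time-homogeneous, it suffices to analyse runs started at time $0$, and by the Markov property conditioning on the single event $\{X_0 = q_0(c_1,c_2,c_3)\}$ is exactly what is needed. First I would record the forced moves: the unique transition out of $q_0$ has probability $1$ and update $\chi_2 \leftarrow 0$, $\chi_3 \leftarrow \chi_1$, so $X_1 = q_1(c_1,0,c_1)$ almost surely, irrespective of $c_2,c_3$; from any configuration $q_1(c_1,m,0)$ the only enabled transition is the forced one with guard $[\chi_3 = 0]$, leading to $q_2(c_1,m,0)$; and $q_2$ carries a self-loop, hence is absorbing.

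The core is the following claim, proved by induction on $j$: for all $0 \le j \le c_1$, the support of $X_{1+j}$ (given $X_0 = q_0(c_1,c_2,c_3)$) is contained in $\{q_1(c_1,m,c_1-j) : m \in \mathbb{N}\}$, and moreover
\[
    \Pr(X_{1+j} = q_1(c_1,m,c_1-j) \mid X_0 = q_0(c_1,c_2,c_3)) = B(m;j,p)
\]
for every $m \in \mathbb{N}$. The base case $j = 0$ is the forced first move together with $B(0;0,p) = 1$ and $B(m;0,p) = 0$ for $m > 0$, using the convention $0^0 = 1$. For the inductive step with $1 \le j \le c_1$, note that the source configuration has $\chi_3 = c_1 - j + 1 > 0$, so the only enabled transitions from $q_1(c_1,m',c_1-j+1)$ are the failure transition (probability $1-p$, decrementing $\chi_3$, leaving $\chi_2$) and the success transition (probability $p$, also incrementing $\chi_2$). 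Hence, by the induction hypothesis, the only predecessors of $q_1(c_1,m,c_1-j)$ carrying positive probability at time $j$ are $q_1(c_1,m,c_1-j+1)$ and $q_1(c_1,m-1,c_1-j+1)$ (the latter absent, i.e. contributing $0$, when $m = 0$), and they are distinct. Conditioning on $X_j$ then gives
\[
    \Pr(X_{1+j} = q_1(c_1,m,c_1-j) \mid X_0 = q_0(c_1,c_2,c_3)) = (1-p)\,B(m;j-1,p) + p\,B(m-1;j-1,p),
\]
which equals $B(m;j,p)$ after pulling out the common factor $p^m(1-p)^{j-m}$ and invoking Pascal's rule $\binom{j}{m} = \binom{j-1}{m} + \binom{j-1}{m-1}$.

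Instantiating the claim at $j = c_1$ yields $\Pr(X_{1+c_1} = q_1(c_1,m,0) \mid X_0 = q_0(c_1,c_2,c_3)) = B(m;c_1,p)$; one further forced step (guard $[\chi_3 = 0]$) gives $X_{c_1+2} = q_2(c_1,m,0)$ with the same probability, and absorption at $q_2$ propagates this to every later time. Re-indexing the start from $0$ to $t$, this establishes the identity for all $t'$ with $t' - t \ge c_1 + 2$, so I would actually prove it in that form — the two extra steps are the forced transitions into and out of the trial phase, and for $c_1 \le t' - t < c_1 + 2$ the chain has simply not reached $q_2$ yet. I do not expect a genuine obstacle: the only delicate point is the bookkeeping in the inductive step — pinning down exactly the two predecessor configurations and their probabilities, and handling the degenerate case $c_1 = 0$ where the trial phase is empty and $B(0;0,p) = 1$ — and the only external fact needed beyond the SCM semantics (well-definedness of the induced transition probabilities) is Pascal's rule.
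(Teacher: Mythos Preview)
The paper does not give a formal proof of this lemma: it is stated immediately after the informal description of the gadget, with the sentence ``From the discussion above, it should be clear that the SCM implements a binomial distribution\ldots'' serving as the entire justification. Your inductive argument makes that intuition precise and follows exactly the operational reading the paper sketches --- a forced initialisation, $c_1$ independent Bernoulli trials accumulated in $\chi_2$, then a forced move to the absorbing state --- so there is no methodological divergence, only a difference in level of detail.

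You also correctly spot a point the paper glosses over: with the transitions as drawn, reaching $q_2$ from $q_0$ genuinely takes $c_1+2$ steps (one into $q_1$, $c_1$ trial steps, one out to $q_2$), so the stated hypothesis $t'-t\ge c_1$ is off by two. Your decision to prove the identity for $t'-t\ge c_1+2$ and flag the discrepancy is the right call; the intended content of the lemma is unaffected, and everywhere the paper uses the gadget only the eventual distribution at $q_2$ matters.
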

\noindent 
The reader may have already started realizing our intention: We want to use binomial SCMs to simulate the innnermost products from \autoref{eqn:prob-form2} which correspond to binomial distributions. Unfortunately, unless $p_{ij\ell}$ is $1$ for all $\ell > 0$, the SCM we just presented is not good enough because it simulates Bernoulli trials with a constant success probability $p$. To resolve this, we present a second SCM that
allows us to model a Bernoulli trial where the success probability is a function of the counter values of the initial configuration. More precisely, the function is $1 - p_0\prod_{\ell = 1}^{k} p_\ell^{c_k}$, where $p_\ell \in \mathbb{Q}_{\geq 0}$ and $p_\ell \leq 1$ for all $0 \leq \ell \leq k$. 

The SCM depicted in \autoref{fig:gad3}, encodes a \emph{parametric} Bernoulli distribution. From the initial configuration $q_0(c_1,c_2,c_3,c_4)$, and much like in the binomial SCM, the last counter is used as a temporary counter while the second-to-last counter holds the result of the trial: $c_3$ if it fails and $c_3 + 1$ if it succeeds. For instance, on the transition from $q_0$ to $q_1$, the last counter copies the value $c_1$. Then, while its value is nonzero, there is a transition back to $q_1$ with probability $p_1$ that decrements the counter. If its value has reached zero, the machine transitions to $q_2$ while setting the value of the last counter to $c_2$. There, it again loops on the same state and decrements the counter while its value is nonzero, this time with probability $p_2$. On both loops, the alternative transitions reach state $q_x$. Clearly, the probability of eventually reaching the state $q_y$ is $p_0\prod_{\ell = 1}^{2} p_\ell^{c_\ell}$. Hence, the probability of eventually reaching $q_x$ is $1 - p_0\prod_{\ell = 1}^{2} p_\ell^{c_\ell}$. Now, for $\ell > 2$, the SCM can be extended further below $q_2$ by adding copies of the same structure \emph{mutatis mutandis}.

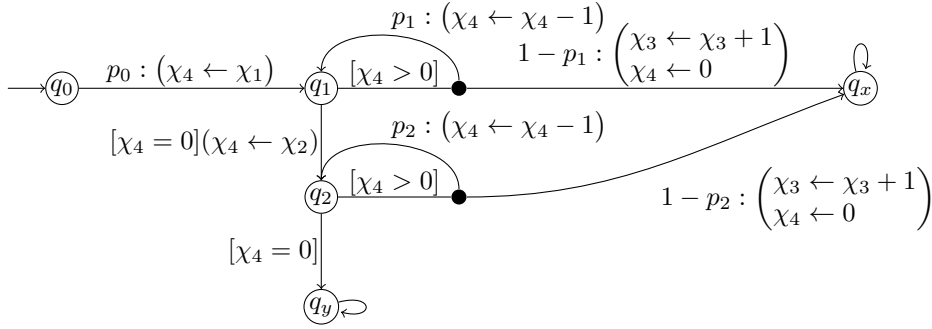
\begin{figure}
    \centering
    \begin{tikzpicture}[initial text={},inner sep=0.5pt,minimum size=0mm,every
      node/.style={font=\small},every state/.style={minimum size=0mm}]
        \node[state, initial](q0){$q_0$};
        \node[state, right=3cm of q0] (q1){$q_1$};
        \node[state,below=1cm of q1](q2){$q_2$};
        \node[right=1.5cm of q1, circle, minimum size=0.2cm, fill] (dot1){};
        \node[right=1.5cm of q2, circle, minimum size=0.2cm, fill](dot2){};
        \node[state, right=5cm of dot1](qx){$q_x$};
        \node[state,below=1cm of q2](qy){$q_y$};

      \draw [->,auto]
      (q0)  edge node {$p_0:\begin{pmatrix*}[l] \chi_4 \leftarrow \chi_1 \end{pmatrix*}$} (q1)
      (qx) edge[loop above] (qx)
      (qy) edge[loop right] (qy);
      
      \draw [-,auto] (q1) edge node {$[\chi_4 >0]$} (dot1)
      (q2) edge node {$[\chi_4 > 0]$} (dot2);
      
      \draw [->,auto]
      (dot1) edge[out=90, in=90] node[swap] {$p_1:\begin{pmatrix*}[l] \chi_4 \leftarrow \chi_4 -1  \end{pmatrix*}$} (q1)
      (dot1) edge node{$1-p_1:\begin{pmatrix*}[l] \chi_3 \leftarrow \chi_3 + 1 \\ \chi_4 \leftarrow 0  \end{pmatrix*}$} (qx)
      (q1)  edge node[swap] {$[\chi_4 = 0] (\chi_4 \leftarrow \chi_2) $} (q2)
      (dot2) edge[out=0,in=200] node[swap]{$1-p_2:\begin{pmatrix*}[l] \chi_3 \leftarrow \chi_3 + 1 \\ \chi_4 \leftarrow 0  \end{pmatrix*}$} (qx)
      (q2) edge node[swap]{$[\chi_4 = 0]$} (qy)
      (dot2) edge[out=90, in=90] node[swap] {$p_2:\begin{pmatrix*}[l] \chi_4 \leftarrow \chi_4 -1  \end{pmatrix*}$} (q2)
      ;
    
    \end{tikzpicture}
    \caption{A stochastic counter machine that simulates a parametric Bernoulli distribution with $\ell = 2$. Most notation is as in \autoref{fig:bingad1}, but we also omit trivial updates.}
    \label{fig:gad3}
\end{figure}

\begin{lem}\label{lem:gadget2}
    Let $\mathcal{M}$ be an SCM as in \autoref{fig:gad3} for $\ell \in \mathbb{N}$ and consider its induced Markov chain $\mathcal{C}_{\mathcal{M}}$. For all $\vec{c} \in \mathbb{N}^{k+2}$ and all $t,t' \in \mathbb{N}$ with $t' - t \geq \norm{1}{\vec{c}}$ it hold that:
    \begin{align*}
    & 1 - \Pr(X_{t'} = q_x(c_1,\dots,c_k,c_{k+1}+1,0) \mid X_t = q_0(\vec{c}))\\
    {}={} &  \Pr(X_{t'} = q_y(c_1,\dots,c_k,c_{k+1},0) \mid X_t = q_0(\vec{c}))
    = 
    p_0\prod_{\ell = 1}^{k} p_\ell^{c_\ell}.
    \end{align*}
\end{lem}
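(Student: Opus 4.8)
The plan is an invariant argument on runs of the SCM, using that $q_x$ and $q_y$ are the only absorbing control states reachable from $q_0$.

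First I would record two structural facts about any run starting in $q_0(\vec{c})$. \textbf{(i)} The counters $\chi_1,\dots,\chi_k$ are never touched; the scratch counter (written $\chi_4$ in \autoref{fig:gad3}, i.e.\ $\chi_{k+2}$ in general) always holds, when the control state is the $i$-th loop state, the number of $p_i$-loops still to be taken before that stage is left; and the result counter ($\chi_3$, i.e.\ $\chi_{k+1}$) is incremented exactly once, precisely when the run takes a ``failure'' edge into $q_x$, and is never changed otherwise. This follows by a case analysis over the handful of transition shapes in the machine: copy into $\chi_{k+2}$, decrement $\chi_{k+2}$, move to the next stage when $\chi_{k+2}=0$, or fail into $q_x$. \textbf{(ii)} Since the $i$-th stage starts with $\chi_{k+2}=c_i$ and strictly decreases it, that stage lasts at most $c_i+1$ steps, so every run reaches $q_x$ or $q_y$ within $\norm{1}{\vec{c}}$ steps (up to a small additive constant depending on $k$, handled as in \autoref{lem:bingadget1}).

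By (i), the only reachable configuration with control state $q_x$ is $q_x(c_1,\dots,c_k,c_{k+1}+1,0)$ and the only one with control state $q_y$ is $q_y(c_1,\dots,c_k,c_{k+1},0)$; by (ii), at time $t'$ the chain is in one of these two configurations with probability $1$. This yields the first claimed equality, $1-\Pr(X_{t'}=q_x(\dots)\mid X_t=q_0(\vec{c})) = \Pr(X_{t'}=q_y(\dots)\mid X_t=q_0(\vec{c}))$.

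For the second equality I would compute $\Pr(X_{t'}=q_y(\dots)\mid X_t=q_0(\vec{c}))$ directly. The configuration $q_y(\dots)$ is reached only along the unique, fully forced run that takes the edge $q_0\to q_1$ (probability $p_0$), then performs $c_\ell$ consecutive $p_\ell$-loops at the $\ell$-th loop state for $\ell=1,\dots,k$ (contributing $p_\ell^{c_\ell}$, with $p_\ell^0=1$ so an empty stage is just skipped), interleaved with the probability-$1$ stage transitions, and finally the probability-$1$ edge into $q_y$; any other choice at a loop state goes to $q_x$. Multiplying along this run gives $p_0\prod_{\ell=1}^k p_\ell^{c_\ell}$, and since $q_y$ is absorbing and $t'-t$ is large enough this is exactly $\Pr(X_{t'}=q_y(\dots)\mid X_t=q_0(\vec{c}))$. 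The one delicate point --- and the main obstacle --- is fact (i) for the \emph{general} machine: \autoref{fig:gad3} only draws $\ell=2$, so one must check that the ``mutatis mutandis'' copies of the stage gadget really touch only $\chi_{k+2}$ (and, on failure, $\chi_{k+1}$) and correctly chain the guard $\chi_{k+2}=0$ into the next copy. The rest is bookkeeping.
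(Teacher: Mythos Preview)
The paper does not give a formal proof of this lemma; it only sketches the mechanism in the paragraph preceding the statement (the scratch counter counts down through each stage, failure edges go to $q_x$, and the probability of surviving all stages is $p_0\prod_\ell p_\ell^{c_\ell}$). Your proposal is a faithful formalisation of exactly that sketch --- the invariant on the untouched counters, the unique absorbing configurations at $q_x$ and $q_y$, and the single success run to $q_y$ --- so the approach matches and is correct; the only loose end you already flag (the additive $k{+}1$ in the step count versus the bound $t'-t\geq\norm{1}{\vec{c}}$) is an imprecision inherited from the lemma statement itself rather than a flaw in your argument.
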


\begin{figure}[t]
    \centering
    \begin{tikzpicture}[scale = 0.8, initial text={},inner sep=0.5pt,minimum size=0mm,every
      node/.style={font=\small},every state/.style={minimum size=0mm}]
        \node[state, initial above](q0){$q_0$};
        \node[state, right=3cm of q0] (q1){$q_1$};
        \node[state, right=3cm of q1](q3){$q_3$};
        \node[state, above= of q1] (q2) {$q_2$};
        \node[above= of q2, circle, minimum size=0.2cm, fill] (dot1){};

        \node[state, below= 1.5cm of q1] (q4) {$q_4$};
        \node[state, below= 1.5cm of q0] (q5) {$q_5$};
        \node[below= of q4, circle, minimum size=0.2cm, fill] (dot2){};

        \draw [->] (q0)  edge node [below] {$1:\begin{pmatrix} \alpha_0 \leftarrow s  \end{pmatrix}$} (q1);
        \draw [->] (q1)  edge node [right] {$[\alpha_0 > 0] : \begin{pmatrix*}[l]
            \alpha_1 \leftarrow i \\ \alpha_0 \leftarrow \alpha_0 -1
        \end{pmatrix*}$} (q2);
        \draw [->] (q1)  edge node [below] {$[\alpha_0 = 0]$} (q3);

        \draw[-] (q2) edge node [right] {$[\alpha_1>0]$} (dot1);

        \draw [->] (q2) to[out=200, in=160, looseness=2, edge node={node [left] {$[\alpha_1 = 0]$}}] (q1);

        \draw [->] (dot1) to[out=180, in=170, looseness=4, edge node={node [left] {$1-p_0:\begin{pmatrix}
            \chi_0 \leftarrow \chi_0 + 1
        \end{pmatrix}$}}] (q1);

        \draw [->] (dot1) to[out=0, in=0, looseness=4, edge node={node [right] {$p_0:\begin{pmatrix}
            \alpha_1 \leftarrow \alpha_1 - 1
        \end{pmatrix}$}}] (q2);

        \draw [->] (q3) to[out=270, in=0, looseness=1, edge node={node [pos=0.3,right=0.2cm] {$1:\begin{pmatrix}
            \alpha_1 \leftarrow i
        \end{pmatrix}$}}] (q4);

        \draw [->] (q4)  edge node [below] {$[\alpha_1 = 0] $} (q5);
        
        \draw [->] (q5)  edge node [left] {$ 1: \begin{pmatrix*}[l]
            s \leftarrow s - \chi_0 \\ 
            i \leftarrow i + \chi_0 -\chi_1 \\ 
            r \leftarrow r + \chi_1 \\
            \chi_0 \leftarrow 0 \\
            \chi_1 \leftarrow 0 \\
        \end{pmatrix*}$} (q0);

        \draw[-] (q4) edge node [right] {$[\alpha_1 > 0]$} (dot2);

        \draw [->] (dot2) to[out=0, in=-20, looseness=4, edge node={node [right] {$p:\begin{pmatrix}
            \chi_1 \leftarrow \chi_1 + 1\\
            \alpha_1 \leftarrow \alpha_1 - 1
        \end{pmatrix}$}}] (q4);

        \draw [->] (dot2) to[out=180, in=200, looseness=4, edge node={node [left] {$1-p_1:\begin{pmatrix}
            \alpha_1 \leftarrow \alpha_1 - 1
        \end{pmatrix}$}}] (q4);
    \end{tikzpicture}
    \caption{The SCM we construct for the SIR binomial chain from \autoref{sec:sir}}
    \label{fig:scmsir}
\end{figure}
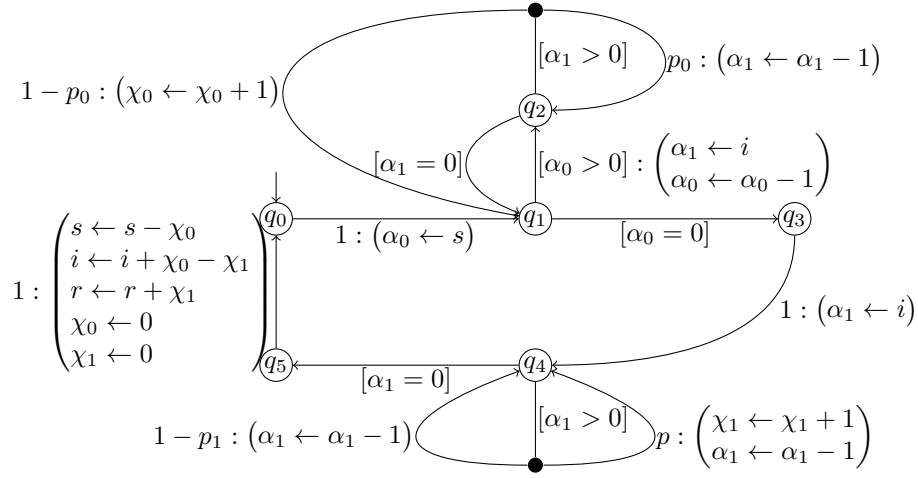

\subsection{The final construction and going beyond reachability}
We can suitably compose parametric Bernoulli SCMs with and binomial SCMs to simulate a BC. After a series of binomial SCMs (to simulate the innermost product from \autoref{eqn:prob-form2}), we have at most $k^2$ counters whose values correspond to the number of individuals that are to be transferred for this simulated transition of the BC.
As an example, we give the full construction for SIR binomial chains in
\autoref{fig:scmsir}. The SCM
uses $7$ counters, to which we will refer as $s, i, r, \alpha_0, \alpha_1, \chi_0, \chi_1$.
Let $p_0 = 1 - \exp(-h\beta)$ and $p_1 = 1 - \exp(-h\gamma)$ and observe that $\beta_0 \sim B(s, p_0^i)$ and $\beta_1 \sim B(i,
p_1)$. as expected. Hence, going once from $q_0$ to itself simulates exactly one
timestep in the binomial chain.

So far, the formal claims establishing correctness of our construction speak only about transition and eventual reachability probabilities.
We can also encode the expected
time to termination by adding \emph{rewards} to the transitions of the SCM. To be precise, we can formalize rewards as pairs $(g,r)$ of guards and rational rewards. Then, all transitions taken from a configuration $q(\vec{c})$ that satisfies the guard yield a reward of $r$.\footnote{To the expert reader: Yes, this is exactly a reward structure in the PRISM language.} For our purpose, the reward function should be $+1$
for each simulated step of the binomial chain before reaching an absorbing state, e.g. on the transition from $q_0$ to $q_1$, and $0$
for all other transitions. To make sure only transitions before reaching the absorbing states get a reward, we can add one counter that keeps track of the sum of all counter values corresponding to components of the BC which influence the value of some $T_{ij}$. Then, this sum will be $0$ if and only if the current (simulated) vector state is absorbing.  

The next claim summarizes the properties of the constructed SCM. There, we subindex probability functions to highlight the probability space of the (induced) Markov chain to which they belong. Furthermore, we refer to the total number of states and counters of an SCM as its \emph{size}.

\begin{thm}
    For all acyclic BC $\mathcal{B}$ we can construct an SCM $\mathcal{M}$ with rewards, of size $k^{O(1)}$, and an injection $\mu$ from states of $\mathcal{B}$ to configurations of $\mathcal{M}$ so that the following hold.
    \begin{itemize}
    \item There exists $m \in \mathbb{N}$ such that for all $t,t',n \in \mathbb{N}$, with $t < t'$ and $m < n$, and all states $\vec{u},\vec{w}$ of $\mathcal{B}$ we have
    \(
    \Pr_{\mathcal{B}}(X_{t'} = \vec{w} \mid X_t = \vec{w}) = \Pr_{\mathcal{M}}(X_{t'+ n} = \mu(\vec{w}) \mid X_t = \mu(\vec{u})).
    \)
    \item Moreover, the expected time to reach $\vec{w}$ from $\vec{u}$ in $\mathcal{C}_{\mathcal{B}}$ is the expected sum of rewards before reaching $\mu(\vec{w})$ from $\mu(\vec{u})$ in $\mathcal{C}_{\mathcal{M}}$.
    \end{itemize}
\end{thm}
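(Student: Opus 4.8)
The plan is to build $\mathcal{M}$ modularly out of the gadgets already established and then lift the correctness lemmas for the gadgets to the whole construction. First I would fix the topological order on $[k]$ from \autoref{eqn:lex-idx} so that every $\vec{M} \in \wit{\vec{u}}{\vec{w}}$ is strictly upper triangular, and enumerate the pairs $(i,j) \in \supp{\vec{T}}$ in some order $e_1, \dots, e_m$ consistent with that ordering. For each $e_\ell = (i,j)$ I would instantiate a block that, from a configuration encoding the current (partially updated) state vector together with $m$ scratch counters holding the transfer amounts computed so far, simulates a binomial SCM (\autoref{fig:bingad1}) whose success probability is produced by chaining parametric-Bernoulli SCMs (\autoref{fig:gad3}) with parameters $p_{ij0}, p_{ij1}, \dots, p_{ijk}$, so that the number of successes is exactly $M_{ij} \sim B(u_i, 1 - p_{ij0}\prod_{\ell} p_{ij\ell}^{u_\ell})$; this is precisely the inner factor of \autoref{eqn:prob-form2}. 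After the last block $e_m$, a deterministic ``commit'' state (the analogue of $q_5 \to q_0$ in \autoref{fig:scmsir}) applies $\vec{w} = \max(\vec0, \vec{u} + (\vec1^\intercal \vec{M})^\intercal - \vec{M}\vec1)$ via a fixed affine update, resets the scratch counters to $\vec0$, and returns to the ``start-of-step'' control state. The injection $\mu$ maps a state $\vec{u}$ of $\mathcal{B}$ to the configuration with control state $q_0$, the $\vec{u}$-components set accordingly, and all scratch counters zero.

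Next I would pin down the constant $m$ (the $n$ in the statement) and the state count. Each simulated step of $\mathcal{B}$ costs a number of SCM micro-steps bounded by a fixed polynomial in $k$ times $\norm1{\vec{u}}$; but since we only need an \emph{existence} statement of the form $t' - t \ge$ (something) in the gadget lemmas, and the theorem asks only for some $n$ with $m < n$, I would take $m$ to be the worst-case micro-step count of one simulated step over \emph{all} reachable states — finite by \autoref{lem:acyclic-bc-finite} — and pad each block with self-loops of probability $1$ so that every simulated step takes exactly the same number $n$ of micro-steps regardless of the current counter values. (Alternatively one keeps variable length and states the reachability claim with $\ge$; I would choose whichever matches the exact wording of the gadget lemmas.) The number of control states and counters is $O(k)$ per block and $O(m) = O(k^2)$ blocks, giving size $k^{O(1)}$ as required, and one checks the two SCM well-formedness constraints (unique successors, probabilities summing to $1$) hold block by block since each gadget already satisfies them.

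For the first bullet, I would argue by a product-over-blocks computation: conditioning on the scratch counters between consecutive blocks, the probability that block $e_\ell$ outputs value $M_{ij}$ is $\binom{u_i}{M_{ij}}(1 - p_{ij0}\prod p_{ij\ell}^{u_\ell})^{M_{ij}}(p_{ij0}\prod p_{ij\ell}^{u_\ell})^{u_i - M_{ij}}$ by \autoref{lem:bingadget1} together with \autoref{lem:gadget2} (the latter supplying the success probability of each Bernoulli trial used inside the binomial gadget); multiplying over all $(i,j) \in \supp{\vec{T}}$ and summing over all $\vec{M} \in \wit{\vec{u}}{\vec{w}}$ — which are exactly the matrices whose committed outcome is $\vec{w}$ — reproduces \autoref{eqn:prob-form2}, i.e. $P(\vec{u},\vec{w})$. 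Since each simulated step of $\mathcal{B}$ corresponds to exactly $n$ micro-steps of $\mathcal{M}$ and the micro-steps strictly between two commit states never reach another commit configuration, the $t'$-step conditional probabilities match after time-scaling, which is the claimed identity. For the second bullet I would attach reward $+1$ exactly once per simulated step, on the transition out of the start-of-step state, guarded by ``the sum of the components of $\vec{u}$ feeding some $T_{ij}$ is nonzero'' (maintained in an extra counter, updated in the commit block) so that no reward accrues once an absorbing state is reached; then the expected reward to reach $\mu(\vec{w})$ telescopes to the expected number of simulated steps, which by the first bullet is the expected hitting time of $\vec{w}$ in $\mathcal{C}_{\mathcal{B}}$, appealing to \autoref{thm:norris-hit} on both sides.

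The main obstacle I anticipate is the bookkeeping for the intermediate-state update. In $\mathcal{B}$ all transfers $M_{ij}$ are drawn \emph{from $u_i$} — the state value \emph{at the beginning of the step}, not the running value — whereas a naive chaining of gadgets would let an earlier block's outflow decrement $u_i$ before a later block reads it. The fix is to keep the original $\vec{u}$ read-only throughout the step (each binomial block reads $u_i$ but writes its count into a fresh scratch counter) and to apply \emph{all} of $(\vec1^\intercal\vec{M})^\intercal - \vec{M}\vec1$ atomically, with the pointwise $\max$ against $\vec0$, only in the commit block; verifying that this single affine update is expressible as an SCM update $(\vec{U},\vec{r})$ and that it matches \autoref{eqn:right-sum} — including the max, which is linear here only because, by triangularity, $w_j = u_j + \sum_{i<j}M_{ij} - \sum_{\ell>j}M_{j\ell}$ and the relevant clipping cases can be separated by guards — is the one genuinely fiddly point. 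Everything else reduces to invoking \autoref{lem:bingadget1} and \autoref{lem:gadget2} and multiplying.
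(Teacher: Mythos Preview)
Your proposal is correct and mirrors the paper's own argument, which is not a standalone proof but the construction sketched in the paragraphs immediately preceding the theorem: chain a parametric-Bernoulli gadget (\autoref{lem:gadget2}) inside a binomial gadget (\autoref{lem:bingadget1}) once per entry of $\supp{\vec{T}}$, commit all accumulated transfers in a single deterministic update back to the start state as in \autoref{fig:scmsir}, and attach a $+1$ reward on the start-of-step transition guarded by a counter holding the sum of the compartments that feed some $T_{ij}$. Your fixed-length padding to reconcile the time indices and your guard-based case split to realize the pointwise $\max$ of \autoref{eqn:right-sum} as an SCM update are details the paper leaves implicit rather than departures from its approach.
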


\section{Experimental Results}\label{sec:experiments}

We implemented our translation from binomial chains to stochastic counter machines in a prototype tool we called \emph{Inform}.
More concretely, Inform translates files in an explicit representation of the transfer matrix of a binomial chain into into a PRISM-language file.
%
Using it and a probabilistic model checker such as \emph{Storm}~\cite{DBLP:journals/sttt/HenselJKQV22}, we can verify properties of the encoded binomial chain. The second model is a simplified COVID-19 model based on
\cite{ABRAMS2021100449}. In this section we do precisely this. Below, we introduce the binomial chain we studied and the properties we checked.

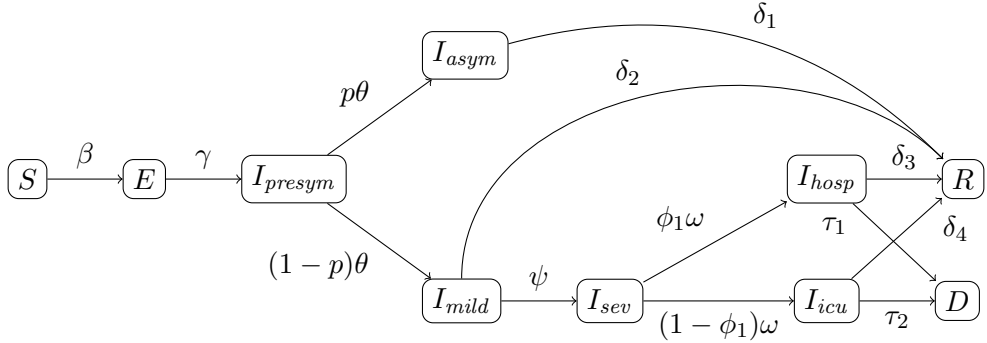
\begin{figure}[t]
    \centering
    \begin{tikzpicture}
    \node[rectangle,rounded corners,draw](s){$S$};
    \node[rectangle,rounded corners,draw,right= of s](e){$E$};
    \node[rectangle,rounded corners,draw,right= of e](ip){$I_{\mathit{presym}}$};
    \node[rectangle,rounded corners,draw,above right= of ip](ia){$I_{\mathit{asym}}$};
    \node[rectangle,rounded corners,draw,below right= of ip](im){$I_{\mathit{mild}}$};
    \node[rectangle,rounded corners,draw,right= of im](is){$I_{\mathit{sev}}$};
    \node[rectangle,rounded corners,draw,right=2cm of is](ii){$I_{\mathit{icu}}$};
    \node[rectangle,rounded corners,draw,above= of ii](ih){$I_{\mathit{hosp}}$};
    \node[rectangle,rounded corners,draw,right= of ih](r){$R$};
    \node[rectangle,rounded corners,draw,right= of ii](d){$D$};
    \path[->,auto]
      (s) edge node{$\beta$} (e)
      (e) edge node{$\gamma$} (ip)
      (ip) edge node{$p\theta$} (ia)
      (ip) edge node[swap]{$(1-p)\theta$} (im)
      (ia) edge[out=15] node{$\delta_1$} (r)
      (im) edge[out=90] node{$\delta_2$} (r)
      (im) edge node{$\psi$} (is)
      (is) edge node{$\phi_1\omega$} (ih)
      (is) edge node[swap]{$(1-\phi_1)\omega$} (ii)
      (ih) edge node{$\delta_3$} (r)
      (ih) edge node[pos=0.05,swap]{$\tau_1$} (d)
      (ii) edge node[swap]{$\tau_2$} (d)
      (ii) edge node[swap,pos=0.9]{$\delta_4$} (r)
      ;
  \end{tikzpicture}
    \caption{Overview of the flow of individuals in the (single-age group version of the) binomial chain for the early stages of COVID-19 in Belgium from \cite{ABRAMS2021100449}}
    \label{fig:sircovid}.
\end{figure}

\subsection{The Belgian COVID-19 binomial chain}
\autoref{fig:sircovid} shows a graph representation of the transfer matrix of the binomial chain we study. It comes from a compartmental model based on the classical SEIR model where we add to SIR an intermediate compartment for individuals that are \emph{exposed} but not yet infectious. Furthermore, infectious individuals are partitioned into those who are: $I_{\mathit{presym}}$, presymptomatic infectious individuals; $I_{\mathit{asym}}$, asymptomatic infectious; $I_{\mathit{mild}}$, mild cases; $I_{\mathit{sev}}$, severe cases; $I_{\mathit{hosp}}$, hospitalized; and $I_{\mathit{icu}}$, hospitalized and in the intensive care unit. Now, all compartments are further split into age groups. The individuals who can infect susceptible ones are only those who are presymptomatic, asymptomatic, mild cases, or severe cases --- intuitively, those who are hospitalized are isolated and are therefore assumed to not have contact with susceptible individuals. The contact rates between infectious and susceptible individuals, per age pair, are given by \emph{contact matrices} fitted by the authors of~\cite{ABRAMS2021100449} from national statistics. 

All of the above results in the linear function $T_{SE}$, i.e. the entry of the transfer matrix $\vec{T}$ corresponding to the indices for the compartments $S$ and $S$, being nontrivial. In fact, all other entries of $\vec{T}$ are constants while $T_{SE}$ depends on around $~400$ components of the current vector state. (We refer the interested reader to \cite{ABRAMS2021100449} for the ODEs and the binomial-chain formulation of the time-discretized compartmental model and to its supplementary materials for the values of all constants.) For our experiments, we simplified the COVID-19 binomial chain to summarize all age groups back into a single compartment.

Note that the binomial chain, is acyclic, but not closed. This seems more like a quirk of the modelling formalism rather than a desirable feature in the context of the time bounds studied by the authors of~\cite{ABRAMS2021100449} which span part of the COVID-19 epidemic and some months after it. This, paired with the observation that binomial chains seem to allow the possibility of spontaneous infection of all susceptible individuals in a single transition, inspired our choice of properties to check for in the model.
\begin{description}
    \item[PopInc] How likely is it that the population does not remain constant?
    \item[OS] How likely is it that a given portion of the population moves from one compartment to another, in one shot?
    \item[EoE] What is the expected time before the end of the epidemic?
\end{description}
For the OS property, we chose to focus on when this happens along the very first transition of the binomial chain. We give our exact encoding of these properties as given to the model checkers in \autoref{tab:props}.

\subsection{Experimental setup}
Since the COVID-19 model is very large, besides Storm, we also used a statistical model checker called
\emph{Modest} \cite{DBLP:conf/tacas/HartmannsH14}. Statistical model checkers
usually scale better at the price of only providing 
confidence intervals
instead of the exact probability value with which a given property holds. Since our main objective was to probe how model checkers scale on larger instances of the binomial chain, we used increasingly larger populations and an initial vector state having all compartments empty except for $S$, $I_{\mathit{asym}}$, $I_{\mathit{mild}}$, and $I_{\mathit{sev}}$.

All experiments were run on a cluster where each node had
an Intel(R) Xeon(R) Platinum 8168 CPU @ 2.70GHz with 64GiB of memory and no
GPU.
%
%
We ran Storm from the \emph{movesrwth/storm:stable} docker container and, based on a number of local experiments, chose the sparse engine for all numbers reported henceforth. Storm's version was 1.7.1. For Modest, we used version v3.1.237-g2f62162c7. We denote time-outs with TO: we stopped the computation after 1 hour (for small populations); memory-outs with MO: the program was terminated because it ran out of memory. All code and scripts to re-run the experiments can be found here: \url{https://github.com/UA-FOTS/inform}.

\begin{table}[t]
    \centering
    \caption{The properties of interest stated in PRISM-style probabilistic computation tree logic and in terms of the Belgian COVID-19 model. The PopInc property states: What is the probability (P=?) that we avoid an error state (that is only reached when the population is not preserved) until the infectious compartments are depleted.
When these are empty, it is impossible for more people to become infected. Note that for the OS property, we on the start of the simulation: What is the probability that we avoid infecting susceptible individuals until we infect all of them?}
    \begin{tabular}{ll} 
        \toprule
        Property & Probabilistic Temporal Logic Formula  \\
        \midrule
        PopInc & P=? [$(q \neq \mathit{error})$ U $(E + I_{presym} + \dots + I_{icu} = 0)$] \\
        OS & P=? [$(S \geq S\_\mathit{init})$ U $(E = S\_\mathit{init} + E\_\mathit{init})$] \\
        EoE & R\{time\_step\}=? [F $(E + I_{presym} + \dots + I_{icu} = 0)$]\\
        \bottomrule
        \end{tabular}
    
    \label{tab:props}
\end{table}

\subsection{Results}
For both the population increase (PopInc) as well as the end of epidemic (EoE) property, we see Storm running out of memory already for populations of 10. The one-shot (OS) property performs really well in Storm. This may be because our formalization of
the OS property only checks the first time individuals change compartment. 
Storm seems to be taking this into account when building the state-space. 
Modest performs significantly better than Storm for the PopInc property. Moreover, the size of the population has a much smaller impact on the run-time compared to Storm. However, for the EoE property, we see that Modest struggles. In order to not time out for small instances, the \emph{width of the confidence interval} of Modest was set to $0.9$ and, even in this case, runtimes were significantly higher than for the PopInc property.

\begin{table}[t]
    \centering
    \caption{Performance of Storm compared to Modest on the COVID-19 model. For Storm, we used default parameters and the sparse engine. For the EoE property, Modest was run with max run length 0 and width 0.9; for the others, with max run length 0 and width 0.01.}
    \begin{tabular}{lccc} 
        \toprule
        Property & Population & Storm & Modest \\
         & $(S,I_{\mathit{asym}},I_{\mathit{mild}},I_{\mathit{sev}})$ & runtime & runtime \\
        \midrule
  PopInc & (2, 1, 1, 1) & 2.916 &  3.4s \\
         & (3, 1, 1, 1) & 21.201 &  6.9s \\
         & (4, 1, 1, 1) & 297.23s & 5.9s  \\
         & (5, 1, 1, 1) & 2352.066s & 5.7s  \\
         & (6, 1, 1, 1) & 14756.769s & 4.5s  \\
         & (7, 1, 1, 1) & MO & 4.3s  \\
         \midrule
         EoE & (2, 1, 1, 1) & 3.400s & 1023.5s  \\
         & (3, 1, 1, 1) & 27.314s &  997.8s \\
         & (4, 1, 1, 1) & 570.862s &  1069.3s \\
         & (5, 1, 1, 1) & 5325.083s &  1048.3s \\
         & (6, 1, 1, 1) & 42751.95 &  1039.0s \\
         & (7, 1, 1, 1) & MO &  1080.8s \\
         \midrule
      OS & (2, 1, 1, 1) & 0.123s & 0.2s  \\
         & (3, 1, 1, 1) & 0.125s & 0.1s  \\
         & (4, 1, 1, 1) & 0.141s & 0.2s  \\
         & (5, 1, 1, 1) & 0.186s & 0.1s  \\
         & (6, 1, 1, 1) & 0.195s & 0.1s  \\
         & (7, 1, 1, 1) & 0.223s & 0.1s  \\
         & (8, 1, 1, 1) & 0.244s & 0.2s  \\
         & (9, 1, 1, 1) & 0.274s & 0.1s  \\
         & (10, 1, 1, 1) & 0.306s & 0.2s  \\
         \bottomrule
    \end{tabular}
    \label{tab:modeststormperformance}
\end{table}

Finally, and based on the results summarized above, we used Modest to analyze the COVID-19 model with realistic populations. The results are shown in  \autoref{fig:popincultimate}. We observe that the run-time of Modest grows almost linearly with respect to $S$.

\begin{figure}[ht]
    \centering
    \begin{tikzpicture}[scale=0.8]
    \begin{axis}[%
        xlabel = {$(S, 1, 1, 1)$ population},
        ylabel = {seconds},
        scatter/classes={%
        a={draw=black}}]
        \addplot[scatter,only marks,%
            scatter src=explicit symbolic]%
        table[meta=label] {
            x y label
            10  6.1 a  \\
            20  9.3 a \\
            40  11.7 a \\
            80  12.1 a \\
            160  25.2 a \\
            320  27.7 a \\
            640  68.1 a \\
            1280  74.7 a \\
            2560  257.1 a \\
            5120  504.0 a \\
            10240 689.8 a \\
            20480 1869.1 a \\
            40960 2774.2 a \\
            81920 6420.6 a \\
            163840 8882.6 a \\
            327680 29609.8 a \\
        };
    \end{axis}
\end{tikzpicture}
    \caption{Modest runtimes for large populations and the PopInc property. Here, the width was the default $0.01$ and the populations were $(S,I_{\mathit{asym}},I_{\mathit{mild}},I_{\mathit{sev}}) = (S, 1, 1, 1)$ with $S$ is increasing.}
    \label{fig:popincultimate}
\end{figure}
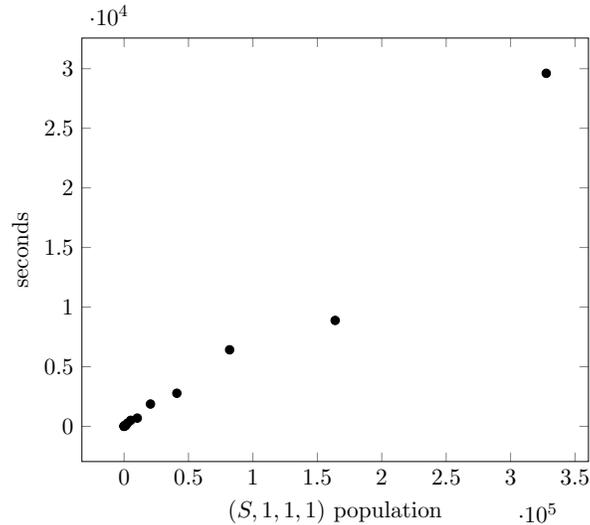

It is worth mentioning that we did compare the translation-based solution against our custom-optimized algorithm from \autoref{sec:sir-algo} for SIR binomial chains and the EoE property. Even for small populations, our algorithm from \autoref{sec:sir-algo} outperformed both model checkers.

\section{Conclusion and future work}
We started the study of binomial chains through the lens of formal methods. In this work, we provided two main theoretical results. First, we established that acyclic binomial chains almost surely terminate. Second, we proved the the problem of approximating the time to termination is in \textbf{PSPACE} and gave a direct algorithm for the exact problem (ignoring the complexity of arithmetic and the annoyances of irrational numbers). 
\revtwo{R2.1}{Unfortunately, extending the algorithm from} \autoref{sec:sir-algo} to general binomial chains seems hard. The key result enabling \autoref{alg:probs} to compute transition probabilities in $O(N^4)$ was \autoref{lem:recrel} and it made use of the explicit formula for transition probabilites we had manually derived for SIR binomial chains. While similar results could be obtained by hand for fixed transfer matrices, it is unclear to us what a (meta)result for general binomial chains would look like.

We also provided a more pragmatic approach in the form of an encoding into the PRISM language.
For this last approach we also presented some experiments. Based on the empirical results, we can conclude that state-of-the-art probabilistic model checkers are not (yet) powerful enough to deal with epidemiological models like the one proposed in \cite{ABRAMS2021100449}. Indeed, while Modest is capable of handling simple probabilistic queries for realistic populations, it still seems to struggle with quantitative queries such as the expected end of epidemic property. In this direction, more research is needed to find, for instance, good abstractions.

Finally, there are natural decision problems for binomial chains that we did not consider in this work. For instance, the following, based on values studied in~\cite{ABRAMS2021100449}, seem relevant.
\begin{description}
    \item[Finite-horizon peak] asks to compute the maximal expected population in a given compartment within a finite horizon (given in binary).
    \item[Finite-horizon accumulated population] asks to compute the total expected population having been in a given compartment within a finite horizon (given in binary).
\end{description}

\bibliographystyle{alphaurl}
\bibliography{refs}

\end{document}